\title{Marketplace Operators Can Induce Competitive Pricing}  
\author{Tiffany Ding\thanks{University of California, Berkeley, tiffany\_ding@berkeley.edu. Work done while interning at Amazon.}, Dominique Perrault-Joncas\thanks{Amazon, joncas@amazon.com}, Orit Ronen\thanks{Amazon, oritron@amazon.com}, \\ Michael I. Jordan\thanks{University of California, Berkeley and Inria, Paris, jordan@cs.berkeley.edu}, Dirk Bergemann\thanks{Yale University and Amazon, dirk.bergemann@yale.edu}, Dean Foster\thanks{Amazon, dean@foster.net}, Omer Gottesman\thanks{Amazon, omergott@amazon.com}} 
\date{ }
\newcommand{\I}[1]{\mathbbm{1}{\left\{#1\right\}}} 
\def\E{{\mathbb E}}
\def\P{{\mathbb P}}
\def\R{{\mathbb R}}
\def\cW{{\mathcal W}}
\newcommand{\pd}[1]{\frac{\partial}{\partial #1}} 
\newtheorem{theorem}{Theorem}
\newtheorem{lemma}{Lemma}
\newtheorem{proposition}{Proposition}
\theoremstyle{definition}
\newtheorem{remark}{Remark}
\newtheorem{definition}{Definition}
\newtheorem{assumption}{Assumption}
\Crefname{equation}{Eq.}{Eqs.}
\Crefname{figure}{Figure}{Figs.}
\Crefname{tabular}{Table}{Tabs.}
\Crefname{table}{Table}{Tabs.}
\Crefname{section}{Section}{Secs.}
\newcommand{\IS}{\mathtt{I}}
\newcommand{\MO}{\mathtt{M}}
\newcommand{\qthresh}{q^{\dagger}}
\newcommand{\qddag}{q^{\ddagger}}
\newcommand{\qthresheps}{\underline{q}^{\dagger}}
\newcommand{\qddageps}{\underline{q}^{\ddagger}}
\newcommand{\psole}{p^{\star}_{\IS}}
\newcommand{\psoleA}{p^{\star}_{\MO}}
\newcommand{\CS}{\mathrm{CS}}
\newcommand{\PS}{\mathrm{PS}}
\newcommand{\BR}{\mathrm{BR}}
\newcommand{\CSsole}{\CS^{\star}}
\newcommand{\uISsole}{u_{\IS}^{\star}}
\newcommand{\pBR}{p_{\IS}^{\mathrm{BR}}}
\newcommand{\qBR}{q_{\IS}^{\mathrm{BR}}}
\newtcolorbox[use counter={BoxCounter}]{game}[2][] {
title= #2  Extensive form of game,
#1
}
\begin{document}
\maketitle


\begin{abstract}
As e-commerce marketplaces continue to grow in popularity, it has become increasingly important to understand the role and impact of marketplace operators on competition and social welfare. We model a marketplace operator as an entity that not only facilitates third-party sales but can also choose to directly participate in the market as a competing seller. 
We formalize this market structure as a price-quantity Stackelberg duopoly in which the leader is a marketplace operator and the follower is an independent seller who shares a fraction of their revenue with the marketplace operator for the privilege of selling on the platform. The objective of the marketplace operator is to maximize a weighted sum of profit and a term capturing positive customer experience, whereas the independent seller seeks solely to maximize their own profit. We derive the subgame-perfect Nash equilibrium and find that it is often optimal for the marketplace operator to induce competition by offering the product at a low price to incentivize the independent seller to match their price.
\\
\\
Keywords: Stackelberg Duopoly, Price-Quantity Duopoly, Marketplace Health, Competitive Pricing
\\
\\
\textbf{JEL Codes: D21, D43, L81}
\end{abstract}

\newpage

\section{Introduction}

The rapid growth of e-commerce marketplaces has given rise to a new market structure in which the marketplace operator plays a dual role as both operator and seller: Companies such as
Amazon, Walmart, and Costco facilitate transactions for third-party sellers while selling their own products under the Amazon Basics, Great Value, and Kirkland labels.
Despite the growing prevalence of this operator-as-seller configuration, formal economic analysis is lacking.
In this work, we develop and analyze a two-player game that captures the key aspects of this type of marketplace.

How did this hybrid structure come to be, and why might we want more of it?
When marketplaces rely exclusively on third-party sellers, consumers can face inflated prices and erratic product availability. This occurs because, in the absence of direct competition, a third-party seller behaves as a monopolist and sets prices to maximize its own profit. Such behavior harms not only customers but also the marketplace operator, as the health of a marketplace depends on repeat purchases and customer trust, both of which deteriorate when prices are perceived as unfair. In this sense, the incentives of consumers and the operator are naturally aligned. When third-party sellers engage in price gouging, both sides lose. The marketplace operator therefore has a vested interest in finding mechanisms to induce sellers to set competitive prices, meaning prices below their monopolist price. One option is to impose direct regulation or price caps, but these are blunt and heavy-handed instruments. In this paper, we instead explore how marketplace operators can achieve competitive pricing through natural market forces—by strategically participating as sellers themselves. We hope that this work highlights this mechanism as a practical and underappreciated tool for fostering healthier, more competitive marketplaces.

We formulate a price-quantity Stackelberg duopoly where the first mover is the marketplace operator, who seeks to optimize not only their profit but also customer experience. The second mover is an independent seller, who sells on the marketplace and pays a commission on revenue to the marketplace operator. 
In addition to online retailers, this model can be more generally applied to physical marketplaces such as Best Buy, Target, supermarkets, and pharmacies, all of which offer private-label products that compete with name-brand versions on their shelves. Using this model, we answer the following questions: 
\begin{enumerate}
    \item \textbf{If the marketplace operator wants to induce competitive pricing, how can they do so?} We show that in order for the marketplace operator to induce meaningful competition, they must \emph{maintain enough inventory} for the independent seller to perceive them as a credible competitor. Furthermore, they should set a price lower than the independent seller's monopolist price but higher than the independent seller's break-even price (\cref{lemma:best_response_intensity}).
    \item \textbf{When is it in the marketplace operator's best interest to induce competition?} It depends on the game parameters. \emph{When the marketplace operator's cost is relatively high} compared to the independent seller's cost, the marketplace operator will prefer to induce competition. As the marketplace operator's cost decreases, it may become optimal for the operator to sell some inventory themselves and let the independent seller capture the residual demand,
    and eventually it will become optimal for the marketplace operator to sell the product themselves. Theorem \ref{theorem:equilibrium_unconstrained_game} describes the equilibrium strategies of both players and allows us to determine the exact boundaries between these regions.
    \item \textbf{How are consumers affected by the marketplace operator selling in their own market?} We show that the entry of the marketplace operator in their own market transfers surplus from the independent seller to consumers (\cref{lemma:surplus_transfer}). The \emph{consumer surplus never decreases} and, in fact, often increases compared to when the independent seller is the sole seller (\cref{prop:CS_increases}).
\end{enumerate}

\subsection{Related literature}

We draw upon many classical ideas in economics to analyze the modern problem of competition in online retail marketplaces. 

\paragraph{Non-simultaneous duopolies.}
Early work on duopolies consider \emph{simultaneous} actions with a \emph{single decision variable} (either price or quantity). In a Cournot duopoly \citep{cournot1897researches}, sellers $A$ and $B$ simultaneously choose their quantities $q_A$ and $q_B$ and each face price $p = f(q_A + q_B)$. In a Bertrand duopoly \citep{bertrand1883review}, sellers simultaneously choose their prices $p_A$ and $p_B$, then the lower-priced seller $i \in \{A,B\}$ gets demand $D(p_i)$ and the other gets zero demand. When $p_A=p_B$, each seller gets demand $D(p_i)/2$.

The study of non-simultaneous duopolies began with
\citet{stackelberg1934marktform}, who analyzes a \emph{sequential} version of the Cournot duopoly where one seller is a ``leader,'' who sets their quantity first and the other seller is a ``follower,'' who sets their quantity only after observing the leader's quantity. ``Stackelberg'' is now generally used to refer to games with a leader and a follower. 
Whereas Stackelberg games are two-stage games, \citet{maskin1988theory} consider an infinite-stage game where two sellers take turns choosing prices ad infinitum. They use this model to formalize the phenomenon first presented in \citet{edgeworth1925pure}
(now called the ``Edgeworth cycle'') that there is no pure-strategy equilibrium in the infinite sequential game. 
\citet{pal1998endogenous} and subsequent work by
\citet{nakamura2009endogenous} and others study the question of when non-simultaneous duopolies arise naturally between a public and a private firm
by modeling timing as an endogenous decision.
In our paper, we analyze a sequential game with two decision variables --- price \emph{and} quantity.

\paragraph{Capacity constraints and rationing rules.} When price and quantity are both decision variables, we are required to think carefully about how demand is split between sellers. In other words, to analyze games in which players are only able to sell up to an exogenous or endogenously chosen inventory level (i.e., capacity-constrained games), we must specify rationing rules. 
\citet{shubik1959strategy} and \citet{levitan1978duopoly} propose and solve the Bertrand-Edgeworth game, which is the capacity-constrained price competition game.
\citet{kreps1983quantity} consider a two-stage duopoly where both firms choose their inventory in the first stage then set their price in the second stage. They make the intensity rationing assumption (see their Equation 2) and show that under certain conditions, the equilibrium of the two-stage game is the same as in the Cournot duopoly. Later work by \citet{davidson1986long} shows that this result relies crucially on the intensity rationing assumption and does not hold under other rationing assumptions, in particular, proportional rationing. At the proportional rationing equilibrium, firms choose larger capacities and receive less profit than under intensity rationing. With this in mind we consider both intensity and proportional rationing rules in this paper. 
See Section 14.2 of \citet{rasmusen1989games} or Section 5.3.1 of \citet{tirole1988theory} for an overview of rationing rules. Closely related to our work is
\citet{boyer1987being}, who study sequential duopolies with both price and quantity as decision variables, and 
\citet{boyer1989endogenous} and \citet{yousefimanesh2023strategic}, who analyze a similar setting but with imperfectly substitutable goods.

\paragraph{Mixed oligopolies.} In a mixed oligopoly,  welfare-maximizing public firm(s) compete with profit-maximizing private firm(s) \citep{cremer1989public, defraja1990game}.
Our model represents an intermediate between a standard private-private duopoly and a mixed public-private duopoly. The independent seller is a standard private firm, but the marketplace operator can be viewed as somewhere between private and public: ``private'' because they are utility-maximizing rather than welfare-maximizing but close to ``public'' because their utility function internalizes consumer benefit (via the $k$ term, as will be described in Section \ref{sec:preliminaries}).

\paragraph{Marketplaces.}
Whereas the aforementioned related works are all from classical economics literature, the applied motivation of our work most closely relates to recent work that models game dynamics in online marketplaces. 
There is a line of work that models compatibility between buyers and sellers via networks \citep{birge2021optimal, banerjee2017segmenting, damicowong2024disrupting, eden2023platform}, which also relates more broadly to two-sided markets and platform economics \citep{caillaud2003chicken, armstrong2006competition, rochet2003platform}.
Unlike these works, 
we model the marketplace operator itself as a seller, similar to \citet{pabari2025shared}. They analyze the equilibrium of the simultaneous price-competition game and consider the problem of setting the referral fee.
Also related is \citet{shopova2023private}, who studies the effect of a marketplace operator introducing a lower quality private-label product in a price duopoly setting.

Despite the widespread dominance of e-commerce marketplaces and platform-driven competition, the strategic interaction between marketplace operators and independent sellers has not been well studied. Our work directly addresses this need by developing an asymmetric price-quantity duopoly model that captures the core dynamics of platform competition. Unlike prior work, we explicitly analyze multiple rationing rules to reflect a range of real-world scenarios, providing a framework that is not only theoretically rigorous but also practically relevant for designing pricing and inventory strategies in marketplaces.

\section{Model}

We study a game with two players: a marketplace operator ($\MO$) and an independent seller ($\IS$).
Each player must choose the price they will set and the amount of inventory they will order.
$\MO$ and $\IS$ can each produce a product at costs $c_{\MO} \geq 0$ and $c_{\IS} \geq 0$ per unit respectively. For now, we assume that customers view the products produced by each seller as perfect substitutes (e.g., both players sell light bulbs), although we will consider an extension to imperfect substitutability in Section \ref{sec:imperfect_substitutability}. $\MO$ operates the market in which the players sell their products, so $\IS$ pays $\MO$ a fraction $\alpha \in [0,1]$ of their revenue per unit as a referral fee. Furthermore, $\MO$ realizes benefit $k \geq 0$ for every unit sold to account for the customer's positive experience, which contributes to the health of the marketplace (because a happy customer is more likely to make future purchases). 
More concretely, $k$ can be thought of as the increase in future profit if the customer is able to buy the good compared to if they are not able. 
The number of units each player can sell is the minimum of their inventory and their demand, and any unsold units have zero value.\footnote{For simplicity, our formulation assumes that leftover inventory has zero salvage value; in reality, products that have steady demand generally have positive salvage value because excess inventory can be sold in the next time period. Relaxing this assumption is a natural topic for future work.} Players are allowed to stock and sell fractional units. We consider a sequential game in which $\MO$ is the first mover, to reflect the reality that large platforms are less agile due to their complex interlocking operations and have to commit to decisions well in advance in order for business to run smoothly; third-party sellers, on the other hand, can generally be more reactive. We now provide a formal definition of the game.

\begin{game}[label=game, colback=white, colframe=gray]

\emph{Parameters:} referral fee $\alpha$, consumer experience term $k$, costs $c_{\MO}$ and $c_{\IS}$
\begin{enumerate}[noitemsep] 
    \item The marketplace operator ($\MO$) chooses their price $p_{\MO} \geq 0$ and quantity $q_{\MO} \geq 0$.
    \item The independent seller ($\IS$) observes $p_{\MO}$ and $q_{\MO}$, then sets their price $p_{\IS} \geq 0$ and quantity $q_{\IS} \geq 0$. 
    \item Both players realize their respective utilities
    \begin{align}
        u_{\MO}(p_{\MO}, q_{\MO}, p_{\IS}, q_{\IS}) &= (p_{\MO} + k) \min(q_{\MO}, D_{\MO}) + (\alpha p_{\IS} + k) \min(q_{\IS}, D_{\IS}) - c_{\MO} q_{\MO}, \\
        u_{\IS}(p_{\MO}, q_{\MO}, p_{\IS}, q_{\IS}) &= (1-\alpha)p_{\IS} \min(q_{\IS}, D_{\IS}) - c_{\IS} q_{\IS},
    \end{align}
    where $D_{\MO}$ and $D_{\IS}$, which we will describe below, denote the respective demands for $\MO$ and $\IS$ and are functions of $p_{\MO}$, $q_{\MO}$, $p_{\IS}$, and $q_{\IS}$. 
\end{enumerate}
\end{game}

\emph{Demand functions.}
Before we can study the game, we first need to specify each player's demand function, which can be thought of in terms of ``original'' and ``residual'' demand functions. The player $j \in \{\MO, \IS\}$ that sets the lower price faces the original demand function $Q(p_j)$ and the other seller $i$ faces the residual demand function $R(p_i; q_j, p_j)$. If prices are equal, the tie is broken in favor of $\IS$.
The independent seller's demand function is thus 
\begin{align*}
    D_{\IS}(p_{\IS}; q_{\MO}, p_{\MO}) = \begin{cases}
        Q(p_{\IS}) & \text{if } p_{\IS} \leq p_{\MO}, \\
        R(p_{\IS}; q_{\MO}, p_{\MO})  & \text{if } p_{\IS} > p_{\MO},
    \end{cases}
\end{align*}
and the marketplace operator's demand function $D_{\MO}(p_{\MO}; q_{\IS}, p_{\IS})$ is the same but with $\IS$ and $\MO$ subscripts switched and with a $<$ in the first line and a $\geq$ in the second line.

For mathematical concreteness, we assume a linear demand function with unit slope\footnote{The unit slope assumption is without loss of generality within the class of linear demand functions: any downwards-sloping linear function can be rewritten to have unit slope by simply redefining what constitutes one unit of quantity. For example, if raising the price of gas by \$1 causes demand to decrease by 1000 gallons, we can just define one unit of demand to be 1000 gallons.} and intercept $\theta \geq 0$. $\theta$ corresponds to the maximum willingness to pay of any customer and is also the number of units that would be demanded if the price were zero.

\begin{assumption}[Linear demand] \label{assumption:linear_demand}
The quantity demanded at price $p$ is 
\begin{align*}
    Q(p) = \begin{cases}
        \theta - p & \text{for $0 \leq p \leq \theta$} \\
        0  & \text{for $p>\theta$}.
    \end{cases}
\end{align*}
\end{assumption}

\begin{wrapfigure}{r}{0.3\textwidth}
    \centering
    \vspace{-3cm}
    \begin{tikzpicture}[scale=0.8]
    \draw[->] (0,0) -- (4.5,0);
    \draw[->] (0,0) -- (0,4.5);

    \node[below] at (2.25,-0.5) {Quantity};
    \node[rotate=90] at (-0.8,2.25) {Price};

    \def\thetaval{4}
    \def\qA{1}

    \draw[thick, domain=0:\thetaval] plot (\x,{\thetaval - \x});

    \draw[blue, very thick, dashed, domain=0:{\thetaval - \qA}] plot (\x,{\thetaval - \qA - \x});

    \draw [<->] (2.6, \thetaval - \qA - 2.5) -- (3.4, \thetaval - \qA - 2.5);

    \node at (\thetaval,-0.35) {$\theta$};
    \node at (-0.3,\thetaval) {$\theta$};
    \node at (2,2.8) {$Q(p_{\IS})$};
    \node[blue] at (1.1,2 - \qA) {$R(p_{\IS})$};
    \node at (2.9, \thetaval - \qA - 2.3) {$q_{\MO}$};
    \node[white] at (3,-.35) {};
\end{tikzpicture}

    
    


    

    \caption{Visualization of $R(p_{\IS})$ for $\theta=4$, $q_{\MO}=1$, and $p_{\MO}=2$ under intensity rationing.}
    \label{fig:R_intensity}
\end{wrapfigure}

The residual demand function is determined by a \emph{rationing rule} that determines which customers are able to buy at the lower price and which remain for the higher-priced seller.   
There exists a range of reasonable rationing rules and associated residual demand functions and they have varying degrees of favorability for the higher-priced seller (for an extended discussion, see Appendix \ref{sec:rationing_rule_APPENDIX}).
To start, we will focus on analyzing the rationing rule that is least favorable to the higher-priced seller, known as \emph{intensity rationing}. 

\begin{assumption}[Intensity rationing]
    \label{assumption:intensity_rationing} Whenever demand exceeds supply, the customers with the \emph{highest valuation} are the ones who are able to buy. Thus, if the lower-priced seller $j$  sets price $p_j$ and has inventory $q_j \leq Q(p_j)$, then the residual demand function for the higher-priced seller $i$ is
    $$R(p_i) = Q(p_i) - q_j.$$
\end{assumption}

Figure \ref{fig:R_intensity} illustrates the residual demand function $R$ under intensity rationing relative to the original demand function $Q$. We see that the residual demand function for the higher-priced seller is simply the original demand function shifted down by the number of units sold at the lower price. 
We assume intensity rationing throughout the paper except for Section \ref{sec:proportional_rationing}, where we explore the implications of proportional rationing.

\section{Best Response and Competitive Pricing} \label{sec:inducing_competitive_pricing}

At a technical level, this section describes the independent seller's best response as a function of the marketplace operator's price $p_{\MO}$ and inventory level $q_{\MO}$. At a practical level, these results provide useful insights on how marketplace operators can, by setting their own price and inventory appropriately, induce sellers in their marketplace to price competitively. 



\subsection{Preliminaries} \label{sec:preliminaries}

We now define two key prices, which we will use when defining $\IS$'s best response function. 

\begin{definition} The independent seller's \emph{break-even price} $p_0$ is the price at which they get zero utility from selling the good:
    $$p_0 = \frac{c_{\IS}}{1-\alpha}.$$
\end{definition}
This definition follows directly  
upon recalling that $\IS$'s utility (i.e., net profit) from selling one unit of the good at price $p$ after paying the referral fee $\alpha$ is $(1-\alpha)p - c_{\IS}$.
We assume that given the choice between selling a positive quantity at $p_0$ vs. not selling at all, $\IS$ will choose to sell at $p_0$. 

\begin{definition} The independent seller's \emph{optimal sole-seller price} $\psole$ is the price the independent seller would set when the marketplace operator is not a seller ($q_{\MO}=0$): 
    \begin{align*}
        \psole &= \arg \max_p ((1-\alpha)p - c_{\IS}) Q(p) \\
        &= \begin{cases}
        \infty & \text{if } p_0 > \theta,  \\
         \frac{1}{2}(p_0+\theta) & \text{otherwise}, 
    \end{cases}
    \end{align*}
where $\infty$ can be interpreted to be any price that is above $\theta$ or, equivalently, any price $p$ at which $Q(p) = 0$. 
\end{definition}

When we say \emph{induce competitive pricing}, we refer to actions that result in the independent seller setting a price strictly less than $\psole$.

We see in the previous displayed equation that if $c_{\IS}$ and $\alpha$ are so high that $p_0 > \theta$, the independent seller will not sell. This results in a simple, near trivial, solution of the game, which we describe in the following proposition.

\begin{proposition}[Trivial solution for large $p_0$]
    Whenever $p_0 > \theta$, the equilibrium solution is that the independent seller does not sell ($q_{\IS}=0$) and the marketplace operator sets price $\psoleA$ and quantity $Q(\psoleA)$, where
    \begin{align*}
        \psoleA &= \arg \max_p (p + k - c_{\MO}) Q(p) \\
        &= \begin{cases}
            \infty &\text{if } c_{\MO} > \theta+k,  \\
            \frac{1}{2}(c_{\MO} - k + \theta) & \text{otherwise.}
        \end{cases}
    \end{align*}
\end{proposition}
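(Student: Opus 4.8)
The plan is to compute the subgame-perfect equilibrium by backward induction: first pin down $\IS$'s best response to an arbitrary first-stage pair $(p_{\MO},q_{\MO})$, then optimize $\MO$'s utility given that response. The key preliminary observation is that $p_0 = c_{\IS}/(1-\alpha) > \theta \ge 0$ forces $c_{\IS} > 0$, and that $\IS$'s per-unit margin $(1-\alpha)p_{\IS}-c_{\IS}$ is nonnegative only for $p_{\IS} \ge p_0 > \theta$.

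Step~1 is to show that in every subgame $\IS$ optimally does not sell. Write $s = \min(q_{\IS},D_{\IS})$ for the units actually sold; since leftover inventory is worthless the optimal stock is exactly $q_{\IS}=s$, so $u_{\IS} = ((1-\alpha)p_{\IS}-c_{\IS})\,s$. If $p_{\IS}<p_0$ the margin is negative and any $s>0$ is strictly worse than $s=0$; if $p_{\IS}\ge p_0 > \theta$ then $Q(p_{\IS})=0$ and the residual demand $R(p_{\IS}) = Q(p_{\IS})-q_{\MO}\le 0$ both vanish, forcing $s=0$. Hence the best $\IS$ can do is $s=0$, attained at $q_{\IS}=0$ with $u_{\IS}=0$, and since $c_{\IS}>0$ every best response has $q_{\IS}=0$; as $\IS$ moves second, nothing else can help it.

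Step~2 is to substitute $q_{\IS}=0$ into $u_{\MO}$. With $\IS$ stocking nothing, $\MO$ faces the undiminished original demand $D_{\MO}=Q(p_{\MO})$ in every case ($p_{\MO}<p_{\IS}$; or $p_{\MO}=p_{\IS}$, where the tie goes to an $\IS$ that sells nothing; or $p_{\MO}>p_{\IS}$, where the residual demand equals $Q(p_{\MO})-q_{\IS}=Q(p_{\MO})$). Discarding the wasteful option of overstocking, $\MO$'s problem reduces to $\max_{p\ge 0}\max\{0,\,(p+k-c_{\MO})Q(p)\}$, with $q_{\MO}=Q(p_{\MO})$ when the bracket is positive and $q_{\MO}=0$ otherwise. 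On $[0,\theta]$ the map $p\mapsto(p+k-c_{\MO})(\theta-p)$ is a strictly concave quadratic with vertex at $\tfrac12(c_{\MO}-k+\theta)$ and value $\tfrac14(\theta+k-c_{\MO})^2$ there; if $c_{\MO}\le\theta+k$ the vertex lies at or below $\theta$ and this value is nonnegative, so $\psoleA=\tfrac12(c_{\MO}-k+\theta)$ and $q_{\MO}=Q(\psoleA)$; if $c_{\MO}>\theta+k$ the margin $p+k-c_{\MO}$ is negative throughout $[0,\theta)$, $\MO$ cannot earn positive profit, and its best action is to not sell: $q_{\MO}=0$ at any $p_{\MO}\ge\theta$, i.e. $\psoleA=\infty$ and $Q(\psoleA)=0$. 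These are exactly the two cases in the statement, and combining with Step~1 yields the claimed equilibrium.

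I do not anticipate a serious obstacle; the content is a careful but routine backward-induction computation, and the step needing the most care is ruling out the deviation in which $\IS$ undercuts $\MO$ while holding a small positive inventory — handled by the inequality $p_{\IS}<p_0\Rightarrow(1-\alpha)p_{\IS}-c_{\IS}<0$ — together with the verification that $\IS$'s non-participation really does leave $\MO$ with demand $Q(p_{\MO})$ across all residual-demand and tie-breaking subcases. I would also remark that the equilibrium is unique only up to payoff-irrelevant choices (the price $\IS$ posts while stocking nothing, and $\MO$'s exact price in the $c_{\MO}>\theta+k$ no-sale regime), and flag the boundary subcase $k>c_{\MO}+\theta$, where the concave quadratic's vertex is negative and the constrained optimum is $p_{\MO}=0$ rather than the displayed formula.
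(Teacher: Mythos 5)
Your proof is correct and follows the natural backward-induction argument that the paper itself leaves implicit (the paper states this proposition as ``near trivial'' and provides no written proof): since $p_0>\theta$, the independent seller's margin $(1-\alpha)p_{\IS}-c_{\IS}$ is nonnegative only at prices where demand has already vanished, so $q_{\IS}=0$ in every subgame, and $\MO$ then solves the sole-seller monopoly problem $\max_p (p+k-c_{\MO})Q(p)$ whose concave-quadratic vertex gives the two displayed cases. Your closing remark that the displayed formula $\tfrac12(c_{\MO}-k+\theta)$ can go negative when $k>c_{\MO}+\theta$ (so the constrained optimum is $p_{\MO}=0$) is a fair caveat about the statement itself rather than a gap in your argument.
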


Since the condition $p_0 > \theta$ does not generally occur in the real world (this would mean, e.g., that an independent seller's manufacturing cost is close to customers' maximum willingness to pay) and is also uninteresting, we will focus on analyzing the $p_0 \leq \theta$ case in the rest of the paper.

\subsection{$\IS$'s best response}
\label{sec:best_response}

Although the independent seller action space is comprised of two variables --- price and quantity --- the problem of determining $\IS$'s best action reduces to simply determining their best price.
The following proposition,
which is similar to Proposition 1 of \citet{yousefimanesh2023strategic}, formalizes the intuition that the independent seller should sell as much as they can at their chosen price, so they should set their quantity to exactly meet their demand. This follows from the fact that, unlike the marketplace operator, $\IS$ only benefits from directly selling the product. Note that even though we state the result in a way that implies the maximizing $q_{\IS}$ can be non-unique, the only case in which there is no unique optimal $q_{\IS}$ given $p_{\IS}$ is if $p_{\IS}=p_0$. In that case, any $q_{\IS} \leq  D_{\IS}(p_{\IS}; q_{\MO}, p_{\MO})$ is optimal. In all other cases, $q_{\IS} = D_{\IS}(p_{\IS}; q_{\MO}, p_{\MO})$ is the unique optimal quantity.

\begin{proposition}[$\IS$ meets demand] \label{prop:optimal_qB}
    Given fixed $p_{\IS}$, $q_{\MO}$, and $p_{\MO}$, if there exists $q_{\IS}$ such that $u_{\IS}(p_{\MO}, q_{\MO}, p_{\IS}$, $ q_{\IS}) \geq 0$ (which happens if and only if $p_{\IS} \geq p_0$), then we must have
    \begin{align*}
        D_{\IS}(p_{\IS}; q_{\MO}, p_{\MO}) \in \arg\max_{q_{\IS}} u_{\IS}(p_{\MO}, q_{\MO}, p_{\IS}, q_{\IS}),
    \end{align*}
    i.e., it is utility-maximizing for the independent seller to set $q_{\IS}$ equal to their demand.
\end{proposition}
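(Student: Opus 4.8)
The plan is to fix $p_{\IS}$, $q_{\MO}$, $p_{\MO}$ and treat $D := D_{\IS}(p_{\IS}; q_{\MO}, p_{\MO})$ as a \emph{constant}, which is legitimate because the demand function depends only on $p_{\IS}$, $q_{\MO}$, $p_{\MO}$ and not on $q_{\IS}$. Then $\IS$'s utility collapses to a one-dimensional function of the single variable $q_{\IS} \ge 0$, namely $g(q) := (1-\alpha) p_{\IS} \min(q, D) - c_{\IS} q$, and the claim is exactly that $q = D$ maximizes $g$ over $q \ge 0$. So the whole argument reduces to analyzing this piecewise-linear $g$.

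First I would split the domain at $q = D$. On $[0, D]$ we have $\min(q,D) = q$, so $g(q) = \big((1-\alpha)p_{\IS} - c_{\IS}\big)\, q$ is linear with slope $s := (1-\alpha)p_{\IS} - c_{\IS}$; on $[D, \infty)$ we have $\min(q,D) = D$, so $g(q) = (1-\alpha)p_{\IS} D - c_{\IS} q$ is linear with slope $-c_{\IS} \le 0$. Next I would note that (for $\alpha < 1$) the hypothesis is equivalent to $p_{\IS} \ge p_0 = c_{\IS}/(1-\alpha)$, i.e.\ to $s \ge 0$: a $q_{\IS}$ with strictly positive sales and $g(q_{\IS}) \ge 0$ exists precisely when the per-unit net margin $(1-\alpha)p_{\IS} - c_{\IS}$ is nonnegative. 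Given $s \ge 0$, $g$ is nondecreasing on $[0, D]$ and nonincreasing on $[D, \infty)$, so $g(D) = \max_{[0,D]} g = \max_{[D,\infty)} g$, whence $g(D) \ge g(q)$ for all $q \ge 0$; equivalently $D_{\IS}(p_{\IS}; q_{\MO}, p_{\MO}) \in \arg\max_{q_{\IS}} u_{\IS}$. The uniqueness remarks then fall out by tracking strictness: on $[0,D]$ the slope $s$ vanishes exactly when $p_{\IS} = p_0$, and on $[D,\infty)$ the slope $-c_{\IS}$ vanishes exactly when $c_{\IS} = 0$, so outside the case $p_{\IS} = p_0$ the maximizer $q_{\IS} = D$ is unique.

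There is no deep obstacle here; the work is entirely bookkeeping of boundary and degenerate cases. In particular I would handle separately: (i) $D = 0$, where $g(q) = -c_{\IS} q$ and the conclusion $0 \in \arg\max g$ is immediate; (ii) $\alpha = 1$ and/or $c_{\IS} = 0$, where $p_0 = c_{\IS}/(1-\alpha)$ must be read via the appropriate limit and the equivalence "$s \ge 0 \iff p_{\IS} \ge p_0$" re-checked directly from $s = (1-\alpha)p_{\IS} - c_{\IS}$; and (iii) the tie case $p_{\IS} = p_0$, where $s = 0$ makes every $q \in [0,D]$ a maximizer, matching the stated non-uniqueness. The one point deserving a sentence rather than a computation is reconciling the literal phrase ``there exists $q_{\IS}$ such that $u_{\IS} \ge 0$'' (trivially true at $q_{\IS} = 0$) with the claimed equivalence to $p_{\IS} \ge p_0$: the intended reading is the existence of a $q_{\IS}$ with $u_{\IS} \ge 0$ \emph{and} strictly positive sales $\min(q_{\IS}, D) > 0$, which is exactly what the $\arg\max$ statement concerns.
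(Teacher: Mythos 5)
Your proof is correct and takes essentially the same approach as the paper's: both arguments reduce to observing that, with $D_{\IS}$ independent of $q_{\IS}$, the utility is piecewise linear in $q_{\IS}$ with nonnegative slope $(1-\alpha)p_{\IS}-c_{\IS}$ up to $D_{\IS}$ and nonpositive slope $-c_{\IS}$ beyond it, so stocking exactly the demand is optimal. Your explicit algebraic bookkeeping of the degenerate cases ($D=0$, $p_{\IS}=p_0$, $c_{\IS}=0$) and your reading of the ``there exists $q_{\IS}$'' hypothesis match the paper's intent.
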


We prove Proposition \ref{prop:optimal_qB} in Appendix \ref{sec:auxiliary_proofs_APPENDIX}.
Given Proposition \ref{prop:optimal_qB}, we can focus on deriving $\IS$'s best response in terms of their price $p_{\IS}$ with the assumption that they will correspondingly set their inventory as $q_{\IS} = D_{\IS}(p_{\IS}; q_{\MO}, p_{\MO})$. We can thus remove the dependence of $\IS$'s utility on $q_{\IS}$ by plugging in the best response condition, as follows:
\begin{align}
    u_{\IS}(p_{\IS}, p_{\MO}, q_{\MO}) = [(1-\alpha)p_{\IS}-c_{\IS}]D_{\IS}(p_{\IS}; q_{\MO}, p_{\MO}).
\end{align}
We adopt the convention that $\IS$ sets $p_{\IS} = \infty$ when they wish to abstain (by setting $q_{\IS} = 0$).
Having established that $\IS$'s decision reduces to choosing a value of $p_{\IS}$, we now describe the strategies $\IS$ can choose from.

\begin{figure}
    \centering
    \includegraphics[width=0.4\linewidth]{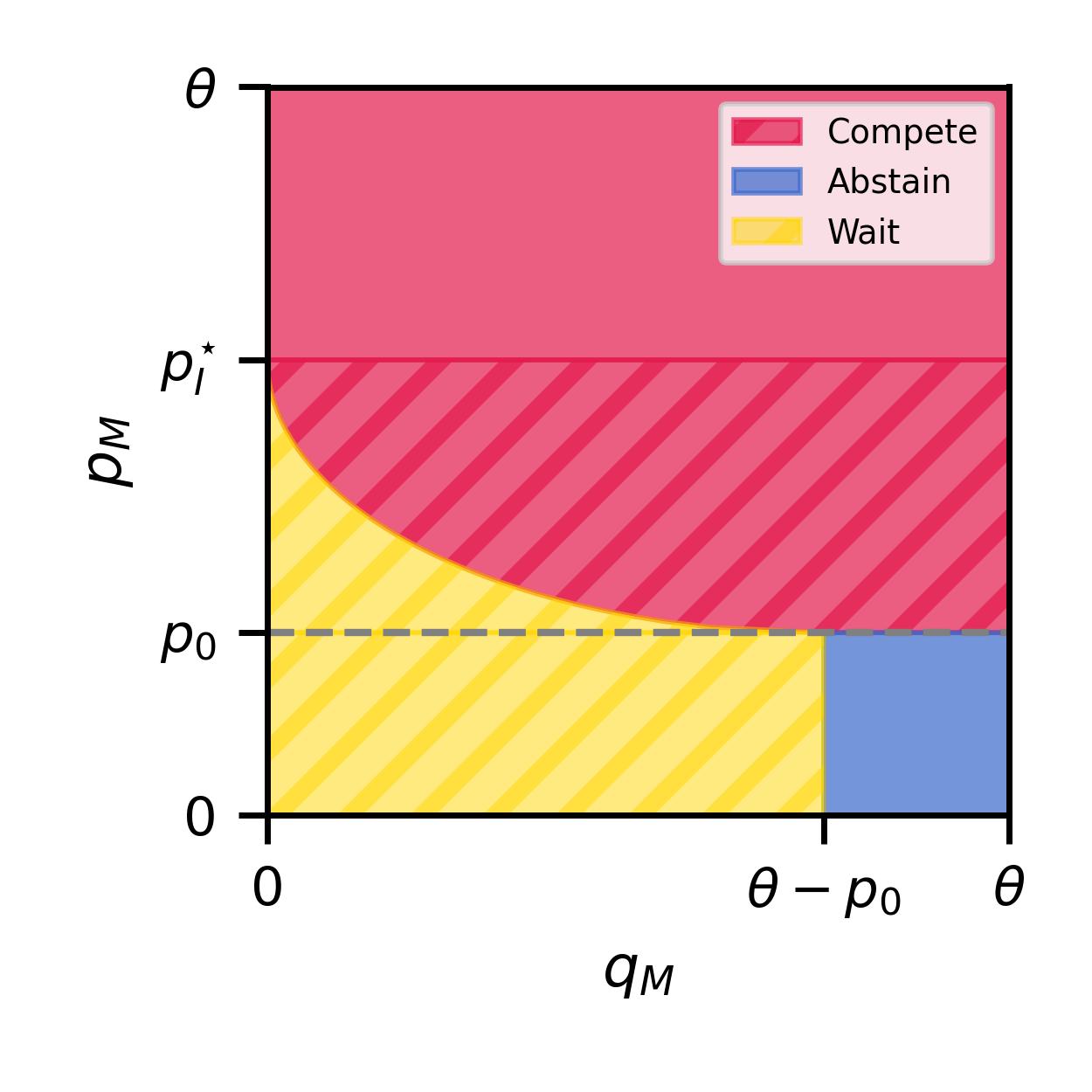}
    \vspace{-20pt}
    \caption{Visualization of $\IS$'s best response to different $q_{\MO}$ (x-axis) and $p_{\MO}$ (y-axis) combinations. Stripes are used to denote regions where $\IS$ is de-monopolized. Game parameters are set to $\theta=10$, $c_{\IS}=2$, and $\alpha=0.2$ (the value of $c_{\MO}$ does not affect $\IS$'s best response).}
    \label{fig:best_response_intensity}
\end{figure}

\begin{tcolorbox}[colback=white, colframe=gray] 
\emph{$\IS$'s possible strategies:}
\begin{enumerate}[noitemsep]
    \item \textsc{Compete} by setting $p_{\IS} \leq p_{\MO}$ (and thus face the original demand function $Q$).
    \item \textsc{Wait (it out)} by setting $p_{\IS} > p_{\MO}$ (and thus face the residual demand function $R$).
    \item \textsc{Abstain} by setting $p_{\IS} = \infty$.
\end{enumerate}
\end{tcolorbox}

We assume that when $\IS$ is indifferent between competing and not competing, they choose to compete. 
Note that just because $\IS$ competes (by setting $p_{\IS} \leq p_{\MO}$) does not necessarily mean that they are setting a competitive price (which requires $p_{\IS} < \psole$). Furthermore, $\IS$ competing is not a necessary condition for them to set a competitive price; they can wait and still set a price less than $\psole$. To allow us to better distinguish between these cases, we will use \emph{monopolistic} to refer to an outcome in which $\IS$ sets their sole-seller price $\psole$ and \emph{de-monopolized} to refer to an outcome in which $\IS$ sets a price strictly lower than $\psole$. We now describe the independent seller's best response function.


\begin{lemma}[Best response under intensity rationing]
\label{lemma:best_response_intensity} 
\begin{samepage} 
Define the following reference values
\[
q^{\dagger}(p_{\MO}) = \theta - p_0 - 2 \sqrt{(p_{\MO} - p_0)(\theta-p_{\MO})},
\qquad
q^{\ddagger}(p_{\MO}) = \theta - p_0,
\qquad
p_{\IS}^W = \psole - \frac{q_{\MO}}{2}.
\]
The independent seller’s optimal price given $(p_{\MO},q_{\MO})$ is
\begin{align*}
p_{\IS}^{\BR}(p_{\MO},q_{\MO})
  =\left\{
     \begin{array}{l@{\;\;}l@{\;\;}r}
       \psole      &
       \textnormal{if } p_{\MO}\ge \psole                                      &
       \quad \text{(monopolistic \textsc{Compete})},\\[4pt]
       p_{\MO}     &
       \textnormal{if } p_{\MO}\in[p_0,\psole)\ 
       \textnormal{and } q_{\MO}\ge q^{\dagger}(p_{\MO})                       &
       \quad \text{(de-monopolized \textsc{Compete})},\\[4pt]
       p_{\IS}^{W} &
       \textnormal{if } p_{\MO}\in[p_0,\psole)\ 
       \textnormal{and } q_{\MO}< q^{\dagger}(p_{\MO})                         &
       \quad \text{(de-monopolized \textsc{Wait})},\\[4pt]
       p_{\IS}^{W} &
       \textnormal{if } p_{\MO}< p_0\ 
       \textnormal{and } q_{\MO}< q^{\ddagger}(p_{\MO})                        &
       \quad \text{(de-monopolized \textsc{Wait})},\\[4pt]
       \infty      &
       \textnormal{if } p_{\MO}< p_0\ 
       \textnormal{and } q_{\MO}\ge q^{\ddagger}(p_{\MO})                      &
       \quad \text{(\textsc{Abstain})},
     \end{array}
   \right.
\end{align*}
and their corresponding optimal inventory is $q_{\IS}^{\BR}(p_{\MO},q_{\MO}) = D_{\IS}(p_{\IS}^{\BR}(p_{\MO},q_{\MO}); q_{\MO}, p_{\MO})$.
\end{samepage}
\end{lemma}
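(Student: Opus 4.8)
The plan is to invoke \cref{prop:optimal_qB} to collapse $\IS$'s two-dimensional problem to the one-dimensional maximization of $u_{\IS}(p_{\IS}) = (1-\alpha)(p_{\IS}-p_0)\,D_{\IS}(p_{\IS};q_{\MO},p_{\MO})$ over $p_{\IS}\in[0,\infty]$, using $(1-\alpha)p_{\IS}-c_{\IS}=(1-\alpha)(p_{\IS}-p_0)$. I would then maximize separately over the three strategy regions and take the best: \textsc{Compete} ($p_{\IS}\le p_{\MO}$, so $D_{\IS}=Q(p_{\IS})=\theta-p_{\IS}$ by the tie-break in favor of $\IS$); \textsc{Wait} ($p_{\MO}<p_{\IS}<\infty$, so $D_{\IS}=R(p_{\IS})=\max(\theta-q_{\MO}-p_{\IS},0)$, which under intensity rationing is the residual demand for every $q_{\MO}\ge0$, since when $q_{\MO}>Q(p_{\MO})$ all willing customers are served by $\MO$ and the capped expression also equals $0$); and \textsc{Abstain}, with utility $0$. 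A single global calculus argument fails because $D_{\IS}(\cdot;q_{\MO},p_{\MO})$ drops \emph{discontinuously} by $q_{\MO}$ as $p_{\IS}$ crosses $p_{\MO}$ upward, so the optimum can pin to the kink at $p_{\IS}=p_{\MO}$; the case split in the statement is exactly the accounting of which region wins.

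For the \textsc{Compete} region, the objective $(1-\alpha)(p_{\IS}-p_0)(\theta-p_{\IS})$ is a concave parabola with roots $p_0,\theta$ and vertex $\psole$, independent of $q_{\MO}$; its maximum over $[0,p_{\MO}]$ is attained at $\psole$ if $p_{\MO}\ge\psole$ (value: the monopoly profit $(1-\alpha)(\theta-p_0)^2/4$), at the endpoint $p_{\MO}$ if $p_{\MO}\in[p_0,\psole)$ (value $(1-\alpha)(p_{\MO}-p_0)(\theta-p_{\MO})\ge0$), and is strictly negative if $p_{\MO}<p_0$, so there \textsc{Compete} is dominated by \textsc{Abstain}. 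For the \textsc{Wait} region, the objective $(1-\alpha)(p_{\IS}-p_0)(\theta-q_{\MO}-p_{\IS})$ is a concave parabola with roots $p_0$ and $\theta-q_{\MO}$ and vertex $p_{\IS}^{W}=\psole-q_{\MO}/2$. If $q_{\MO}\ge\qddag=\theta-p_0$ the root $\theta-q_{\MO}$ sits at or below $p_0$, so this product is never positive for $p_{\IS}>p_{\MO}$ and \textsc{Wait} cannot beat \textsc{Abstain}. If $q_{\MO}<\qddag$ then $p_{\IS}^{W}\in(p_0,\theta-q_{\MO})$, the vertex value is the strictly positive $(1-\alpha)(\qddag-q_{\MO})^2/4$, and a one-line computation gives $p_{\IS}^{W}>p_{\MO}$ exactly when $q_{\MO}<2(\psole-p_{\MO})$; when instead $p_{\IS}^{W}\le p_{\MO}$ the parabola is decreasing over the feasible set $p_{\IS}>p_{\MO}$, whose supremum is the one-sided limit at $p_{\MO}$, which lies strictly below the \textsc{Compete} value at $p_{\MO}$ precisely because of the downward jump of size $q_{\MO}$.

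It remains to take the pointwise maximum of the three regional values. For $p_{\MO}\ge\psole$ the \textsc{Compete} value equals the monopoly profit, which by the bounds above exceeds both the \textsc{Wait} value and $0$, so $p_{\IS}^{\BR}=\psole$ (monopolistic \textsc{Compete}). For $p_{\MO}<p_0$ the \textsc{Compete} value is negative, so the winner is \textsc{Wait} at $p_{\IS}^{W}$ when $q_{\MO}<\qddag$ (feasibility $p_{\IS}^{W}>p_{\MO}$ is automatic because then $p_{\IS}^{W}>p_0>p_{\MO}$) and \textsc{Abstain} when $q_{\MO}\ge\qddag$. For $p_{\MO}\in[p_0,\psole)$ I would compare $(1-\alpha)(p_{\MO}-p_0)(\theta-p_{\MO})$ with $(1-\alpha)(\qddag-q_{\MO})^2/4$: when $q_{\MO}<\qddag$, taking square roots of both sides (legitimate since both are nonnegative) shows \textsc{Compete} wins iff $q_{\MO}\ge\theta-p_0-2\sqrt{(p_{\MO}-p_0)(\theta-p_{\MO})}=\qthresh(p_{\MO})$, while when $q_{\MO}\ge\qddag$ the \textsc{Wait} value is $\le0\le$ the \textsc{Compete} value and also $q_{\MO}\ge\qddag\ge\qthresh(p_{\MO})$, so \textsc{Compete} wins there too; hence the split is de-monopolized \textsc{Compete} if $q_{\MO}\ge\qthresh(p_{\MO})$ and de-monopolized \textsc{Wait} (at $p_{\IS}^{W}$) otherwise, with \textsc{Abstain} ruled out throughout since the \textsc{Compete} value is $\ge0$. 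Two elementary inequalities close the argument: $\qthresh(p_{\MO})\le2(\psole-p_{\MO})$ (so $p_{\IS}^{W}$ is a feasible \textsc{Wait} price whenever $q_{\MO}<\qthresh(p_{\MO})$), and, using $p_0\le\theta$, $\qthresh(p_{\MO})\le Q(p_{\MO})$ (so $q_{\MO}\le Q(p_{\MO})$ there and the residual-demand form of \cref{assumption:intensity_rationing} genuinely applies). The optimal-inventory claim then follows immediately from \cref{prop:optimal_qB}.

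The main obstacle is the \textsc{Compete}-versus-\textsc{Wait} comparison on $p_{\MO}\in[p_0,\psole)$: this is where the square-root threshold $\qthresh(p_{\MO})$ is produced, and it needs the sign checks that make ``take square roots'' valid together with the two side inequalities bounding $\qthresh(p_{\MO})$ by $2(\psole-p_{\MO})$ and by $Q(p_{\MO})$, which are what keep the \textsc{Wait} optimum feasible and the unclipped residual demand in force. Everything else reduces to one-variable concave maximization plus careful bookkeeping of the tie-breaking conventions (an indifferent $\IS$ competes, and sells at $p_0$ rather than abstaining) and of the kink at $p_{\IS}=p_{\MO}$.
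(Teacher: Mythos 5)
Your proof is correct and follows essentially the same route as the paper's: reduce to a one-variable price problem via \cref{prop:optimal_qB}, maximize the concave compete/wait parabolas separately on each $p_{\MO}$ range, and obtain $\qthresh(p_{\MO})$ by equating the two optimal utilities and taking square roots. The only cosmetic difference is that the paper packages the feasibility check (that the unconstrained residual-demand maximizer indeed satisfies $p_{\IS}>p_{\MO}$) into an auxiliary lemma using $R\le Q$, whereas you verify $\qthresh(p_{\MO})\le 2(\psole-p_{\MO})$ directly.
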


This lemma follows from results in Appendix \ref{sec:best_response_proofs_APPENDIX}.
We visualize
Lemma \ref{lemma:best_response_intensity} in Figure 
\ref{fig:best_response_intensity}.

\section{When Should the Marketplace Operator Induce Competition?}

In the previous section, we derived the independent seller's best response given the marketplace operator's price and inventory. Armed with this,
we are now ready to determine the marketplace operator's optimal strategy, i.e., to derive the equilibrium of our game.
Let $\pBR(p_{\MO},q_{\MO})$ and $\qBR(p_{\MO},q_{\MO})$  denote $\IS$'s best response to $p_{\MO}$ and $q_{\MO}$, as described in \Cref{lemma:best_response_intensity}.
Assuming $\IS$ best responds, $\MO$'s utility for playing $(p_{\MO}, q_{\MO})$ for $q_{\MO} \leq Q(p_{\MO})$
can be written as 
\begin{align} \label{eq:u_{MO}}
    u_{\MO}(p_{\MO}, q_{\MO}) =
    \begin{cases}
        u_{\MO}^{\IS\mathrm{comp}}(p_{\MO},q_{\MO})&\text{if } \pBR(p_{\MO},q_{\MO}) \leq p_{\MO}, \\
         u_{\MO}^{\IS\mathrm{wait/abs}}(p_{\MO},q_{\MO}) &\text{if } \pBR(p_{\MO},q_{\MO}) > p_{\MO}, 
    \end{cases}
\end{align}
\begin{align*}
    &\text{where }  \qquad  \qquad
u_{\MO}^{\IS\mathrm{comp}}(p_{\MO},q_{\MO}) =\big(\alpha\pBR(p_{\MO},q_{\MO}) +k\big)  Q(\pBR(p_{\MO}, q_{\MO})) - c_{\MO} q_{\MO} \\
&\text{and } \qquad  \qquad
    u_{\MO}^{\IS\mathrm{wait/abs}}(p_{\MO},q_{\MO}) = (p_{\MO} - c_{\MO} + k) q_{\MO} + \big(\alpha\pBR(p_{\MO},q_{\MO}) +k\big) R(\pBR(p_{\MO}, q_{\MO}); q_{\MO}, p_{\MO}).
\end{align*}
We split $\MO$'s strategy space into a few named partitions.
\begin{tcolorbox}[colback=white, colframe=gray]
\emph{$\MO$'s possible strategies:}
\begin{enumerate}[noitemsep]
    \item \textsc{Induce Abstain} by setting  $p_{\MO}$ low enough and $q_{\MO}$ high enough so that $\IS$'s best response is to abstain from entering the market (i.e., $\IS$ sets $q_{\IS} = 0$). 
    \item \textsc{Induce Compete} by setting a moderate $p_{\MO}$ and $q_{\MO}$ high enough that $\IS$'s best response is to compete and set $p_{\IS} \leq p_{\MO}$.
    \item \textsc{Induce Wait} by setting a low $p_{\MO}$ and $q_{\MO}$ low enough that $\IS$'s best response is to wait and set $p_{\IS} > p_{\MO}$. This partition also includes \textsc{($\MO$) Abstains}, which refers to when $\MO$ does not enter the market ($q_{\MO} = 0$).
\end{enumerate}
\end{tcolorbox}

For a fixed price $p_{\MO}$, we can identify the optimal quantity $q_{\MO}$ that falls within each of the three strategy spaces and then compare the resulting utilities to determine the overall optimal quantity, yielding the following lemma.

\begin{lemma}[Optimal $q_{\MO}$ given $p_{\MO}$]
\label{lemma:qstar_A} 
Define $\qthresheps(p_{\MO}) := \qthresh(p_{\MO})-\epsilon$ (for infinitely small $\epsilon >0$) to be the largest inventory the marketplace operator can acquire while still inducing the independent seller to wait.
The following is an optimal ($u_{\MO}$-maximizing) inventory $q_{\MO}$ given a fixed price $p_{\MO}$:
\begin{align*}
    q^*_{\MO}(p_{\MO}) = \begin{cases}
        0 & \text{if } p_{\MO} \geq \psole, \\
        \arg\max_{q \in \{0, \qthresheps(p_{\MO}), \qthresh(p_{\MO})\}} 
        u_{\MO}(p_{\MO}, q) & \text{if } p_{\MO} \in [p_0, \psole), \\
        \arg\max_{q \in \{0, Q(p_{\MO})\}} 
        u_{\MO}(p_{\MO}, q) & \text{if } p_{\MO} < p_0.
    \end{cases}
\end{align*}
The optimal quantity is unique except for when there are ties in the argmax.
\end{lemma}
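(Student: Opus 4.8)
The plan is to fix the price $p_{\MO}$ and study $u_{\MO}(p_{\MO},\cdot)$ as a one-dimensional function of the inventory $q_{\MO}$, obtained by substituting $\IS$'s best response from \Cref{lemma:best_response_intensity} into the piecewise formula for $u_{\MO}$. Throughout I restrict without loss of generality to $q_{\MO}\le Q(p_{\MO})$, since excess inventory is unsellable and only adds cost. The single structural fact that drives the whole argument is this: on any $q_{\MO}$-interval on which $\IS$'s best response is to \textsc{Wait}, one has $p_{\IS}^{\BR}=p_{\IS}^W=\psole-q_{\MO}/2$ and $R(p_{\IS}^W;q_{\MO},p_{\MO})=\theta-\psole-q_{\MO}/2$, so $u_{\MO}^{\IS\mathrm{wait/abs}}(p_{\MO},q_{\MO})$ is a quadratic in $q_{\MO}$ with leading coefficient $\alpha/4\ge 0$, hence \emph{convex}; whereas on any interval on which $\IS$ \textsc{Compete}s (so $u_{\MO}^{\IS\mathrm{comp}}=(\alpha p_{\IS}^{\BR}+k)Q(p_{\IS}^{\BR})-c_{\MO}q_{\MO}$) or \textsc{Abstain}s (so $u_{\MO}=(p_{\MO}-c_{\MO}+k)q_{\MO}$), it is \emph{affine} in $q_{\MO}$. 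A convex or affine function on an interval attains its maximum at an endpoint, so the overall maximizer must be one of the breakpoints of this piecewise function, which are exactly the quantities listed in the statement.

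The argument then proceeds by price regime. \textbf{(i)} If $p_{\MO}\ge\psole$, \Cref{lemma:best_response_intensity} gives $p_{\IS}^{\BR}=\psole$ for every $q_{\MO}$, so $u_{\MO}(p_{\MO},q_{\MO})=(\alpha\psole+k)Q(\psole)-c_{\MO}q_{\MO}$ is non-increasing and $q_{\MO}=0$ is optimal. \textbf{(ii)} If $p_{\MO}\in[p_0,\psole)$, the domain $[0,Q(p_{\MO})]$ splits at $\qthresh(p_{\MO})$ (one checks $\qthresh(p_{\MO})\le Q(p_{\MO})$ here, which holds because $p_{\MO}<\psole=\tfrac12(p_0+\theta)$): on $[0,\qthresh(p_{\MO}))$ seller $\IS$ waits and $u_{\MO}$ is the convex quadratic above, so its supremum over this sub-interval is attained at $q_{\MO}=0$ or in the limit $q_{\MO}\to\qthresh(p_{\MO})^-$, i.e.\ at $\qthresheps(p_{\MO})$; on $[\qthresh(p_{\MO}),Q(p_{\MO})]$ seller $\IS$ competes at $p_{\MO}$ and $u_{\MO}=(\alpha p_{\MO}+k)Q(p_{\MO})-c_{\MO}q_{\MO}$ is non-increasing, so the maximizer there is $q_{\MO}=\qthresh(p_{\MO})$. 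Taking the best of $\{0,\qthresheps(p_{\MO}),\qthresh(p_{\MO})\}$ gives the claim. \textbf{(iii)} If $p_{\MO}<p_0$, the domain splits at $\qddag(p_{\MO})=\theta-p_0$: on $[0,\qddag(p_{\MO}))$ seller $\IS$ waits (same convex quadratic), and on $[\qddag(p_{\MO}),Q(p_{\MO})]$ seller $\IS$ abstains and $u_{\MO}=(p_{\MO}-c_{\MO}+k)q_{\MO}$ is affine; moreover $u_{\MO}$ is continuous at $\qddag(p_{\MO})$ because plugging $q_{\MO}=\qddag(p_{\MO})$ drives $\IS$'s residual demand to zero. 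So the candidate breakpoints are $0$, $\qddag(p_{\MO})$, and $Q(p_{\MO})$, and it remains to eliminate $\qddag(p_{\MO})$: since $u_{\MO}(p_{\MO},\qddag(p_{\MO}))=(p_{\MO}-c_{\MO}+k)\,\qddag(p_{\MO})$, if $p_{\MO}-c_{\MO}+k\ge 0$ this is dominated by $u_{\MO}(p_{\MO},Q(p_{\MO}))$ (because $\qddag(p_{\MO})=\theta-p_0\le\theta-p_{\MO}=Q(p_{\MO})$), and if $p_{\MO}-c_{\MO}+k<0$ it is negative while $u_{\MO}(p_{\MO},0)=(\alpha\psole+k)Q(\psole)\ge 0$. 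Either way $\qddag(p_{\MO})$ is weakly dominated, leaving $\{0,Q(p_{\MO})\}$. Uniqueness follows from the same picture: on each sub-interval the relevant function is strictly monotone (when $c_{\MO}>0$) or strictly convex (when $\alpha>0$), so the per-interval maximizer is a single endpoint, and the only way the overall maximizer fails to be unique is when two of the finitely many candidate values coincide — the ``tie in the argmax'' hedge — apart from degenerate flat cases ($c_{\MO}=0$, or $\alpha=0$ with the linear coefficient vanishing), where $q_{\MO}=0$ is one of the optima.

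I expect the main obstacle to be the bookkeeping at the breakpoints rather than any single hard computation: (a) reading off from \Cref{lemma:best_response_intensity} exactly which of \textsc{Wait}/\textsc{Compete}/\textsc{Abstain} governs each sub-interval (in particular that there is \emph{no} \textsc{Abstain} region once $p_{\MO}\ge p_0$, and that the Wait/Compete split in regime (ii) sits at $\qthresh(p_{\MO})\le Q(p_{\MO})$); (b) noticing that $u_{\MO}$ can \emph{jump} at $\qthresh(p_{\MO})$ — $\qthresh$ is defined so that $\IS$ is indifferent there, not $\MO$ — which is precisely why both $\qthresheps(p_{\MO})$ and $\qthresh(p_{\MO})$ must be kept as candidates; and (c) the domination of $\qddag(p_{\MO})$ in regime (iii), the one place where a genuine inequality (not mere monotonicity or convexity) is needed. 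The convexity claim itself is just the observation that expanding $u_{\MO}^{\IS\mathrm{wait/abs}}$ with $p_{\IS}^W$ and $R$ substituted produces a $q_{\MO}^2$-coefficient of $(-\alpha/2)(-1/2)=\alpha/4$.
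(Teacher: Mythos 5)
Your proposal is correct and follows essentially the same route as the paper's proof: a case split on the price regime, convexity of $u_{\MO}$ in $q_{\MO}$ on the \textsc{Wait} interval (second derivative $\alpha/2$, matching the paper's $\tfrac{\alpha}{2}(f'(q_{\MO}))^2$ with $f(q)=q$) versus affineness on the \textsc{Compete}/\textsc{Abstain} intervals, reduction to the breakpoint candidates, and the same domination argument eliminating $\qddag(p_{\MO})$ in the $p_{\MO}<p_0$ case. The only difference is that the paper carries out the computation for general subtractive and multiplicative residual demand functions while you specialize to intensity rationing, which suffices for the statement as given.
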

 
Lemma \ref{lemma:qstar_A} follows from results in Appendix \ref{sec:equilibrium_APPENDIX}. Using this lemma, we can now write out the equilibrium strategies for each player.

\begin{theorem}[Equilibrium] 
\label{theorem:equilibrium_unconstrained_game}
The following constitutes the subgame-perfect Nash equilibrium of the game:
\begin{enumerate}
    \item $\MO$ chooses price $p_{\MO}^* = \arg\max_{p_{\MO}} u_{\MO}(p_{\MO}, q_{\MO}^*(p_{\MO}))$
    and quantity $q_{\MO}^*(p_{\MO}^*)$.
    \item $\IS$ chooses price $\pBR(p_{\MO}^*, q_{\MO}^*(p_{\MO}^*))$ and quantity  $\qBR(p_{\MO}^*, q_{\MO}^*(p_{\MO}^*))$.
\end{enumerate}
\end{theorem}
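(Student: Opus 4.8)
The plan is to derive the result by backward induction on the two stages of the game, stitching together the stage-wise characterizations that have already been established. A strategy profile is a subgame-perfect Nash equilibrium of this finite-horizon game with perfect information if and only if (i) for every first-stage move $(p_{\MO},q_{\MO})$ the independent seller's continuation play is a best response in the induced subgame, and (ii) the marketplace operator's first-stage choice maximizes $u_{\MO}$ given that the independent seller will best respond. So the proof amounts to verifying that the profile displayed in the theorem satisfies (i) and (ii).

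For (i), fix an arbitrary $(p_{\MO},q_{\MO})$. By Proposition~\ref{prop:optimal_qB}, conditional on any price $p_{\IS}\ge p_0$ it is optimal for $\IS$ to set $q_{\IS}=D_{\IS}(p_{\IS};q_{\MO},p_{\MO})$, and conditional on any price below $p_0$ it is optimal to abstain; hence $\IS$'s two-dimensional decision collapses to the scalar problem of maximizing the reduced objective $[(1-\alpha)p_{\IS}-c_{\IS}]\,D_{\IS}(p_{\IS};q_{\MO},p_{\MO})$ over $p_{\IS}$ (with $p_{\IS}=\infty$ encoding abstention). Lemma~\ref{lemma:best_response_intensity} exhibits the maximizer of this reduced objective as $\pBR(p_{\MO},q_{\MO})$, and the accompanying optimal inventory is $\qBR(p_{\MO},q_{\MO})=D_{\IS}(\pBR(p_{\MO},q_{\MO});q_{\MO},p_{\MO})$. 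This is precisely item~2 of the theorem, so $\IS$'s equilibrium continuation is pinned down as a function of $\MO$'s move, and (i) holds.

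For (ii), substitute $\IS$'s best response into $\MO$'s payoff. I would first note that it is without loss of generality to restrict $\MO$ to $q_{\MO}\le Q(p_{\MO})$: any inventory beyond $Q(p_{\MO})$ adds cost $c_{\MO}$ per unit, is never sold (under intensity rationing the residual demand available to $\IS$ at any price $p_{\IS}\ge p_{\MO}$ is already zero once $q_{\MO}=Q(p_{\MO})$, and $\IS$ faces $Q$ rather than $R$ otherwise), and hence is weakly dominated by $q_{\MO}=Q(p_{\MO})$; this is the region on which the induced payoff $u_{\MO}(p_{\MO},q_{\MO})$ of \eqref{eq:u_{MO}} is defined. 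Lemma~\ref{lemma:qstar_A} then states that for each fixed $p_{\MO}$ the inventory $q_{\MO}^*(p_{\MO})$ attains $\max_{q_{\MO}}u_{\MO}(p_{\MO},q_{\MO})$, so the joint maximization over $(p_{\MO},q_{\MO})$ reduces to $\max_{p_{\MO}}u_{\MO}(p_{\MO},q_{\MO}^*(p_{\MO}))$. Taking $p_{\MO}^*$ to be a maximizer of the latter and pairing it with $q_{\MO}^*(p_{\MO}^*)$ gives a first-stage choice satisfying (ii), which together with the continuation from (i) is the asserted equilibrium. Essential uniqueness then follows from the uniqueness clauses in Proposition~\ref{prop:optimal_qB} and Lemmas~\ref{lemma:best_response_intensity} and~\ref{lemma:qstar_A}, modulo the tie-breaking conventions already adopted (ties on price resolved toward $\IS$, $\IS$ selling at $p_0$, $\IS$ competing when indifferent, and ties in the argmax of Lemma~\ref{lemma:qstar_A}).

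The step I expect to require the most care is showing that the outer maximization over $p_{\MO}$ is genuinely attained, since $p_{\MO}\mapsto u_{\MO}(p_{\MO},q_{\MO}^*(p_{\MO}))$ is only piecewise-continuous: it is constant for $p_{\MO}\ge\psole$, continuous on $[p_0,\psole)$, and given by yet another expression on $[0,p_0)$, with potential jumps at $p_0$ and $\psole$. I would restrict attention to the compact interval $[0,\theta]$ (any $p_{\MO}>\theta$ yields zero own demand and is dominated by $p_{\MO}=\theta$ or by $q_{\MO}=0$), check that the one-sided limits at the two breakpoints exist, and conclude via an upper-semicontinuity argument that the supremum is achieved at some $p_{\MO}^*$. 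The one genuine caveat is the \textsc{Induce Wait} region, where the set of inventories that keep $\IS$ waiting is left-open at $q^{\dagger}(p_{\MO})$, so the relevant inventory is the limiting object $\qthresheps(p_{\MO})=q^{\dagger}(p_{\MO})-\epsilon$ that is already built into Lemma~\ref{lemma:qstar_A}; modulo this standard $\epsilon$-convention, the equilibrium in the statement is exact.
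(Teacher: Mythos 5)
Your proposal is correct and follows essentially the same route as the paper: the theorem is obtained by backward induction, with item~2 coming directly from Proposition~\ref{prop:optimal_qB} and Lemma~\ref{lemma:best_response_intensity}, and item~1 from reducing $\MO$'s two-dimensional problem to a one-dimensional price search via Lemma~\ref{lemma:qstar_A}. Your additional care about attainment of the outer maximum on the piecewise-defined objective and about the $\epsilon$-convention at $\qthresh(p_{\MO})$ is more explicit than what the paper provides (which treats the theorem as an immediate corollary of the lemmas), but it does not change the argument.
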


\begin{remark} \label{remark:efficient_computation_of_equilibrium}
    $p_{\MO}^*$ can be identified by solving at most three single-variable optimization problems and comparing the resulting utilities. 
    Reading off from Lemma \ref{lemma:qstar_A}, we see that we simply have to compare the optimal utilities for 
    (1) any $p_{\MO}$ and $q_{\MO} = 0$; (2) $p_{\MO} \in [p_0, \psole)$ and $q_{\MO} =\qthresh(p_{\MO})$;
    (3) $p_{\MO} \in [p_0, \psole)$ and $q_{\MO} = \qthresheps(p_{\MO})$; and (4) $p_{\MO} \in [0, p_0)$ and $q_{\MO} = Q(p_{\MO})$.
    $\MO$'s utility for (1) is 
    $u_{\MO} = (\alpha \psole + k) Q(\psole)$, which is constant in $p_{\MO}$. The optimal prices for (2),(3), and (4) can be easily identified using a numerical solver. More details are provided in Appendix \ref{sec:equilibrium_APPENDIX}.
\end{remark}

Recall that our initial motivation is to understand when it is optimal for the marketplace operator to induce competition. 
Is it ever optimal for $\MO$ to induce $\IS$ to compete? Yes. Is it always optimal? No. 

\begin{remark}
    There exist game parameters that make each of the marketplace operator's possible strategies optimal (\textsc{Induce Abstain}, \textsc{Induce Compete}, and \textsc{Induce Wait}). 
\end{remark}

This is visualized in Figure \ref{fig:cA_cB_phase_diagram}. Here, and in all subsequent plots, we use $\theta=10$. We mark $\theta(1-\alpha)$ on the $c_{\IS}$ axis because this is a critical threshold at which $p_0 = \theta$; above this threshold, $\IS$ always abstains regardless of what $\MO$ does, as described in Section \ref{sec:preliminaries}.
In Figure \ref{fig:cA_cB_phase_diagram}, we observe that when $c_{\MO}$ is high and $c_{\IS}$ is low, $\MO$ will prefer to let $\IS$ sell, so they induce $\IS$ to compete. When $c_{\IS}$ is relatively large, $\MO$ will prefer to sell themselves and $\IS$ will be induced to abstain. There is also a small region where $\MO$ and $\IS$ have roughly equal costs in which $\MO$ finds it optimal to induce $\IS$ to wait. 
Relating back to the concept of de-monopolization, note that the red and yellow regions correspond to regions where $\IS$ is de-monopolized at equilibrium. 
Figure \ref{fig:cA_cB_prices_and_quantities} visualizes the corresponding prices and quantities. We observe that for the ($c_{\MO}$, $c_{\IS}$) combinations that fall in the red ``$\IS$ competes'' region in Figure \ref{fig:cA_cB_phase_diagram}, 
$\MO$ induces compete by buying only a small quantity, which is enough to induce $\IS$ to set a slightly lower price than they would otherwise. For example, when $c_{\MO} = 3$ and $c_{\IS}=1$, $\MO$ will set $p_{\MO} = 4.39$ and $q_{\MO}=0.35$, which causes $\IS$ to respond by setting $p_{\IS}=4.39$ and $q_{\IS}=5.61$.
Note that 4.39 is lower than $\IS$'s sole-seller price of $\psole = 0.5 (\frac{1}{1-0.2} + 10) = 5.625$.

Using the derived equilibrium, we can also answer practical questions such as \emph{(i) how should the marketplace operator set $\alpha$?} and \emph{(ii) should the marketplace operator behave differently when selling a product with a larger impact on customer experience (large $k$)?} Answers to these questions are provided in Appendix \ref{sec:applying_equilibrium_APPENDIX}.

\begin{figure}[h!]
\centering
\begin{subfigure}{.45\textwidth}
  \centering
\includegraphics[width=0.7\textwidth]{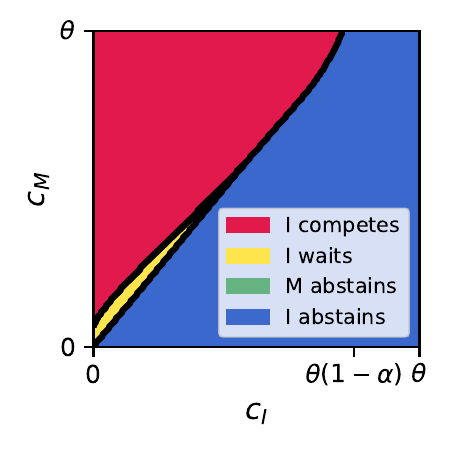}
    \vspace{-5pt}
  \caption{High-level equilibrium characterization}
  \label{fig:cA_cB_phase_diagram}
\end{subfigure}%
\begin{subfigure}{.45\textwidth}
  \centering
  \includegraphics[width=0.9\textwidth]{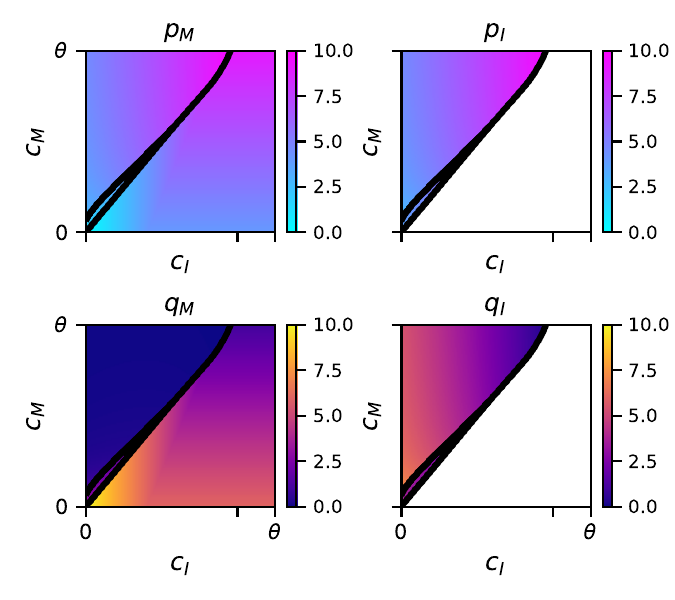}
  \vspace{-5pt}
  \caption{Equilibrium prices and quantities}
  \label{fig:cA_cB_prices_and_quantities}
\end{subfigure}

\vspace{-5pt} 
    \caption{Equilibrium characterization for each combination ($c_{\MO}, c_{\IS}$) when $\alpha=0.2$ and $k=2$. 
    In the left plot,
    red denotes cost combinations where it is optimal for $\MO$ to induce $\IS$ to compete; blue regions are where $\MO$ sets a price that causes $\IS$ to abstain, yellow regions are where $\MO$ finds it optimal to induce $\IS$ to wait; and green is used to denote regions where $\MO$ chooses to abstain, but no such regions exist in the plots. The set of plots on the right show the corresponding equilibrium prices and quantities.} 
\end{figure}

\section{Implications for Consumers} \label{sec:welfare}

A key question is how consumers are affected by the participation of the marketplace operator in their own market. This motivates us to analyze consumer surplus, which is defined as the sum of excess value that customers who are able to buy the good obtain from purchasing it. Let $P$ be the inverse demand function, where $P(q)$ is the price at which the demand $Q$ is exactly $q$.  Since we assume the demand function is $Q(p) = \theta - p$ for $p \in [0,\theta]$ (Assumption \ref{assumption:linear_demand}), the inverse demand function is $P(q) = \theta - q$. Thus, we have $Q(t) = P(t)$ for $t \in [0,\theta]$. When $q \le Q(p)$ units of the product are available at price $p$, the consumer surplus is
\begin{align} \label{eq:CS_one_seller}
    \CS = \int_{0}^{q} \big(P(t) - p\big)\,dt = \int_{0}^{q} \big(Q(t) - p\big)\,dt.
\end{align}
When the product is available at two prices, $p_i < p_j$, with $q_i \le Q(p_i)$ units at price $p_i$ and $q_j = Q(p_j) - q_i$ units at price $p_j$ (i.e., the higher-priced seller exactly meets the residual demand), the consumer surplus is
\begin{align} \label{eq:CS_two_sellers}
    \CS = \int_{0}^{q_i} \big(P(t) - p_i\big)\,dt 
    + \int_{q_i}^{Q(p_i)} \big(P(t) - p_j\big)\,dt
    =
    \int_{0}^{q_i} \big(Q(t) - p_i\big)\,dt 
    + \int_{q_i}^{Q(p_i)} \big(Q(t) - p_j\big)\,dt.
\end{align}

We use $\CS(p_{\MO}, q_{\MO}, p_{\IS}, q_{\IS})$ to denote the consumer surplus when the marketplace operator plays $(p_{\MO}, q_{\MO})$ and the independent seller plays $(p_{\IS}, q_{\IS})$.
Let
$\CSsole := \CS(\infty, 0, \psole, Q(\psole))$ be the consumer surplus when the marketplace operator does not participate as a seller and let $\uISsole := u_{\IS}(0, 0, \psole, Q(\psole))$ be the corresponding independent seller utility. The following lemma tells us that the entry of the marketplace operator has the pro-consumer effect of transferring surplus from the independent seller to consumers. 

\begin{lemma}[Surplus transfer] \label{lemma:surplus_transfer}
For any $p_{\MO}$ and $q_{\MO}$ (including the equilibrium values), as long as the independent seller best responds, any decrease in the independent seller's surplus caused by the marketplace operator's entry is made up for by an increase in the consumer surplus, that is,
$$- \Delta \PS \leq \Delta \CS,$$
where $\Delta \PS = u_{\IS}\big(p_{\MO}, q_{\MO}, \pBR(p_{\MO}, q_{\MO}), \qBR(p_{\MO}, q_{\MO})\big) - \uISsole$ and $\Delta \CS = \CS\big(p_{\MO}, q_{\MO}, \pBR(p_{\MO}, q_{\MO}), \qBR(p_{\MO}, q_{\MO})\big) - \CSsole$.
\end{lemma}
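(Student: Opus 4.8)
The plan is to reduce the claim to an additive statement and then verify it case by case. Writing $u_{\IS}^{\BR}$ and $\CS^{\BR}$ for the independent-seller utility and the consumer surplus evaluated at $\big(p_{\MO},q_{\MO},\pBR(p_{\MO},q_{\MO}),\qBR(p_{\MO},q_{\MO})\big)$, we have $\Delta\PS+\Delta\CS=\big(u_{\IS}^{\BR}+\CS^{\BR}\big)-\big(\uISsole+\CSsole\big)$, so $-\Delta\PS\le\Delta\CS$ is equivalent to $u_{\IS}^{\BR}+\CS^{\BR}\ge\uISsole+\CSsole$: the marketplace operator's entry never destroys combined (consumer $+$ independent-seller) surplus. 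The only channel that can leak surplus relative to the sole-seller benchmark is the referral fee, so I would check regime by regime that the price cuts and volume expansions $\MO$'s presence induces always overcompensate, splitting along the five branches of \Cref{lemma:best_response_intensity} and using throughout the linear-demand identities $Q(p)=\theta-p$, $P(q)=\theta-q$ (so \eqref{eq:CS_one_seller} equals $\tfrac12 Q(p)^2$) together with $q^{\star}:=Q(\psole)$, for which $\theta-p_0=2q^{\star}$, $\uISsole=(1-\alpha)(q^{\star})^{2}$, and $\CSsole=\tfrac12(q^{\star})^{2}$.

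Three branches are easy. \emph{Monopolistic \textsc{Compete}} ($p_{\MO}\ge\psole$): $\pBR=\psole$, $\qBR=Q(\psole)$, the residual left to $\MO$ is $Q(p_{\MO})-Q(\psole)\le0$, so $\MO$ sells nothing and the outcome is the sole-seller outcome, hence $\Delta\PS=\Delta\CS=0$. \emph{De-monopolized \textsc{Compete}} ($p_{\MO}\in[p_0,\psole)$, $q_{\MO}\ge\qthresh(p_{\MO})$): $\IS$ matches $p_{\MO}$, wins the tie, sells $q:=Q(p_{\MO})$, and $\MO$ sells nothing; using $p_{\MO}-p_0=2q^{\star}-q$ gives $-\Delta\PS=(1-\alpha)(q-q^{\star})^{2}$ and $\Delta\CS=\tfrac12(q^{2}-(q^{\star})^{2})$, and since $q\in(q^{\star},2q^{\star}]$ (the upper bound from $p_{\MO}\ge p_0$), cancelling $q-q^{\star}>0$ reduces the claim to $(1-\alpha)(q-q^{\star})\le\tfrac12(q+q^{\star})$, which holds as $(1-\alpha)(q-q^{\star})\le q^{\star}\le\tfrac12(q+q^{\star})$. \emph{\textsc{Abstain}} ($p_{\MO}<p_0$, $q_{\MO}\ge\qddag$): $\IS$ sells nothing, so $-\Delta\PS=(1-\alpha)(q^{\star})^{2}$, while $\MO$ alone sells $Q(p_{\MO})$ at $p_{\MO}$, so $\Delta\CS=\tfrac12 Q(p_{\MO})^{2}-\tfrac12(q^{\star})^{2}$; since $p_{\MO}<p_0$ forces $Q(p_{\MO})>2q^{\star}$, we get $\Delta\CS>\tfrac32(q^{\star})^{2}\ge(1-\alpha)(q^{\star})^{2}$.

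The remaining branch, \textsc{Wait} ($\pBR=p_{\IS}^{W}=\psole-q_{\MO}/2>p_{\MO}$), is the one I expect to be the main obstacle. Under intensity rationing $\MO$ is the low-priced seller, faces original demand $Q(p_{\MO})>q_{\MO}$, hence sells its full inventory $q_{\MO}$ to the highest-valuation customers, and $\IS$ meets the residual $R(p_{\IS}^{W})=Q(p_{\IS}^{W})-q_{\MO}=q^{\star}-q_{\MO}/2$ (positive since $q_{\MO}<\qddag=2q^{\star}$); then $u_{\IS}^{\BR}=(1-\alpha)(q^{\star}-q_{\MO}/2)^{2}$, so $-\Delta\PS=(1-\alpha)q_{\MO}(q^{\star}-\tfrac14 q_{\MO})$, and computing the consumer surplus with two prices $p_{\MO}<p_{\IS}^{W}$ (via \eqref{eq:CS_two_sellers}, $q_{\MO}$ units at $p_{\MO}$ and $R(p_{\IS}^{W})$ at $p_{\IS}^{W}$) gives $\Delta\CS=q_{\MO}(Q(p_{\MO})-\tfrac12 q^{\star}-\tfrac38 q_{\MO})$. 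Cancelling $q_{\MO}>0$ turns the claim into a linear inequality in $Q(p_{\MO})$. If $p_{\MO}<p_0$ the free bound $Q(p_{\MO})>2q^{\star}$ closes it just as in the \textsc{Abstain} case; the hard sub-case is $p_{\MO}\in[p_0,\psole)$, where only $Q(p_{\MO})>q^{\star}$ is free — which does not suffice — so one must instead use the \textsc{Wait} condition $q_{\MO}<\qthresh(p_{\MO})$ itself. Writing $x:=p_{\MO}-p_0\ge0$ and $y:=\theta-p_{\MO}=Q(p_{\MO})$, one has $\qthresh(p_{\MO})=x+y-2\sqrt{xy}=(\sqrt y-\sqrt x)^{2}$, so admitting a small $q_{\MO}$ forces $\sqrt y$ close to $\sqrt x$, i.e. a quantitative lower bound on $Q(p_{\MO})$ in terms of $q_{\MO}$ and $q^{\star}$. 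Substituting $p_{\MO}=\psole-\delta$ and rewriting $\qthresh(p_{\MO})$ and the target inequality as functions of $\delta$ (and $q^{\star},\alpha$) leaves a single one-variable polynomial inequality to check over $q_{\MO}\in(0,\qthresh(p_{\MO}))$. That last reduction is the crux — it is the only place where the inequality is not comfortably slack, so it is where the careful bookkeeping has to be done.
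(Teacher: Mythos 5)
Your reduction to $u_{\IS}^{\BR}+\CS^{\BR}\ge\uISsole+\CSsole$ and the three branches you close (monopolistic \textsc{Compete}, de-monopolized \textsc{Compete}, \textsc{Abstain}) are correct, and your algebra for the \textsc{Wait} branch — $-\Delta\PS=(1-\alpha)q_{\MO}\big(q^{\star}-\tfrac14 q_{\MO}\big)$ and $\Delta\CS=q_{\MO}\big(Q(p_{\MO})-\tfrac12 q^{\star}-\tfrac38 q_{\MO}\big)$ — checks out. The genuine gap is exactly the step you defer: the sub-case $p_{\MO}\in[p_0,\psole)$ with $q_{\MO}<\qthresh(p_{\MO})$ is not remaining bookkeeping, because the inequality actually fails there, so no manipulation of $\qthresh$ will close it. Take $\theta=10$, $\alpha=0$, $c_{\IS}=0$ (so $p_0=0$, $\psole=5$, $q^{\star}=5$, $\uISsole=25$, $\CSsole=12.5$) and $(p_{\MO},q_{\MO})=(4,\,0.2)$. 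Then $\qthresh(4)=10-2\sqrt{24}\approx0.202>0.2$, so the best response is indeed to wait at $p_{\IS}^{W}=4.9$ with $\qBR=4.9$; one computes $u_{\IS}^{\BR}=24.01$ and $\CS^{\BR}=1.18+12.005=13.185$, hence $-\Delta\PS=0.99$ but $\Delta\CS=0.685$. In your cancelled form the requirement reads $(1-\alpha)\big(q^{\star}-\tfrac14 q_{\MO}\big)\le Q(p_{\MO})-\tfrac12 q^{\star}-\tfrac38 q_{\MO}$, i.e.\ $4.95\le 3.425$. The mechanism is that $\MO$ extracts $p_{\MO}q_{\MO}=0.8$ of revenue from the market while the price cuts it induces create only $0.495$ of additional gross surplus, so the combined $\CS+u_{\IS}$ falls by $0.305$.

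For what it is worth, the paper's own proof stumbles at the same spot: its Case 3 asserts
\begin{equation*}
\int_0^{q_{\MO}}\big(Q(t)-p_{\MO}\big)\,dt+\int_{q_{\MO}}^{Q(p_{\IS})}\big(Q(t)-\alpha p_{\IS}-c_{\IS}\big)\,dt\;\ge\;\int_0^{Q(\psole)}\big(Q(t)-\alpha\psole-c_{\IS}\big)\,dt
\end{equation*}
``because $p_{\MO}<\psole$,'' but dominating the first integrand would require $p_{\MO}\le\alpha\psole+c_{\IS}$, which is precisely what fails in the example above. So your instinct that this is the only non-slack case was right; the honest conclusion of your case analysis is that the \textsc{Wait} branch needs an additional hypothesis (e.g.\ restricting to the \textsc{Compete}/\textsc{Abstain} regimes, crediting $\MO$'s sales revenue to $\Delta\PS$, or showing the offending $(p_{\MO},q_{\MO})$ never arise at equilibrium) rather than a sharper bound.
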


The following related lemma states that the entry of marketplace operator increases consumer surplus.

\begin{lemma}[Consumer surplus increases] \label{prop:CS_increases}
    For any $p_{\MO}$ and $q_{\MO}$ (including the equilibrium values), as long as the independent seller best responds, the consumer surplus will be at least as high as if the marketplace operator did not participate as a seller: 
    $$\CS\big(p_{\MO}, q_{\MO}, \pBR(p_{\MO}, q_{\MO}), \qBR(p_{\MO}, q_{\MO})\big) \geq \CSsole \qquad \forall p_{\MO}, q_{\MO}. $$
    The inequality is strict whenever $p_{\MO} < \psole$. 
\end{lemma}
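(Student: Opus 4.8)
My plan is to treat the two halves of the statement separately. The weak inequality $\CS \ge \CSsole$ will follow almost immediately from \Cref{lemma:surplus_transfer} once I show that $\MO$'s entry never makes the independent seller better off, i.e. $\Delta\PS \le 0$; the strictness claim for $p_{\MO} < \psole$ I will get from a short branch-by-branch inspection of the realized outcome using \Cref{lemma:best_response_intensity}.

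For $\Delta\PS \le 0$, the key observation is that $\MO$'s presence only contracts $\IS$'s demand: under intensity rationing $D_{\IS}(p_{\IS};q_{\MO},p_{\MO})$ equals $Q(p_{\IS})$ when $p_{\IS}\le p_{\MO}$ and equals $Q(p_{\IS})-q_{\MO}\le Q(p_{\IS})$ when $p_{\IS}>p_{\MO}$, so it is always at most $Q(p_{\IS})$. Since $\IS$ can always abstain, its best-response utility is nonnegative, so it suffices to consider prices $p_{\IS}\ge p_0$; there the per-unit margin $(1-\alpha)p_{\IS}-c_{\IS}$ is nonnegative, and multiplying it by the (weakly smaller) demand gives $u_{\IS}(p_{\IS},p_{\MO},q_{\MO})\le((1-\alpha)p_{\IS}-c_{\IS})Q(p_{\IS})$ for each such $p_{\IS}$. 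Taking the supremum, $u_{\IS}\big(p_{\MO},q_{\MO},\pBR,\qBR\big)\le\max_{p\ge p_0}((1-\alpha)p-c_{\IS})Q(p)=\uISsole$, where the last equality uses that the concave parabola $p\mapsto((1-\alpha)p-c_{\IS})Q(p)$ is maximized at $\psole=\tfrac12(p_0+\theta)$, which lies in $[p_0,\theta]$ in the regime $p_0\le\theta$ we study. Hence $\Delta\PS\le 0$, and \Cref{lemma:surplus_transfer} yields $\Delta\CS\ge-\Delta\PS\ge0$.

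For strictness, fix $p_{\MO}<\psole$ and $q_{\MO}>0$ (so that $\MO$ genuinely participates) and read the outcome off \Cref{lemma:best_response_intensity}: the monopolistic branch is excluded, so $\IS$ either competes at $p_{\MO}$ (in which case $\MO$'s residual demand is $0$ and all $Q(p_{\MO})$ units clear at $p_{\MO}$), or waits at $p_{\IS}^W=\psole-\tfrac{q_{\MO}}{2}$ (in which case $\MO$ sells $q_{\MO}$ at $p_{\MO}$ and $\IS$ sells the residual $Q(p_{\IS}^W)-q_{\MO}$ at $p_{\IS}^W$, using \Cref{prop:optimal_qB}), or abstains (in which case $\MO$ sells $\min(q_{\MO},Q(p_{\MO}))$ at $p_{\MO}$). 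Let $M_0=Q(\psole)$ and let $\pi(t)$ be the price paid by the $t$-th highest-valuation buyer in the realized outcome, so $\CS=\int_0^{M}(\theta-t-\pi(t))\,dt$ with $M$ the total mass served, while $\CSsole=\int_0^{M_0}(\theta-t-\psole)\,dt$. In each branch I will verify that (i) $M\ge M_0$; (ii) $\pi(t)<\psole$ for every $t\in[0,M_0]$ (the only prices present are $p_{\MO}$ and $p_{\IS}^W$, both strictly below $\psole$); and (iii) every additional buyer $t\in(M_0,M]$ has nonnegative surplus, because under intensity rationing such a buyer is served only if her valuation is at least the price she pays. Given (i)--(iii), comparing the two integrals on $[0,M_0]$ termwise and discarding the nonnegative integrand on $(M_0,M]$ gives $\CS>\CSsole$, the strictness coming from $\pi(t)<\psole$ on the positive-measure set $[0,\min(q_{\MO},M_0)]$ (note $M_0=\theta-\psole>0$ since $\psole<\theta$ when $p_0<\theta$).

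The bookkeeping in (i)--(iii) is where care is required, and it all reduces to elementary consequences of $\psole=\tfrac12(p_0+\theta)$ and the side conditions of each branch: $q_{\MO}\le Q(p_{\MO})$ (automatic whenever $\IS$ waits, since then $q_{\MO}<\qthresh(p_{\MO})\le Q(p_{\MO})$ or $q_{\MO}<\qddag(p_{\MO})\le Q(p_{\MO})$) so that $\MO$'s own buyers value the good at least $p_{\MO}$; $p_0\le p_{\IS}^W$ and $p_{\MO}<p_{\IS}^W$ so the wait branch is well-posed with nonnegative-surplus buyers; $Q(p_{\IS}^W)=\theta-p_{\IS}^W\ge\theta-\psole=M_0$ so no baseline buyer is priced out; and $q_{\MO}\ge\qddag(p_{\MO})=\theta-p_0>\theta-\psole=M_0$ in the abstain branch, which gives (i) directly. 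I expect the only mildly delicate point to be establishing $\pi(t)<\psole$ uniformly on $[0,M_0]$ in the wait branch, which needs exactly the inequality $Q(p_{\IS}^W)\ge M_0$ so that the top $M_0$ buyers are split only among the prices $p_{\MO}$ and $p_{\IS}^W$. Finally I note that if $q_{\MO}=0$ then $\CS=\CSsole$ exactly, so the strictness clause must be read as presupposing $q_{\MO}>0$, i.e. that $\MO$ in fact stocks a positive quantity.
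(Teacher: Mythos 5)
Your proof is correct, but it reaches the weak inequality by a genuinely different route than the paper. The paper argues directly and geometrically: in each branch of $\IS$'s best response it observes that at least as many units are sold and every unit is sold at a price no higher than $\psole$, so the area between the demand curve and the realized prices can only grow (illustrated in \Cref{fig:CS_increases_proof}). You instead first prove $\Delta\PS \le 0$ — a clean monotonicity argument: $D_{\IS}(\cdot;q_{\MO},p_{\MO}) \le Q$ pointwise, the margin is nonnegative on $p_{\IS}\ge p_0$, and abstention caps the downside at zero, so $\IS$'s best-response utility cannot exceed $\uISsole$ — and then invoke \Cref{lemma:surplus_transfer} to get $\Delta\CS \ge -\Delta\PS \ge 0$. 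This is logically sound (no circularity: the paper's proof of \Cref{lemma:surplus_transfer} does not rely on \Cref{prop:CS_increases}), and it buys you a result of independent interest ($\MO$'s entry never benefits $\IS$) essentially for free; the cost is that the strictness claim still forces you back into the same branch-by-branch bookkeeping the paper uses, so the two approaches converge there. Your execution of that bookkeeping is in fact more careful than the paper's: the paper's Case 3 asserts $p_{\IS}\le p_{\MO}$ "whenever they are not abstaining," which is wrong in the wait branch (there $p_{\IS}^W = \psole - q_{\MO}/2 > p_{\MO}$), though its figure and conclusion are still right since $p_{\IS}^W < \psole$. Finally, your observation that strictness fails when $q_{\MO}=0$ (e.g.\ $p_{\MO}\in[p_0,\psole)$ with $q_{\MO}=0$ makes $\IS$ wait at $p_{\IS}^W=\psole$, reproducing the sole-seller outcome exactly) is a genuine, if minor, gap in the lemma as stated — the strict inequality should be conditioned on $q_{\MO}>0$ as well as $p_{\MO}<\psole$.
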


Proofs of these results, as well as additional investigations into total welfare, are provided in Appendix \ref{sec:welfare_APPENDIX}.

\section{Robustness of Equilibrium Analysis}

We have thus far focused on the intensity rationing assumption applied to perfectly substitutable goods. In this section we extend our analysis to proportional rationing and imperfectly substitutable goods.

\subsection{Proportional rationing} \label{sec:proportional_rationing}

Intensity rationing assumes that when there is a limited supply of the good at a lower price, the customers who are able to buy at the lower price are the ones with the highest valuation. We now consider an alternative demand splitting rule called \emph{proportional rationing} (also known as Beckmann rationing; see \citet{davidson1986long}): Suppose that every customer has the same probability $\rho$ of being assigned to the lower-priced seller, where $\rho$ is such that in expectation the demand that the lower-price seller faces at their price $p_j$ is exactly equal to their supply $q_j$. This yields the residual demand function
    $$R(p_i) = Q(p_i)  \left(1-\frac{q_j}{Q(p_j)}\right)$$
for the higher-priced seller $i$.
This residual demand function is visualized in Figure \ref{fig:R_proportional}. 
As with intensity rationing, we can derive the independent seller's best response function and use this to compute the equilibrium, which we visualize in Figures \ref{fig:best_response_proportional} and \ref{fig:cA_cB_phase_diagram_proportional}. The formal analysis and precise expressions are deferred to Appendix \ref{sec:proofs_APPENDIX}.
For the most part, the results are similar to those under intensity rationing, except the boundaries between regimes have shifted slightly.
One notable difference is that under proportional rationing, $\IS$ is no longer de-monopolized when they wait. Under intensity rationing, $\IS$'s price when waiting is $\psole - q_{\MO}/2$ but under proportional rationing, it is simply their sole-seller price $\psole$.

One limitation of the proportional rationing assumption is that it is not straightforward to consider consumer surplus in this setting; under intensity rationing, consumer surplus is simply the area between the demand function $Q(p)$ and the purchase price, but this is no longer true under proportional rationing. Thinking more carefully about how to compute consumer surplus under proportional rationing, as well as the implications of a participatory $\MO$ in such settings, is a worthwhile direction for future work.

\begin{figure}[h!]
\centering
\begin{subfigure}{.32\textwidth}
  \centering
    \begin{tikzpicture}[scale=0.8]
    \draw[->] (0,0) -- (4.5,0);
    \draw[->] (0,0) -- (0,4.5);

    \node[below] at (2.25,-0.5) {Quantity};
    \node[rotate=90] at (-0.8,2.25) {Price};

    \def\thetaval{4}
    \def\qA{1}

    \draw[thick, domain=0:\thetaval] plot (\x,{\thetaval - \x});

    \draw[blue, very thick, dashed] (0, \thetaval) -- (2, 0);
    

    \node at (\thetaval,-0.35) {$\theta$};
    \node at (-0.3,\thetaval) {$\theta$};
    \node at (2,2.8) {$Q(p_{\IS})$};
    \node[blue] at (.8,2 - \qA) {$R(p_{\IS})$};
    \node[blue] at (2, -0.3) {$\theta \big(1-\frac{q_{\MO}}{Q(p_{\MO})}\big)$};
    \node[white] at (3,-.35) {};
\end{tikzpicture}
  \vspace{-20pt}
  \caption{Residual demand function}
  \label{fig:R_proportional}
\end{subfigure}
\begin{subfigure}{.32\textwidth}
  \centering
\includegraphics[width=\textwidth]{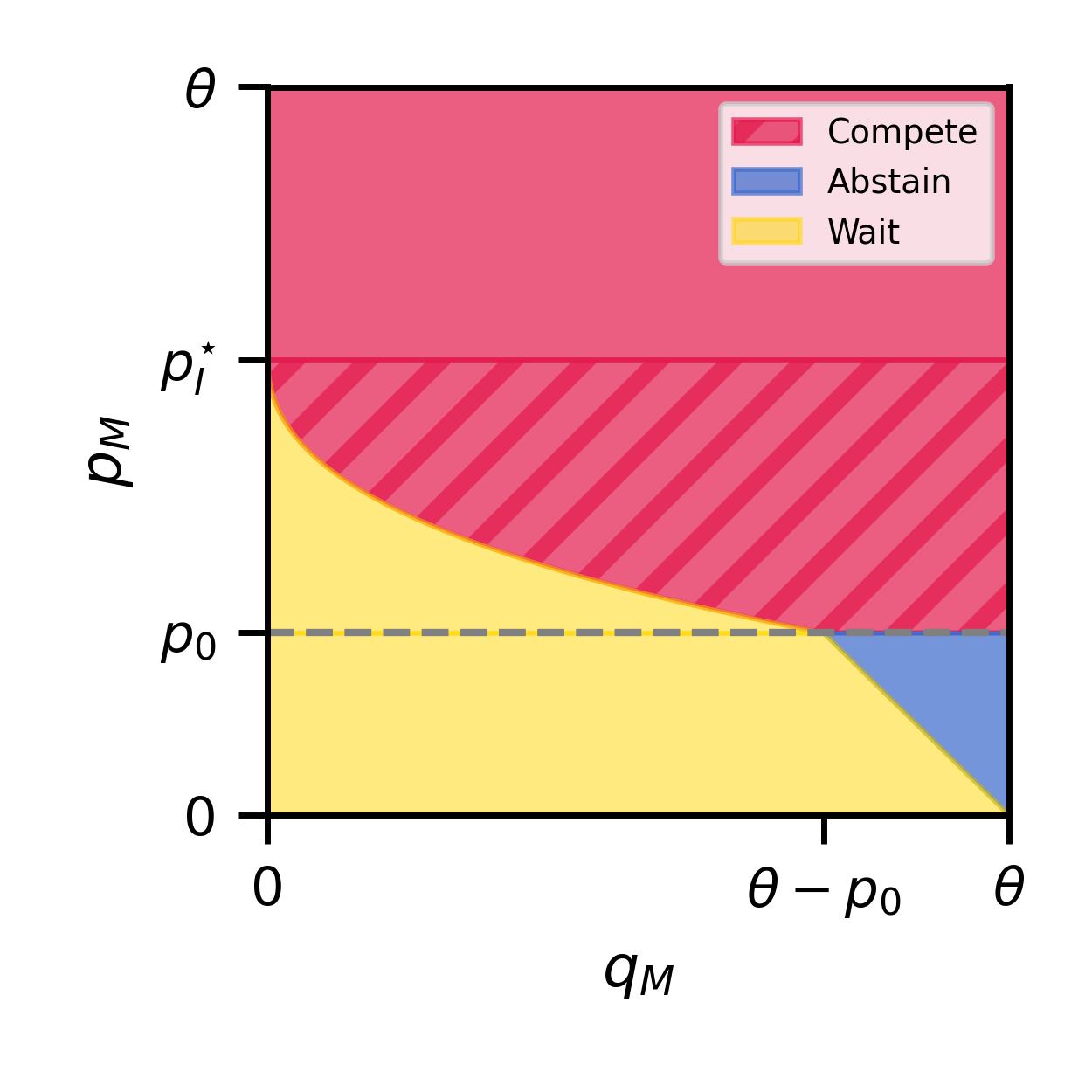}
    \vspace{-20pt}  
  \caption{$\IS$'s best response}
  \label{fig:best_response_proportional}
\end{subfigure}
\begin{subfigure}{.32\textwidth}
  \centering
  \includegraphics[width=\textwidth]{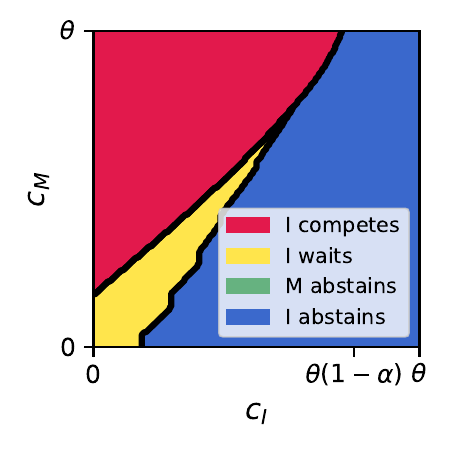}
   \vspace{-20pt} 
  \caption{Equilibrium characterization}
  \label{fig:cA_cB_phase_diagram_proportional}
\end{subfigure}
\vspace{-5pt} 
\caption{Residual demand function, $\IS$'s best response, and equilibrium characterization under \textbf{proportional rationing}. The game parameters in Figures \ref{fig:best_response_proportional} and \ref{fig:cA_cB_phase_diagram_proportional} are set to the same values as Figures \ref{fig:best_response_intensity} and \ref{fig:cA_cB_phase_diagram}. The stripes in Figure \ref{fig:best_response_proportional} denote regions where $\IS$ is de-monopolized.} 
\label{fig:proportional_rationing_plots}
\end{figure}

\subsection{Imperfect substitutability} 
\label{sec:imperfect_substitutability}

For the sake of simplicity and to capture main ideas without excessive complexity, we have thus far assumed that the two players sell perfectly substitutable products. However, in practice, sellers often sell products that are \emph{imperfectly substitutable}, meaning that a consumer may have an innate preference for one seller's product over the other's and thus does not always necessarily buy the cheaper product. 
Unfortunately, the Anna Karenina principle applies to substitutability: \emph{Perfectly substitutable goods are all alike; every imperfectly substitutable good is imperfectly substitutable in its own way.} 
More explicitly, with perfect substitutability, player $i$'s demand function $D_i$ can be expressed in terms of the original demand function $Q$ and residual demand function $R$, and there is a discontinuous jump from $Q$ to $R$ when player $i$'s price $p_i$ goes above player $j$'s price. However, under imperfect substitutability, $D_i$ may more generally depend continuously on $p_i - p_j$.\footnote{Traditional models of markets of differentiated products (see Chapter 6 of \citet{vives1999oligopoly}) generally start with defining the utility function $U(q_1, q_2)$ of a representative consumer and then maximizing the net utility: $\max_{q_1, q_2} U(q_1, q_2) - p_1 q_1 - p_2 q_2$. The first-order conditions yield the inverse demands $p_i = \pd{q_i}U(q_1, q_2)$ for $i=1,2$. Solving this system for $q_i$'s yields the direct demands. This approach is nice because it derives demand curves from first principles. Deriving equilibria for such models is out of scope for this paper but is a good direction for future work.} Since there is an infinite number of forms this dependence could take, it is difficult to derive general equilibrium results under imperfect substitutability. 
However, it is straightforward to show that, under any reasonable model of imperfect substitutability, it is always possible for the marketplace operator to de-monopolize the independent seller. In the imperfect substitutability setting, we redefine $\IS$'s sole-seller price to be $\psole = \arg \max_{p_{\IS}} D_{\IS}(p_{\IS}; 0, \infty) p_{\IS}$.

\begin{proposition} [Feasibility of de-monopolization under imperfect substitutability]
\label{prop:imperfect_substitutability_general_result}
\begin{samepage}
    Suppose $D_{\IS}(p_{\IS}; q_{\MO}, p_{\MO})$ satisfies the following conditions.
    \begin{enumerate}
        \item (Positive substitutability) There exists $\tilde p_{\MO}$ and $\tilde q_{\MO}$ such that
        $D_{\IS}(p_{\IS}; 0, \infty) > D_{\IS}(p_{\IS}; \tilde q_{\MO}, \tilde p_{\MO})$ for all $p_{\IS}$.
        \item (Nondecreasing gap) For fixed marketplace operator actions $(p_1, q_1)$ and $(p_2, q_2)$ with $p_1 > p_2$ and $q_1 < q_2$, the gap between the independent seller's demand functions is constant or increasing with $p_{\IS}$ — that is, $$\frac{d}{dp_{\IS}} \Big(D(p_{\IS}; q_1, p_1) - D(p_{\IS}; q_2, p_2)\Big) \geq 0$$
    \end{enumerate}
    Then, the independent seller's best response to  $(\tilde p_{\MO}, \tilde q_{\MO})$ will be to set a price strictly lower than their sole-seller price $\psole$. 
\end{samepage}
\end{proposition}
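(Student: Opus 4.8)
The plan is to reduce $\IS$'s decision to a one-dimensional price choice and then show that the resulting objective is maximized strictly to the left of $\psole$ once $\MO$ enters with $(\tilde p_{\MO},\tilde q_{\MO})$. First I would note that the argument behind \Cref{prop:optimal_qB} (``$\IS$ meets demand'') uses only the algebraic form of $u_{\IS}$ and not the structure of $D_{\IS}$, so it carries over unchanged: $\IS$'s best response is to set $q_{\IS}=D_{\IS}(p_{\IS};\tilde q_{\MO},\tilde p_{\MO})$ and to pick $p_{\IS}$ maximizing $\tilde r(p):=p\,D_{\IS}(p;\tilde q_{\MO},\tilde p_{\MO})$ (I work with the revenue objective, matching the redefinition of $\psole$ for this section; carrying the factor $(1-\alpha)p-c_{\IS}$ instead of $p$ through every step below changes nothing provided $\psole>p_0$). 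Write $\bar D(p):=D_{\IS}(p;0,\infty)$, $r_0(p):=p\,\bar D(p)$ so that $\psole=\arg\max_p r_0(p)$, and introduce the wedge $g(p):=\bar D(p)-D_{\IS}(p;\tilde q_{\MO},\tilde p_{\MO})$, so that $\tilde r(p)=r_0(p)-p\,g(p)$.

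Second I would extract two facts about $g$. The positive-substitutability hypothesis says exactly that $g(p)>0$ for all $p$. Applying the nondecreasing-gap hypothesis to the operator actions $(p_1,q_1)=(\infty,0)$ (non-participation) and $(p_2,q_2)=(\tilde p_{\MO},\tilde q_{\MO})$ — admissible since $\infty>\tilde p_{\MO}$ and $0<\tilde q_{\MO}$ — gives $g'(p)\ge 0$, i.e.\ $g$ is nondecreasing.

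Third, I would show $\arg\max_p\tilde r(p)<\psole$ in two moves. \textbf{(a) No maximizer exceeds $\psole$:} for $p>\psole$ we have $r_0(p)\le r_0(\psole)$ by definition of $\psole$, while $p\,g(p)>\psole\,g(\psole)$ because $p>\psole>0$ and $g(p)\ge g(\psole)>0$; hence $\tilde r(p)<\tilde r(\psole)$. \textbf{(b) $\psole$ is not a maximizer:} since $\psole$ is a finite interior maximizer of $r_0$ (assumed, with $r_0$ differentiable there under the mild smoothness implicit in ``any reasonable model''), the first-order condition gives $r_0'(\psole)=0$, so $\tilde r'(\psole)=r_0'(\psole)-g(\psole)-\psole\,g'(\psole)\le-g(\psole)<0$; thus $\tilde r$ is strictly decreasing at $\psole$, and some $p^{\star}<\psole$ satisfies $\tilde r(p^{\star})>\tilde r(\psole)$. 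Together with (a) this forces every maximizer of $\tilde r$ to lie strictly below $\psole$, which by the reduction in the first step is precisely the claim.

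The genuinely substantive part is the pairing of (a) and (b): positive substitutability alone only produces a pointwise-positive wedge, and it is the \emph{monotonicity} of $g$ that simultaneously dominates every candidate price above $\psole$ and makes $p\,g(p)$ increasing at $\psole$, which is what tips $\tilde r'(\psole)$ strictly negative. The remaining obstacles are bookkeeping: verifying that the proof of \Cref{prop:optimal_qB} does not quietly invoke perfect substitutability, confirming that the non-participation action $(\infty,0)$ is a legal input to the nondecreasing-gap condition, and invoking just enough regularity ($D_{\IS}(\cdot\,;q_{\MO},p_{\MO})$ continuous in price, $r_0$ differentiable with a finite interior maximizer, $\tilde r$ attaining its maximum) for the first-order condition to be meaningful.
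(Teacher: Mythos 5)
Your proof is correct and follows essentially the same route as the paper: both define the demand wedge $\delta(p) = D_{\IS}(p;0,\infty) - D_{\IS}(p;\tilde q_{\MO},\tilde p_{\MO})$, observe it is positive (hypothesis 1) and nondecreasing (hypothesis 2 applied to the pair $(\infty,0)$ and $(\tilde p_{\MO},\tilde q_{\MO})$), and then differentiate the perturbed revenue at $\psole$ to get $\tilde r'(\psole) = -\bigl(\delta(\psole) + \psole\,\delta'(\psole)\bigr) < 0$. Your step (a) — using monotonicity of the wedge to dominate every price above $\psole$ — is actually a step the paper's own proof omits (it only shows the derivative is negative at $\psole$, which by itself does not rule out a global maximizer above $\psole$), so your version is the more complete of the two.
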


The proof is given in Appendix \ref{sec:auxiliary_proofs_APPENDIX}.

\paragraph{A one-directional substitutability model.}
Although it is hard to derive equilibrium results that hold for every substitutability model, it is easy to extend our existing results to a one-directional substitutability model, which we now describe.
Let $\gamma \in [0,1]$ be a substitutability parameter, where $\gamma=1$ recovers perfect substitutes and $\gamma=0$ means that the two goods are not at all substitutable.
For each of the two rationing rules we have considered, we can obtain the imperfect substitutes version of the residual demand function for player $i$ by replacing $q_j$ with $\gamma q_j$. For intensity rationing, this yields $R(p_i; q_j) = \theta - p_i - \gamma q_j$. For proportional rationing, this yields $ R(p_i; q_j, p_j) = (\theta - p_i)\left(1-\frac{\gamma q_j}{Q(p_j)}\right)$.

\emph{Interpretation.}
Under this one-directional substitutability model, the effect of the inventory of the lower-priced seller $i$ on the demand for the higher-priced seller $j$ is now damped by a $\gamma$ factor. The demand for the lower-priced seller is not affected by $\gamma$; only the residual demand for the higher-priced seller is affected (hence ``one-directional'' imperfect substitutability). The interpretation is something like this: Each consumer initially wants both player $i$'s good and player $j$'s good and assigns the same valuation $v$ to both (this valuation is randomly sampled from $\mathrm{Unif}([0,\theta])$. If they buy the good from player $i$, then with probability $\gamma$ they will no longer want the good from player $j$. Otherwise, they keep their original valuation, and they buy from player $j$ if $v \geq p_j$. Which customers are able to buy at the lower price is determined by the rationing rule.

\begin{figure}[h]
\centering
\begin{subfigure}{.45\textwidth}
  \centering
  \includegraphics[width=.7\textwidth]{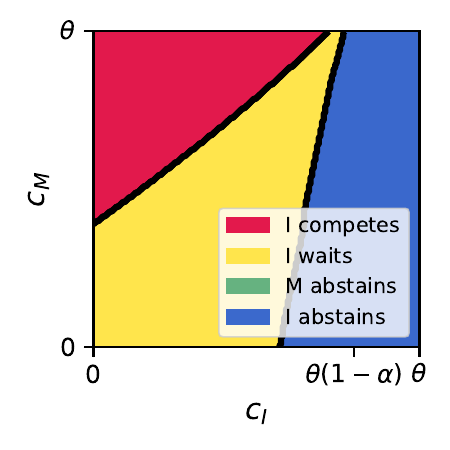}
  \vspace{-10pt}  
  \caption{Intensity}
\end{subfigure}
\begin{subfigure}{.45\textwidth}
  \centering
\includegraphics[width=.7\textwidth]{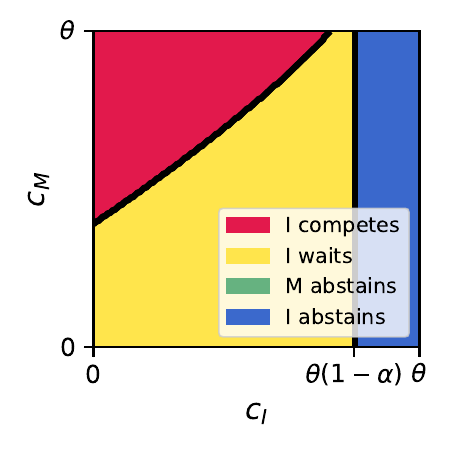}
    \vspace{-10pt}  
  \caption{Proportional}
\end{subfigure}
\caption{Equilibrium strategies under one-directional imperfect substitutability with $\gamma=0.5$
for different $c_{\MO}, c_{\IS}$ combinations in the same setting as Figures \ref{fig:cA_cB_phase_diagram} and \ref{fig:cA_cB_phase_diagram_proportional}.}
\label{fig:gamma=0.5_cAcB_phase_diagram}
\end{figure}

Applying results from Appendix \ref{sec:proofs_APPENDIX} allows us to solve for the equilibrium under one-directional imperfect substitutability. 
We visualize the equilibrium strategies for $\gamma=0.5$ in Figure \ref{fig:gamma=0.5_cAcB_phase_diagram}. Compared to Figures \ref{fig:cA_cB_phase_diagram} and \ref{fig:cA_cB_phase_diagram_proportional}, which are the same setting but with $\gamma=1$ (perfect substitutability), we see a larger yellow region (corresponding to $\IS$ waiting it out) in the imperfect substitutes setting.

\section{Conclusion}

Monopolist sellers have free rein when setting prices, which often leads to unhappy customers. Our paper shows that, when faced with such sellers, marketplace operators can apply competitive pressure by acting as a credible low-price competitor. This credibility is achieved by stocking sufficient inventory. We find that, even if this inventory is not sold, this operator-as-seller configuration is often utility-maximizing for the marketplace operator and is always beneficial to consumers. Since more inventory is being offered at lower prices, consumer surplus is necessarily increased.
Our analysis is robust to perturbations of our model (e.g., assumptions about demand rationing and substitutability). To summarize, market entry is a useful tool that marketplace operators can consider if they believe regulation to be too heavy-handed.

\paragraph{Future directions.}
We view our work as a first step towards understanding the operator-as-seller market structure. However, there is still much to be done. We focus on a simple model for the purpose of capturing the main ideas without being burdened by technical complexities, but there are multiple ways in which our model can be made more realistic or capture additional real-world scenarios.  We now list several. \emph{(1) Multiple third-party sellers.} Our current analysis relies on the assumption that once a seller runs out of inventory, all remaining customers can only buy from the remaining seller. Having multiple third-party sellers may give rise to new inter-seller dynamics. \emph{(2) Replacing $k$ with a multiplier of consumer surplus.} Recall that $k$ is meant to model the marketplace operator's extra benefit (beyond the immediate monetary benefit) that arises from a customer's satisfaction for having made a purchase. Our current model represents this as a flat per-unit utility. However, it is arguably more realistic to model $k$ as dependent on price or as a multiplier of consumer surplus, since a customer who buys a product at \$5 is likely happier than a customer who buys a product at \$10 and is more likely to return to the platform and make more purchases. \emph{(3) Positive salvage value.} This work implicitly assumes that a player's leftover inventory has zero salvage value. In reality, products that have steady demand generally have positive salvage value because excess inventory can be sold in the next time period. 
\emph{(4) Non-linear demand.} 
We assume a linear demand in our paper, and it would be useful to verify whether our findings hold under other demand functions (e.g., Cobb-Douglas) as a robustness check.
\emph{(5) Integer-constrained inventory.}
Our model assumes quantity is continuous and that fractional amounts of inventory can be ordered. In reality, inventory quantities are often integers. It is worth investigating whether anything unexpected happens if quantities are restricted to be integer-valued.
\emph{(6) Endogenous timing.} Suppose a platform can choose between the Stackelberg game that we analyze in this paper and a simultaneous game (where $\MO$ and $\IS$ make their decisions at the same time). Which game yields higher utility to $\MO$?  \emph{(7) Welfare under other rationing rules.} Investigating the welfare implications under rationing rules other than intensity rationing with perfect substitutability is important from a practical perspective, as many products have demand rationed in other ways. 

\newpage

\setlength{\bibsep}{4pt} 
\bibliographystyle{plainnat}
\bibliography{references}

\newpage
\appendix


\section{Proofs} 
\label{sec:proofs_APPENDIX}

In the main paper, we stated results for intensity rationing (\Cref{lemma:best_response_intensity}, \Cref{lemma:qstar_A}, \Cref{theorem:equilibrium_unconstrained_game}), proportional rationing (\Cref{sec:proportional_rationing}), and imperfect substitutability (\Cref{sec:imperfect_substitutability}) separately for the sake of presentation. However, these results all follow from the results we present in this section that characterize $\IS$'s best response and game equilibrium for general classes of residual demand functions. Observe that intensity rationing and proportional rationing are special cases of the following classes of residual demand functions.

\begin{enumerate}
    \item \textbf{Intensity rationing} (also known as efficient rationing): Assuming that the highest-valuation customers buy first (i.e., at the lower price) yields the residual demand function
    $$R(p_i) = Q(p_i) - q_j.$$
    This is a member of the class of \emph{subtractive residual demand functions}, which are functions that can be represented as
    $$R(p_i) = Q(p_i) - f(q_j)$$
    for some $f: [0, \theta] \to [0, \theta]$ that is an increasing function of $q_j$ satisfying $f(0) = 0$ and $f(q) \leq q$ for all $q \in [0, \theta]$.
    \item \textbf{Proportional rationing} (also known as Beckmann rationing; see \citet{davidson1986long}): Suppose that every customer has the same probability $\rho$ of being assigned to the lower-priced seller, where $\rho$ is such that in expectation the demand that the lower-price seller faces at their price $p_i$ is exactly equal to their supply $q_i$. This yields
    $$R(p_i) = Q(p_i)  \left(1-\frac{q_j}{Q(p_j)}\right).$$
    This is a member of the class of \emph{multiplicative demand functions}, which are functions that can be represented as
    $$R(p_i) = Q(p_i) g(q_j)$$
    where $g: [0, \theta] \to [0, 1]$ is a decreasing function of $q_j$ satisfying $g(0) = 1$. $g$ can also depend on $p_j$.\footnote{Note that, technically speaking, all of the expressions for residual demand functions should include a $\min(\mathrm{RHS}, 0)$ to ensure the non-negativity in Assumption \ref{assumption:R}(b) is satisfied. However, the assumptions we place on $f$ and $g$ ensure that $R(p_i)$ for subtractive or multiplicative residual demand functions satisfies non-negativity for any $q_j \in [0, Q(p_j)]$. In other words, $\min(\mathrm{RHS}, 0) = \mathrm{RHS}$ for any $q_j \in [0, Q(p_j)]$. We thus omit the $\min(\mathrm{RHS}, 0)$ since any utility-maximizing player $j$ will never set $q_j > Q(p_j)$.}
\end{enumerate}

Note that the one-directional imperfectly substitutable variants of intensity and proportional rationing described in \Cref{sec:imperfect_substitutability} are members of the additive and multiplicative demand function classes, respectively. Additive and multiple demand functions are just two examples of residual demand function classes. More generally, residual demand functions should satisfy the following conditions.

\begin{assumption} \label{assumption:R}
The residual demand function $R(p_i; q_j, p_j)$, where $p_i$ denotes the price of the higher-priced seller and $q_j$ and $p_j$ are the inventory and price of the lower-priced seller, satisfies
\begin{enumerate}[(a)]
    \item $R(p_i; q_j, p_j) \leq Q(p_i)$ for all $p_i\in[0,\theta]$ --- the residual demand cannot exceed the original demand at the same price. 
    \item $R(p_i; q_j, p_j) \geq 0$ for all $p_i\in[0,\theta]$ --- the residual demand cannot be negative.
    \item  $\pd{p_i}R(p_i;q_j, p_j) \leq 0$ --- as the higher-priced seller increases their price, their demand decreases or does not change. 
    \item $\pd{q_j}R(p_i;q_j, p_j) \leq 0$ --- if the lower-priced seller increases their inventory, this decreases or does not change the demand left over for the higher-priced seller.
\end{enumerate}
To avoid clutter we will sometimes omit the dependence of $R$ on $q_j$ and $p_j$ and simply write $R(p_i)$.
\end{assumption}


\subsection{Best response proofs} \label{sec:best_response_proofs_APPENDIX}

We will now characterize $\IS$'s best response when facing an additive or multiplicative residual demand function. We do so via three propositions. The first of these propositions describes the independent seller's optimal strategy in the straightforward case where $\MO$ has set a ``noncompetitive'' (too large) price. The result is intuitive: $\psole$ is the price that $\IS$ would set if the marketplace operator were not a seller, and when the operator does not apply any competitive pressure by setting a lower price, there is no reason for $\IS$ to not set their price to be $\psole$.

\begin{proposition} [Large $p_{\MO}$ best response]
    \label{prop:best_response_large_pA}
    Whenever $p_{\MO} \geq \psole$ ($\MO$'s price is higher than $\IS$'s optimal sole-seller price), $\IS$ passively competes by setting $p_{\IS} = \psole$.
\end{proposition}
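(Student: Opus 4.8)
The plan is to show that, among all prices $\IS$ could choose, $p_{\IS} = \psole$ maximizes the reduced objective $u_{\IS}(p_{\IS},p_{\MO},q_{\MO}) = [(1-\alpha)p_{\IS} - c_{\IS}]\, D_{\IS}(p_{\IS};q_{\MO},p_{\MO})$, where this reduction of $\IS$'s decision to a choice of price is justified by Proposition \ref{prop:optimal_qB}. Write $g(p) := [(1-\alpha)p - c_{\IS}]\,Q(p)$ for the sole-seller objective, so that $\psole = \arg\max_{p\ge 0} g(p)$ by definition, and recall that in the regime $p_0 \le \theta$ that we have restricted to, $\psole = \tfrac12(p_0+\theta)$, hence $p_0 \le \psole < \theta$ and in particular $\psole \in [0,p_{\MO}]$ since $p_{\MO}\ge\psole$ by hypothesis.

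First I would handle candidate prices $p_{\IS} \in [0,p_{\MO}]$. On this range $\IS$ faces the original demand, so $u_{\IS}(p_{\IS},p_{\MO},q_{\MO}) = g(p_{\IS})$; because $\psole$ globally maximizes $g$ and lies in $[0,p_{\MO}]$, we get $u_{\IS}(p_{\IS},p_{\MO},q_{\MO}) \le g(\psole) = u_{\IS}(\psole,p_{\MO},q_{\MO})$ for every such $p_{\IS}$. Next I would handle candidate prices $p_{\IS} > p_{\MO}$, where $\IS$ faces the residual demand and $u_{\IS}(p_{\IS},p_{\MO},q_{\MO}) = [(1-\alpha)p_{\IS} - c_{\IS}]\,R(p_{\IS};q_{\MO},p_{\MO})$. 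Here the key point is the sign of the per-unit margin: because $p_{\IS} > p_{\MO} \ge \psole \ge p_0 = c_{\IS}/(1-\alpha)$, we have $(1-\alpha)p_{\IS} - c_{\IS} > 0$, so multiplying the inequality $R(p_{\IS};q_{\MO},p_{\MO}) \le Q(p_{\IS})$ from Assumption \ref{assumption:R}(a) by this nonnegative margin preserves its direction and yields $u_{\IS}(p_{\IS},p_{\MO},q_{\MO}) \le g(p_{\IS}) \le g(\psole)$. Finally, abstaining ($p_{\IS}=\infty$) gives utility $0$, which is at most $g(\psole)$ since $\psole\ge p_0$ makes the margin nonnegative and $Q(\psole)=\tfrac12(\theta-p_0)\ge 0$.

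Combining the three cases shows $\psole$ attains the supremum of $u_{\IS}$, so it is a best response; because $\psole \le p_{\MO}$, selecting it corresponds to the \compete{} option — ``passive'' competition, in that $\IS$ does not undercut $\psole$ — and the stated tie-breaking conventions select exactly this choice. I expect the only real subtlety to be the margin-sign step in the $p_{\IS}>p_{\MO}$ case: one cannot simply discard the factor $(1-\alpha)p_{\IS}-c_{\IS}$ when comparing $R$ to $Q$, so it is essential to first establish $p_{\IS}>p_0$ via the chain $p_{\IS} > p_{\MO} \ge \psole \ge p_0$. Everything else is a direct consequence of the definition of $\psole$ and part (a) of Assumption \ref{assumption:R}.
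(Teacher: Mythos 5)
Your proof is correct and follows essentially the same route as the paper's: split into $p_{\IS}\le p_{\MO}$ (where $\psole$ is the unconstrained maximizer of the sole-seller objective and lies in the feasible range) and $p_{\IS}>p_{\MO}$ (where Assumption \ref{assumption:R}(a) gives $R\le Q$ and hence the compete utility dominates). Your explicit verification that the margin $(1-\alpha)p_{\IS}-c_{\IS}$ is positive before multiplying through $R\le Q$, and your handling of the abstain option, make rigorous two points the paper's proof uses only implicitly, but the argument is the same.
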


\begin{proof}[Proof of Proposition \ref{prop:best_response_large_pA}]

    We first argue that the optimal $p_{\IS} \leq p_{\MO}$ is $p_{\IS} = \psole$.
    Since we assumed $\psole \leq p_{\MO}$, we have $\psole = \arg \max_p ((1-\alpha)p - c_{\IS}) Q(p) = \arg \max_{p \leq p_{\MO}} ((1-\alpha)p - c_{\IS}) Q(p)$, where the first equality is the definition of $\psole$ and the second equality comes from the fact that the maximizer of a set is the maximizer of any subset containing the maximizer. 

     Let $u_{\IS}(p)$ denote $\IS$'s utility when they set price $p$ and buy inventory to exactly meet their demand at that price. We now show that $u_{\IS}(\psole) > u_{\IS}(p_{\IS})$ whenever $p_{\IS} \geq p_{\MO}$. Consider any $p_{\IS} \geq p_{\MO}$. Then,
    \begin{align*}
        u_{\IS}(\psole) &= [(1-\alpha)\psole - c_{\IS}]Q(\psole) \\
        &> [(1-\alpha)p_{\IS} - c_{\IS}]Q(p_{\IS}) \qquad \text{since $\psole$ is the unique maximizer} \\
        &\geq [(1-\alpha)p_{\IS} - c_{\IS}]R(p_{\IS}) \qquad \text{Assumption \ref{assumption:R}(a)} \\
        &= u_{\IS}(p_{\IS}).
    \end{align*}


    We have shown that $\psole$ is the optimal $p_{\IS}$ among $p_{\IS} \leq p_{\MO}$ and it achieves higher utility than any $p_{\IS} \geq p_{\MO}$, so it is the global maximizer of utility. 
\end{proof}

In the next proposition, we will consider the more interesting case where $p_{\MO}$ is at an intermediate level ($p_{\MO} \in [p_0, \psole)$), and $\MO$'s inventory level $q_{\MO}$ determines what strategy $\IS$ will take. In other words, $\IS$ must determine if it is worth it to wait it out or if, by the time $\MO$ sells out, there will be so little demand left that it would be better for $\IS$ to compete.
This is because whenever $p_{\MO} \geq p_0$, the independent seller will never abstain because positive utility is achievable (or in the case of $p_{\MO}=p_0$, non-negative utility is achievable, but as described before we assume that $\IS$ is willing to sell at price $p_0$). Thus, we simply have to determine whether $\IS$ competes or waits. The following proposition describes the critical $q_{\MO}$ threshold that determines which $\IS$ strategy is optimal. This
is obtained by first determining the optimal price that $\IS$ should set when they compete ($p_{\MO}$) and when they wait ($p_{\IS}^W$), then comparing the corresponding utilities and choosing the price that achieves the higher utility. The critical threshold $\qthresh(p_{\MO})$ is simply the value of $q_{\MO}$ below which competing yields higher utility and above which waiting yields higher utility. For a function $v: \R \to \R$, let $v^{-1}:\R \to \R$ denote its inverse function.\footnote{The inverse $v^{-1}$ is well-defined when $v$ is one-to-one (e.g., strictly increasing or strictly decreasing). We additionally extend the notion of inverse to functions that are increasing or decreasing (but not strictly) in a way that is useful for our analysis. 
For an increasing function $f$, we let $f^{-1}(y) = \inf\{x: f(x) \geq y\}$, and for a decreasing function $g$, we let  $g^{-1}(y) = \inf\{x: g(x) \leq y\}$.}

\begin{proposition} [Intermediate $p_{\MO}$ best response]
    \label{prop:best_response_intermediate_pA}
    Let $p_{\MO} \in [p_0, \psole)$. The independent seller's response depends on $\MO$'s inventory, relative to a threshold $\qthresh(p_{\MO})$:
    if $q_{\MO} \geq \qthresh(p_{\MO})$, the independent seller competes by setting $p_{\IS}=p_{\MO}$; otherwise they wait it out by setting $p_{\IS}=p^W_{\IS}$. The threshold function $\qthresh(p_{\MO})$ and wait it out price $p^W_{\IS}$ take different forms depending on the residual demand function:
    \begin{enumerate}[(a)]
        \item For a subtractive residual demand function, 
        $$\qthresh(p_{\MO}) = f^{-1}\left(\theta - p_0 - 2 \sqrt{(p_{\MO} - p_0)(\theta-p_{\MO})}\right) \qquad \text{and} \qquad p_{\IS}^W := \psole - \frac{f(q_{\MO})}{2}.$$
        \item For a multiplicative residual demand  function,
        $$\qthresh(p_{\MO}) = g^{-1}\left(\frac{(p_{\MO} - p_0)(\theta-p_{\MO})}{(\psole-p_0)(\theta-\psole)}\right)\qquad \text{and} \qquad p_{\IS}^W := \psole.$$
    \end{enumerate}
\end{proposition}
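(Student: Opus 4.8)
The plan is to reduce $\IS$'s problem to a comparison of two one-dimensional concave maximizations, one for the \textsc{Compete} branch and one for the \textsc{Wait} branch. By Proposition~\ref{prop:optimal_qB} we may take $q_{\IS}=D_{\IS}$, so $\IS$ chooses only a price, with payoff $[(1-\alpha)p_{\IS}-c_{\IS}]\,D_{\IS}(p_{\IS};q_{\MO},p_{\MO})$. Since $p_{\MO}\ge p_0$, the choice $p_{\IS}=p_{\MO}$ already gives $u_{\IS}=(1-\alpha)(p_{\MO}-p_0)(\theta-p_{\MO})\ge 0$, so \textsc{Abstain} is weakly dominated and may be discarded. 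Within \textsc{Compete} ($p_{\IS}\le p_{\MO}$, $D_{\IS}=Q(p_{\IS})$), $\IS$ maximizes $(1-\alpha)(p_{\IS}-p_0)(\theta-p_{\IS})$, a downward parabola with vertex at $\psole$; since $p_{\MO}<\psole$ it is increasing on $[p_0,p_{\MO}]$, so the optimum is $p_{\IS}=p_{\MO}$ (attained within the branch because the tie at $p_{\IS}=p_{\MO}$ is broken in $\IS$'s favor), with value $u^{\mathrm{comp}}:=(1-\alpha)(p_{\MO}-p_0)(\theta-p_{\MO})$.

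Next I would optimize the \textsc{Wait} branch ($p_{\IS}>p_{\MO}$, $D_{\IS}=R(p_{\IS};q_{\MO},p_{\MO})$). In case~(a) the objective is $(1-\alpha)(p_{\IS}-p_0)\big(\theta-f(q_{\MO})-p_{\IS}\big)$, a downward parabola with vertex $p_{\IS}^W=\psole-\tfrac{1}{2} f(q_{\MO})$; plugging in the vertex gives $u^{\mathrm{wait}}=\tfrac{1-\alpha}{4}\big(\theta-p_0-f(q_{\MO})\big)^2$ when $f(q_{\MO})\le\theta-p_0$ and $u^{\mathrm{wait}}\le 0$ otherwise. In case~(b) the objective is $g(q_{\MO})$ times the sole-seller objective $(1-\alpha)(p_{\IS}-p_0)(\theta-p_{\IS})$, whose maximizer $\psole$ is feasible because $\psole>p_{\MO}$, so $p_{\IS}^W=\psole$ and $u^{\mathrm{wait}}=g(q_{\MO})\,(1-\alpha)(\psole-p_0)(\theta-\psole)$. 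In both cases the \textsc{Wait} payoff is decreasing in $q_{\MO}$ (through $f$ increasing, resp.\ $g$ decreasing) while $u^{\mathrm{comp}}$ does not depend on $q_{\MO}$, which is what produces a single threshold.

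Finally I would set $u^{\mathrm{comp}}\ge u^{\mathrm{wait}}$ and solve for $q_{\MO}$. In case~(a) this is equivalent, after taking square roots (a negative right-hand side being trivial), to $f(q_{\MO})\ge\theta-p_0-2\sqrt{(p_{\MO}-p_0)(\theta-p_{\MO})}$; in case~(b) it is equivalent to $g(q_{\MO})\le\frac{(p_{\MO}-p_0)(\theta-p_{\MO})}{(\psole-p_0)(\theta-\psole)}$, whose right-hand side lies in $[0,1]$ because $\psole$ maximizes $(p-p_0)(\theta-p)$. Applying the generalized inverse of $f$ (increasing) resp.\ $g$ (decreasing) rewrites each condition as $q_{\MO}\ge\qthresh(p_{\MO})$ with $\qthresh$ exactly as claimed, and the inequality's direction means $\IS$ \textsc{Compete}s precisely on $\{q_{\MO}\ge\qthresh(p_{\MO})\}$, consistent with ties going to \textsc{Compete}; otherwise $\IS$ \textsc{Wait}s at $p_{\IS}^W$. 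I expect the main obstacle to be the bookkeeping in case~(a): one must confirm that the parabola's vertex $p_{\IS}^W$ leaves the feasible region $(p_{\MO},\infty)$ exactly where \textsc{Wait} is already dominated by \textsc{Compete} — which I would do by checking $p_{\IS}^W\le p_{\MO}\iff f(q_{\MO})\ge p_0+\theta-2p_{\MO}$ and that $p_{\MO}\le\psole$ forces $p_0+\theta-2p_{\MO}\ge\theta-p_0-2\sqrt{(p_{\MO}-p_0)(\theta-p_{\MO})}$ — and that the degenerate sub-case $f(q_{\MO})>\theta-p_0$ (residual demand fully choked off) still lies in $\{q_{\MO}\ge\qthresh(p_{\MO})\}$, so the stated threshold formula remains valid even where its naive derivation breaks down.
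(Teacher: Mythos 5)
Your proposal is correct and follows essentially the same route as the paper: the optimal compete price is $p_{\MO}$, the optimal wait price is the vertex of the residual-demand objective ($\psole - \tfrac{1}{2}f(q_{\MO})$, resp.\ $\psole$), and comparing the two utilities and solving for $q_{\MO}$ yields exactly the stated thresholds. The only difference is in bookkeeping: the paper handles the constraint $p_{\IS} > p_{\MO}$ by comparing against the unconstrained maximum over $R$ and invoking $R \le Q$ (\Cref{lemma:uR_greater_than_uC_equivalency}), whereas you optimize the wait branch directly and check the infeasible-vertex and choked-off-demand corner cases explicitly — and your check that $p_{\MO} < \psole$ forces $p_0 + \theta - 2p_{\MO} \ge \theta - p_0 - 2\sqrt{(p_{\MO}-p_0)(\theta-p_{\MO})}$ is indeed valid, so both cases land in the compete region as required.
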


\begin{proof}[Proof of Proposition \ref{prop:best_response_intermediate_pA}] 
    Recall that $\IS$ has three strategies to choose from: abstain, compete, or wait.
    Whenever $p_{\MO} \geq p_0$, $\IS$ can achieve non-negative utility by competing, so they will never abstain.
    Thus, to determine what $\IS$ will do, we simply have to determine whether it is better for them to compete or wait.
    \begin{enumerate}
        \item \emph{Optimal ``compete'' utility.} In order to compete, $\IS$ must set $p_{\IS} \leq p_{\MO}$. The optimal $p_{\IS}$ satisfying this condition is
        \begin{align*}
            p_{\IS}^{C} &:= \arg \max_{p_{\IS} \leq p_{\MO}} u_{\IS}(p_{\IS}; p_{\MO}, q_{\MO}) \\
            &= \arg \max_{p_{\IS} \leq p_{\MO}} \underbrace{((1-\alpha)p_{\IS} - c_{\IS}) (\theta - p_{\IS})}_{:=u^C(p_{\IS})} \\
            &= p_{\MO}
        \end{align*}
        where the last equality comes from combining the following facts: (1) $\psole$ is the unconstrained maximizer of $u^C(p_{\IS})$; (2) the conditions of the proposition assume $p_{\MO} \in [p_0, \psole)$; and (3) $u^C(p_{\IS})$ is convex so the $p_{\IS}$ in the constraint set that is closest to the unconstrained maximizer must be constrained maximizer. 
        When $\IS$ sets $p_{\MO}$, they get utility 
        $$u_C^* := u^C(p_{\IS}^C) = ((1-\alpha)p_{\MO} - c_{\IS}) (\theta - p_{\MO}).$$
        Note that the optimal compete utility does not depend on the residual demand function because when $\IS$ competes, they do not face the residual demand function.
        \item \emph{Optimal utility when facing $R$ (upper bound on optimal ``wait'' utility).} To derive the optimal ``wait'' utility, we would have to solve 
        $$\max_{p_{\IS} > p_{\MO}} ((1-\alpha)p_{\MO} - c_{\IS}) R(p_{\IS}; q_{\MO}, p_{\MO}).$$
        Rather than solving this constrained optimization problem, we solve the unconstrained optimization problem (for reasons to be explained shortly):
        $$\max_{p_{\IS}} ((1-\alpha)p_{\MO} - c_{\IS}) R(p_{\IS}; q_{\MO}, p_{\MO}).$$
        Lemma \ref{lemma:optimal_pB_when_facing_R} gives the $p_{\IS}$ that solves this for each class of residual demand functions.
        Call this $p_{\IS}^R$ and let
        $$u^*_R := ((1-\alpha)p_{\IS}^R - c_{\IS}) R(p_{\IS}^R; q_{\MO}, p_{\MO})$$
        denote the maximal value achieved by the unconstrained optimization problem.
    \end{enumerate}
        We now make the following observation:
        \begin{center}
            \emph{$\IS$ gets higher utility from waiting than competing if and only if $u^*_R > u^*_C$.}
        \end{center}
        This is stated formally and proven in Lemma \ref{lemma:uR_greater_than_uC_equivalency}.
        Thus, to determine whether it is optimal for $\IS$ to wait by setting $p_{\IS}^R$ or compete by setting $p_{\IS}^C$, we simply have to check if $u^*_R > u^*_C$. For each class of residual demand functions,
        we will solve this condition for $q_{\MO}$  by plugging in the optimal $p_{\IS}$ values derived in Lemma \ref{lemma:optimal_pB_when_facing_R}.
        \begin{enumerate}
            \item \emph{Subtractive demand functions.} 
            First, observe that $u_R^*$ for subtractive demand functions can be written as 
            \begin{align*}
                u_R^* &= \left((1-\alpha)\left(\psole - \frac{f(q_{\MO})}{2}\right)-c_{\IS}\right)\left(\theta - \left(\psole - \frac{f(q_{\MO})}{2}\right) -f(q_{\MO})\right) \\
                &= (1-\alpha) \left(\psole - \frac{f(q_{\MO})}{2} - p_0\right)\left(\theta - \psole - \frac{f(q_{\MO})}{2} \right) \qquad \text{since $p_0 = \frac{c_{\IS}}{1-\alpha}$} \\
                &= (1-\alpha) \left(\frac{p_0 + \theta}{2} - \frac{f(q_{\MO})}{2} - p_0\right)\left(\theta - \frac{p_0 + \theta}{2} - \frac{f(q_{\MO})}{2}\right) \qquad \text{definition of $\psole$} \\
                &= (1-\alpha) \left(\frac{\theta}{2} - \frac{p_0}{2} - \frac{f(q_{\MO})}{2}\right)\left(\frac{\theta}{2} - \frac{p_0}{2} - \frac{f(q_{\MO})}{2}\right) \\
                &= \frac{(1-\alpha)}{4}(\theta - p_0 - f(q_{\MO}))^2.
            \end{align*}
            Thus,
            \begin{align*}  u_R^* > u_C^* \Longleftrightarrow \frac{(1-\alpha)}{4}(\theta - p_0 - f(q_{\MO}))^2 &> ((1-\alpha)p_{\MO} - c_{\IS})(\theta - p_{\MO}).
            \end{align*}
            Solving for $q_{\MO}$, we get
            \begin{align*}
            u_R^* > u_C^* &\Longleftrightarrow
                f(q_{\MO}) < \theta - p_0 - 2 \sqrt{(p_{\MO} - p_0)(\theta - p_{\MO})}.\\
                &\Longleftrightarrow q_{\MO} < f^{-1}\big(\theta - p_0 - 2 \sqrt{(p_{\MO} - p_0)(\theta - p_{\MO})}\big).
            \end{align*}
            \item \emph{Multiplicative demand functions.} 
            \begin{align*}  
             u_R^* > u_C^* &\Longleftrightarrow
             ((1-\alpha)\psole - c_{\IS})(\theta - \psole)g(q_{\MO}) > ((1-\alpha)p_{\MO} - c_{\IS})(\theta - p_{\MO}) \\
             &\Longleftrightarrow
             (1-\alpha)(\psole - p_0)(\theta - \psole)g(q_{\MO}) > (1-\alpha)(p_{\MO} - p_0)(\theta - p_{\MO}) \quad \text{since $p_0 = \frac{c_{\IS}}{1-\alpha}$} \\
              &\Longleftrightarrow
             (\psole - p_0)(\theta - \psole)g(q_{\MO}) > (p_{\MO} - p_0)(\theta - p_{\MO}) \\
             &\Longleftrightarrow g(q_{\MO}) > \frac{(p_{\MO} - p_0)(\theta - p_{\MO})}{(\psole - p_0)(\theta - \psole)} \\
             &\Longleftrightarrow q_{\MO} < g^{-1}\left(\frac{(p_{\MO} - p_0)(\theta - p_{\MO})}{(\psole - p_0)(\theta - \psole)}\right).
             \end{align*}
           
        \end{enumerate}
    
\end{proof}

\begin{lemma} [Optimal $p_{\IS}$ when facing each $R$]
\label{lemma:optimal_pB_when_facing_R}
The solution of 
\begin{align} 
    \arg \max_{p_{\IS}} u_R(p_{\IS})
\end{align}
where $u_R(p_{\IS}) = ((1-\alpha)p_{\MO} - c_{\IS}) R(p_{\IS})$
is...
\begin{enumerate}[(a)]
    \item For subtractive residual demand function $R(p_{\IS})=\theta -p_{\IS} - f(q_{\MO})$, 
        $$p_{\IS}^{\mathrm{sub}} = \psole - \frac{f(q_{\MO})}{2}.$$
    \item For multiplicative residual demand function $R(p_{\IS}) = (\theta-p_{\IS})g(q_{\MO})$,
    $$p_{\IS}^{\mathrm{mult}} = \psole.$$
\end{enumerate}
\end{lemma}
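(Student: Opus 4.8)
The plan is to reduce each part to maximizing a single concave quadratic in $p_{\IS}$. I would first rewrite the objective using $p_0 = c_{\IS}/(1-\alpha)$, so that $(1-\alpha)p_{\IS} - c_{\IS} = (1-\alpha)(p_{\IS}-p_0)$ and hence $u_R(p_{\IS}) = (1-\alpha)\,(p_{\IS}-p_0)\,R(p_{\IS})$. This is the same normalization used in the proof of \Cref{prop:best_response_intermediate_pA}, and it is what makes $\psole = \tfrac12(p_0+\theta)$ surface naturally as the midpoint of two roots.

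For part (a), I would substitute $R(p_{\IS}) = \theta - p_{\IS} - f(q_{\MO})$, giving
\[
u_R(p_{\IS}) = (1-\alpha)\,(p_{\IS}-p_0)\bigl(\theta - f(q_{\MO}) - p_{\IS}\bigr),
\]
a downward-opening parabola in $p_{\IS}$ (leading coefficient $-(1-\alpha)<0$) with roots $p_0$ and $\theta - f(q_{\MO})$. Its vertex, and hence its unconstrained maximizer, is the midpoint $\tfrac12\bigl(p_0+\theta-f(q_{\MO})\bigr) = \tfrac12(p_0+\theta) - \tfrac12 f(q_{\MO}) = \psole - \tfrac12 f(q_{\MO})$, which is the claimed $p_{\IS}^{\mathrm{sub}}$. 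For part (b), I would substitute $R(p_{\IS}) = (\theta - p_{\IS})\,g(q_{\MO})$, so that $u_R(p_{\IS}) = g(q_{\MO})\,\bigl((1-\alpha)p_{\IS}-c_{\IS}\bigr)(\theta - p_{\IS})$. Since $g(q_{\MO})\ge 0$ does not depend on $p_{\IS}$, when $g(q_{\MO})>0$ maximizing $u_R$ is equivalent to maximizing $\bigl((1-\alpha)p-c_{\IS}\bigr)Q(p)$, whose maximizer is $\psole$ by definition of $\psole$; and when $g(q_{\MO})=0$ we have $u_R\equiv 0$, for which $\psole$ is trivially a maximizer. Either way $p_{\IS}^{\mathrm{mult}} = \psole$.

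The step I expect to require the most care --- and the only real subtlety --- arises in part (a) from the implicit clipping $R(p_{\IS}) = \max(\theta - p_{\IS} - f(q_{\MO}),\,0)$ needed for \Cref{assumption:R}(b): the parabola formula above is valid only on $\{p_{\IS} \le \theta - f(q_{\MO})\}$, and outside it $u_R = 0$. I would dispatch this with a short case split. If $\theta - f(q_{\MO}) \ge p_0$, the vertex lies strictly between the two roots, hence inside $\{R>0\}$, and attains a strictly positive value, which beats the value $0$ that $u_R$ takes on $\{R=0\}$, so the vertex is the global maximizer. If instead $\theta - f(q_{\MO}) < p_0$, then on $\{R>0\}$ we have $(p_{\IS}-p_0)<0<(\theta - f(q_{\MO})-p_{\IS})$, so $u_R<0$ there while $u_R=0$ on $\{R=0\}$; a one-line check shows $\psole - \tfrac12 f(q_{\MO}) \ge \theta - f(q_{\MO})$ precisely in this regime, so the claimed point still lies in $\{R=0\}$ and remains (weakly) optimal. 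Beyond this bookkeeping there is no obstacle: the content is two elementary quadratic optimizations.
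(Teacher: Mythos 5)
Your proposal is correct and follows essentially the same route as the paper's proof: recognize $u_R$ as a concave quadratic, take the vertex (midpoint of the roots $p_0$ and $\theta - f(q_{\MO})$) in the subtractive case, and observe that the multiplicative case is just a rescaling of the sole-seller objective so the maximizer is $\psole$. Your extra case split handling the clipping $R \geq 0$ (and the degenerate $g(q_{\MO})=0$ case) is careful bookkeeping the paper omits, but it does not change the argument.
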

\begin{proof} Since all of the residual demand functions we consider result in concave $u_R(p_{\IS})$, the vertex of $u_R(p_{\IS})$ is the maximizer.
    \begin{enumerate}[(a)]
        \item Plugging the subtractive residual demand function into $u_R(p_{\IS})$ yields
        $$u_R(p_{\IS}) = ((1-\alpha)p_{\MO} - c_{\IS}) (\theta - p_{\MO} - f(q_{\MO})).$$
        This has roots at $p_{\MO} = \frac{c_{\IS}}{1-\alpha} = p_0$ and $p_{\MO} = \theta - f(q_{\MO})$, so the vertex is at $\frac{p_0 + \theta - f(q_{\MO})}{2} = \psole - \frac{f(q_{\MO})}{2}$.
        \item Plugging the multiplicative residual demand function into $u_R(p_{\IS})$ yields
        $$u_R(p_{\IS}) = ((1-\alpha)p_{\MO} - c_{\IS}) (\theta - p_{\MO})h(q_{\MO}),$$
        which is the same utility function that $\IS$ faces if they are the sole seller, but rescaled. Since the rescaling does not change the location of the maximizer, the maximizer of this $u_R(p_{\IS})$ is $\psole$.
    \end{enumerate}
\end{proof}

\begin{lemma}
    \label{lemma:uR_greater_than_uC_equivalency}
    Define
    $u_C^* = \max_{p_{\IS} \leq p_{\MO}} ((1-\alpha)p_{\MO} - c_{\IS}) Q(p_{\IS})$
    to be $\IS$'s optimal utility for competing,
    $u_W^* = \max_{p_{\IS} > p_{\MO}} ((1-\alpha)p_{\MO} - c_{\IS}) R(p_{\IS}; q_{\MO}, p_{\MO})$
    to be $\IS$'s optimal utility for waiting,
    and $u_R^* = \max ((1-\alpha)p_{\MO} - c_{\IS}) R(p_{\IS}; q_{\MO}, p_{\MO})$
    to be $\IS$'s optimal utility when facing the residual demand function.
    Then,
    \begin{center}
        $u_W^* > u_C^*$ if and only if $u_R^* > u_C^*$.
    \end{center}
\end{lemma}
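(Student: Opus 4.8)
The plan is to exploit a simple containment: $u_R^*$ is the \emph{unconstrained} maximum of exactly the objective that $u_W^*$ maximizes only over the half-line $\{p_{\IS}>p_{\MO}\}$. Writing $u_R(p_{\IS}) := ((1-\alpha)p_{\IS}-c_{\IS})\,R(p_{\IS};q_{\MO},p_{\MO})$, we have $u_R^* = \max_{p_{\IS}\ge 0} u_R(p_{\IS})$ and $u_W^* = \max_{p_{\IS}>p_{\MO}} u_R(p_{\IS})$, hence $u_W^* \le u_R^*$ always. This settles the forward direction at once: if $u_W^* > u_C^*$ then $u_R^* \ge u_W^* > u_C^*$.

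For the reverse direction I would show that $u_R$ never exceeds $u_C^*$ on the ``compete region'' $\{0\le p_{\IS}\le p_{\MO}\}$; then any value of $u_R$ strictly above $u_C^*$ must be attained at some price $> p_{\MO}$, which is admissible for $u_W^*$. Using the standing hypothesis $p_{\MO}\ge p_0$ inherited from Proposition~\ref{prop:best_response_intermediate_pA}, split the compete region at $p_0$. For $p_{\IS}\in[p_0,p_{\MO}]$ the margin $(1-\alpha)p_{\IS}-c_{\IS}$ is nonnegative, so Assumption~\ref{assumption:R}(a) ($R\le Q$) yields $u_R(p_{\IS}) \le ((1-\alpha)p_{\IS}-c_{\IS})\,Q(p_{\IS}) \le u_C^*$, the last step because $p_{\IS}$ is feasible in the definition of $u_C^*$. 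For $p_{\IS}\in[0,p_0)$ the margin is negative while $R\ge 0$ by Assumption~\ref{assumption:R}(b), so $u_R(p_{\IS})\le 0$; and $u_C^*\ge 0$ because $p_{\IS}=p_0$ is feasible for $u_C^*$ and gives $((1-\alpha)p_0-c_{\IS})\,Q(p_0)=0$. Therefore $\sup_{0\le p_{\IS}\le p_{\MO}} u_R(p_{\IS}) \le u_C^*$. Since $u_R^* = \max\!\big(\sup_{p_{\IS}\le p_{\MO}} u_R(p_{\IS}),\,u_W^*\big)$ and the first term is $\le u_C^*$, whenever $u_R^* > u_C^*$ the maximum must come from the second term, forcing $u_W^* = u_R^* > u_C^*$.

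The only subtleties are bookkeeping ones. One is tracking the sign of the margin across $p_0$ and confirming $u_C^*\ge 0$, which is exactly where $p_{\MO}\ge p_0$ is used. The other is a mild attainment point needed to pass from ``$u_R^*>u_C^*$'' to ``the witnessing price lies above $p_{\MO}$'': the maximum defining $u_R^*$ is attained --- e.g.\ at the closed-form maximizers of Lemma~\ref{lemma:optimal_pB_when_facing_R}, which one checks lie in $[p_0,\theta]$ --- so $\max$ and $\sup$ coincide and the argument goes through verbatim; alternatively one repeats it with suprema and a maximizing sequence, using continuity of $R$. I expect this last point, rather than any genuine difficulty, to be the main thing to be careful about.
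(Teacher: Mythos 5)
Your proposal is correct and takes essentially the same route as the paper: the forward direction follows from the containment $u_W^* \le u_R^*$, and the reverse direction uses Assumption \ref{assumption:R}(a) ($R \le Q$) to conclude that the unconstrained maximum cannot be attained at prices $\le p_{\MO}$, forcing $u_W^* = u_R^* > u_C^*$. Your extra bookkeeping for the negative-margin region $p_{\IS} < p_0$ (via $R \ge 0$ and $u_C^* \ge 0$, using the standing hypothesis $p_{\MO} \ge p_0$) fills in a detail the paper's terser proof glosses over, but it is the same argument.
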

\begin{proof} We prove both directions:
\begin{itemize} 
    \item ($\Longrightarrow$) Suppose $u_W^* > u_C^*$. Since $u_R^* \geq u_W^*$, we have $u_R^* \geq u_W^* > u_C$. 
    \item ($\Longleftarrow$) Suppose $u_R^* > u_C^*$. By Assumption \ref{assumption:R}(a), which says that $R(p_{\IS})$ is no higher than $Q(p_{\IS})$ for all $p_{\IS}$, we know that the $p_{\IS}$ corresponding to $u_R^*$ must be at least $p_{\MO}$. So we must have $u_W^*=u_R^* > u_C^*$.
\end{itemize}
    
\end{proof}

Finally, we consider the case where $\MO$ has set a very low price ($p_{\MO} \leq p_0$). Much like Proposition \ref{prop:best_response_intermediate_pA}, the following proposition determines the independent seller's best response by comparing $q_{\MO}$ to a critical threshold. The difference in this case is that the strategies $\IS$ is choosing between are to wait or to abstain; competing is no longer a viable option because, by definition of $p_0$, $\IS$ is not able to make a profit if they set a price below $p_{\MO} < p_0$. 

\begin{proposition} [Small $p_{\MO}$ best response]
    \label{prop:best_response_small_pA}
    If $p_{\MO} < p_0$, the independent seller's response depends on $\MO$'s inventory, relative to a threshold $\qddag(p_{\MO})$:
    the independent seller abstains by setting $p_{\IS}=\infty$ if $q_{\MO} \geq \qddag(p_{\MO})$ and otherwise sets the wait it out price $p_{\IS}^W$ described in Proposition \ref{prop:best_response_intermediate_pA}. The value of the threshold $\qddag$ depends on the residual demand function:
    \begin{enumerate}[(a)]
        \item For a subtractive residual demand function,
        $$\qddag(p_{\MO})=f^{-1}(\theta - p_0)$$
        \item For a multiplicative residual demand function,  $$\qddag(p_{\MO})=g^{-1}(0; p_{\MO}),$$
        where we write $g^{-1}(~\cdot~; p_{\MO})$ to emphasize that $g$ can depend on $p_{\MO}$.    
    \end{enumerate}
\end{proposition}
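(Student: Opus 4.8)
The plan is to collapse the independent seller's problem into a binary comparison — \textsc{Wait} versus \textsc{Abstain} — and then locate the exact inventory level at which waiting stops being profitable. First I would dispose of \textsc{Compete}: since $p_{\MO} < p_0 \le \theta$, any candidate price $p_{\IS}\le p_{\MO}$ carries a strictly negative per-unit margin $(1-\alpha)p_{\IS}-c_{\IS}<0$, so (just as in Proposition \ref{prop:optimal_qB}) the seller's utility $(1-\alpha)p_{\IS}\min(q_{\IS},D_{\IS})-c_{\IS}q_{\IS}$ is nonpositive for every $q_{\IS}$ and is maximized at $q_{\IS}=0$; hence \textsc{Compete} is weakly dominated by \textsc{Abstain}. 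So without loss of generality $\IS$ either abstains (utility $0$) or waits, choosing some $p_{\IS}>p_{\MO}$ and — by Proposition \ref{prop:optimal_qB} — meeting the residual demand $R(p_{\IS})$ wherever that sale is profitable.

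Next I would compute the optimal waiting utility $u_W^{*} := \max_{p_{\IS}>p_{\MO}}\big((1-\alpha)p_{\IS}-c_{\IS}\big)R(p_{\IS})$ and compare it with $0$. Lemma \ref{lemma:optimal_pB_when_facing_R} identifies the \emph{unconstrained} maximizer of this objective as the wait-it-out price $p_{\IS}^W$ of Proposition \ref{prop:best_response_intermediate_pA} ($\psole-f(q_{\MO})/2$ in the subtractive case, $\psole$ in the multiplicative case), and the computation already carried out in the proof of Proposition \ref{prop:best_response_intermediate_pA} gives the value $u_R^{*}=\tfrac{1-\alpha}{4}\big(\theta-p_0-f(q_{\MO})\big)^2$ and $u_R^{*}=(1-\alpha)(\psole-p_0)(\theta-\psole)\,g(q_{\MO})$ respectively, valid whenever the maximizing residual demand is nonnegative. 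The two checks are: (i) if $q_{\MO}<\qddag(p_{\MO})$ then $p_{\IS}^W>p_{\MO}$ — in the subtractive case because $f(q_{\MO})<\theta-p_0$ forces $p_{\IS}^W>p_0>p_{\MO}$, in the multiplicative case trivially since $p_{\IS}^W=\psole>p_0>p_{\MO}$ — so the constrained and unconstrained optima coincide and $u_W^{*}=u_R^{*}>0$; (ii) if $q_{\MO}\ge\qddag(p_{\MO})$ then no feasible price is profitable, so $u_W^{*}\le 0$. For (ii) I would argue directly on $R$: in the subtractive case $f(q_{\MO})\ge\theta-p_0$ makes $R(p_{\IS})=\max(\theta-p_{\IS}-f(q_{\MO}),0)=0$ for every $p_{\IS}\ge p_0$, while every $p_{\IS}\in(p_{\MO},p_0)$ has negative margin; in the multiplicative case $q_{\MO}\ge g^{-1}(0;p_{\MO})$ plus monotonicity of $g$ forces $g(q_{\MO})=0$, hence $R\equiv 0$.

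Assembling the pieces: $\IS$ waits (at $p_{\IS}^W$, stocking $q_{\IS}=R(p_{\IS}^W)$) exactly when $u_W^{*}>0$, i.e.\ exactly when $q_{\MO}<\qddag(p_{\MO})$, and abstains otherwise; the weak inequality at the boundary $q_{\MO}=\qddag(p_{\MO})$ resolves toward abstaining because there the residual demand available at any profitable price vanishes, so waiting is payoff-equivalent to abstaining. Finally, rewriting the conditions ``$f(q_{\MO})<\theta-p_0$'' and ``$g(q_{\MO})>0$'' via the extended inverses $f^{-1}(y)=\inf\{x:f(x)\ge y\}$ and $g^{-1}(y)=\inf\{x:g(x)\le y\}$ produces the stated thresholds $\qddag(p_{\MO})=f^{-1}(\theta-p_0)$ and $\qddag(p_{\MO})=g^{-1}(0;p_{\MO})$. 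The step I expect to be the main obstacle is case (ii): the closed form for $u_R^{*}$ is derived under the tacit assumption $R(p_{\IS}^W)\ge 0$, so for $q_{\MO}\ge\qddag$ one must not plug into that formula (which would spuriously return a positive square) but instead argue from the floored residual demand that $\IS$ has no profitable move at all; handling this flooring carefully — and pinning down the behaviour exactly at $q_{\MO}=\qddag$ — is where the delicacy lies.
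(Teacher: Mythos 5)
Your proposal is correct and follows essentially the same route as the paper's proof: rule out \textsc{Compete} because the margin is negative for any $p_{\IS}\le p_{\MO}<p_0$, reduce the problem to \textsc{Wait} versus \textsc{Abstain}, and identify $\qddag(p_{\MO})$ as the inventory at which the residual demand at the break-even price $p_0$ vanishes (the paper writes this as $\qddag(p_{\MO})=\min\{q_{\MO}: R(p_0;p_{\MO},q_{\MO})=0\}$ and invokes monotonicity of $R$ in $p_{\IS}$, which is exactly your case (ii) argument handled via the floored residual demand). Your extra care about not plugging into the closed-form $u_R^*$ when $R$ is floored at zero is a valid and slightly more explicit treatment of a point the paper handles qualitatively.
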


\begin{proof}[Proof of Proposition \ref{prop:best_response_small_pA}]
Since $p_{\MO} < p_0$, competing yields negative utility for $\IS$, so they must decide between waiting or abstaining. Abstaining yields zero utility, so $\IS$ will abstain only if no positive utility can be achieved by waiting. This happens if there are no customers left that are willing to pay a price higher than $\IS$'s break-even price $p_0$.
$\qddag(p_{\MO})$ is the smallest $q_{\MO}$ such that after $\MO$ sells $q_{\MO}$ units at price $p_{\MO}$, there is zero residual demand at $\IS$'s break-even price, i.e., $\qddag(p_{\MO})=\min \{q_{\MO}: R(p_0;p_{\MO}, q_{\MO}) = 0\}$. 
    For any $q_{\MO} < \qddag(p_{\MO})$, we have $R(p_0;p_{\MO}, q_{\MO}) > 0$.
    
    If $R(p_0;p_{\MO}, q_{\MO}) = 0$, this means that there is no price at or above $\IS$'s break-even price at which there is positive demand (this is true because $R$ is non-increasing in its first argument by Assumption \ref{assumption:R}). In these cases, the best $\IS$ can do is get zero utility by abstaining. When $R(p_0;p_{\MO}, q_{\MO}) > 0$, $\IS$ should set their optimal price when facing $R$, which is given exactly by $p_{\IS}^W$ described in Proposition \ref{prop:best_response_intermediate_pA}. 
\end{proof}

\subsection{Equilibrium proofs} 
\label{sec:equilibrium_APPENDIX}

We prove the main result that characterizes $\MO$'s equilibrium strategy.

\begin{proof}[Proof of Lemma \ref{lemma:qstar_A}] We split into three cases.

\emph{Case 1: $p_{\MO} \geq \psole$.} By Proposition \ref{prop:best_response_large_pA}, we know that whenever $p_{\MO} \geq \psole$, $\IS$ will compete regardless of the value of $q_{\MO}$. Thus, the only implementable strategy is to induce $\IS$  to compete. The optimal $q_{\MO}$ that induces $\IS$  to compete is $q_{\MO}=0$ since $\MO$'s inventory is not sold and has no effect on $\IS$'s actions.

\emph{Case 2: $p_{\MO} \in [p_0, \psole)$.} 
By Proposition \ref{prop:best_response_intermediate_pA}, we know that when $p_{\MO} \in [p_0, \psole)$, $\IS$  will either compete or wait depending on the value of $q_{\MO}$. The optimal $q_{\MO}$ that induces $\IS$ to compete is $\qthresh(p_{\MO})$ because this is the smallest $q_{\MO}$ value that induces compete, and none of $\MO$'s inventory gets to be sold when $\IS$ competes. 
The range of $q_{\MO}$ values that induce $\IS$ to wait is $\cW := [0, \qthresh(p_{\MO}))$.
We now argue that, under the assumptions on $R$ imposed by the lemma, the optimal $q_{\MO}$ that induces $\IS$  to wait must be at the boundaries. We will do this by showing that the second derivative is non-negative for all such $q_{\MO}$, implying there are no local maximizers.
    
For any $q_{\MO}$ that induces $\IS$ to wait, $\MO$'s utility function is 
    \begin{align} \label{eq:uA_for_induce_wait}
        u_{\MO}(p_{\MO}, q_{\MO}) = (p_{\MO} - c_{\MO} + k) q_{\MO} + (\alpha \pBR(p_{\MO}, q_{\MO}) + k) R(\pBR(p_{\MO}, q_{\MO}); q_{\MO}, p_{\MO}).
    \end{align}
    \begin{enumerate}
        \item For subtractive residual demand functions $R(p_{\IS}; q_{\MO}) = \theta - p_{\IS} -f(q_{\MO})$, we have $\pBR(p_{\MO}, q_{\MO}) = \psole - \frac{f(q_{\MO})}{2}$, so \eqref{eq:uA_for_induce_wait} becomes 
        $$u_{\MO}(p_{\MO}, q_{\MO}) = (p_{\MO} - c_{\MO} + k) q_{\MO} + \left(\alpha\left(\psole - \frac{f(q_{\MO})}{2}\right)+k\right)\left(\theta - \psole - \frac{f(q_{\MO})}{2}\right),$$
        which has second derivative $\frac{\partial^2}{\partial q_{\MO}^2} u_{\MO} = \frac{\alpha}{2}\left(f'(q_{\MO})\right)^2 > 0$  when $f''(q_{\MO}) = 0$.
        \item For multiplicative demand functions $R(p_{\IS}; q_{\MO}) = (\theta - p_{\IS})g(q_{\MO})$, we have $\pBR(p_{\MO}, q_{\MO}) = \psole$, so \eqref{eq:uA_for_induce_wait} becomes
        $$u_{\MO}(p_{\MO}, q_{\MO}) = (p_{\MO} - c_{\MO} + k) q_{\MO} + (\alpha \psole + k) (\theta - \psole) g(q_{\MO}),$$
        which has second derivative $\frac{\partial^2}{\partial q_{\MO}^2} u_{\MO}=0$ under the assumption that $g''(q_{\MO}) = 0$.
    \end{enumerate}
    Thus, the candidates for the optimal $q_{\MO}$ that induces $\IS$ to wait are the boundary points of $\cW$ --- namely, zero and $\qthresheps(p_{\MO})$. Combining with the optimal $q_{\MO}$ that induces $\IS$ to wait tells us that the  overall optimal $q_{\MO}$ must come from the set $\{0, \qthresheps(p_{\MO}), \qthresh(p_{\MO})\}$.

\emph{Case 3: $p_{\MO} < p_0$.} By Proposition \ref{prop:best_response_small_pA}, we know that when $p_{\MO} < p_0$, $\IS$ will either wait or abstain. To induce wait when $p_{\MO} < p_0$, $\MO$ must set $q_{\MO} \in [0, \qddag(p_{\MO}))$. By the same second derivative argument as in Case 2, the maximizing quantity must occur at the boundary (0 or $\qddageps(p_{\MO})$) where $\qddageps(p_{\MO}) = \qddag(p_{\MO}) -\epsilon$, for small $\epsilon$, is the largest $q_{\MO}$ such that $\IS$ waits.
To induce abstain, $\MO$ must set $q_{\MO} \geq \qddag(p_{\MO})$ and they will not set $q_{\MO} \geq Q(p_{\MO})$ because inventory exceeding demand is not sold. When $\IS$ waits, $\MO$'s utility is $u_{\MO} = (p_{\MO} - c_{\MO} + k) q_{\MO}$. If $p_{\MO} - c_{\MO} + k < 0$, then the highest utility is achieved at the boundary point $q_{\MO} = \qddag(p_{\MO})$, which yields a negative utility. If $p_{\MO} - c_{\MO} + k > 0$, then the highest utility is achieved at the boundary point $q_{\MO} = Q(p_{\MO})$.
Thus, the candidates for the optimal $q_{\MO}$ are zero and $\qddageps(p_{\MO})$ (which induce wait) and $\qddag(p_{\MO})$ and $Q(p_{\MO})$ (which induce abstain). It is easy to see that $u_{\MO}(p_{\MO}, \cdot)$ is continuous at $\qddag(p_{\MO})$, so $u_{\MO}(p_{\MO}, \qddag(p_{\MO})) = u_{\MO}(p_{\MO}, \qddageps(p_{\MO}))$. Thus, we only have to compare $\MO$'s utilities at $q_{\MO} \in \{0,\qddag(p_{\MO}), Q(p_{\MO})\}$ However, as previously argued, if $\qddag(p_{\MO})$ achieves higher utility than $Q(p_{\MO})$, then that utility is negative, which would be less than the utility of setting $q_{\MO} = 0$. As a result, it is sufficient to just compare the utilities at $q_{\MO} \in \{0, Q(p_{\MO})\}$. 

\end{proof}

\noindent \emph{Elaboration on Remark \ref{remark:efficient_computation_of_equilibrium}.} 
We now explain how the equilibrium can be computed efficiently. Figure \ref{fig:equilibrium_opt} illustrates how we identify $p^*_{\MO}$. 
From Lemma \ref{lemma:qstar_A}, we know that it suffices to compare the optimal utilities for 
    (1) any $p_{\MO}$ and $q_{\MO} = 0$; (2) $p_{\MO} \in [p_0, \psole)$ and $q_{\MO} =\qthresh(p_{\MO})$;
    (3) $p_{\MO} \in [p_0, \psole)$ and $q_{\MO} = \qthresheps(p_{\MO})$; and (4) $p_{\MO} \in [0, p_0)$ and $q_{\MO} = Q(p_{\MO})$.
    $\MO$'s utility for (1) is 
    $u_{\MO} = (\alpha \psole + k) Q(\psole)$, which is constant in $p_{\MO}$.
The utility in (1) is plotted in green. The utilities for (2) are plotted in red, with a point plotted where the maximal utility is achieved. The same is true for (3), in yellow, and (4), in blue. $p^*_{\MO}$ (circled in red) is then identified by comparing the utility from (1) to the utilities achieved by the maximizing points from (2), (3), and (4) and selecting the price that achieves the highest utility (if (1) has the highest utility, $\MO$ abstains by setting $p^*_{\MO}=\infty$).

\begin{figure}[h!]
    \centering
    \includegraphics[width=.8\textwidth]{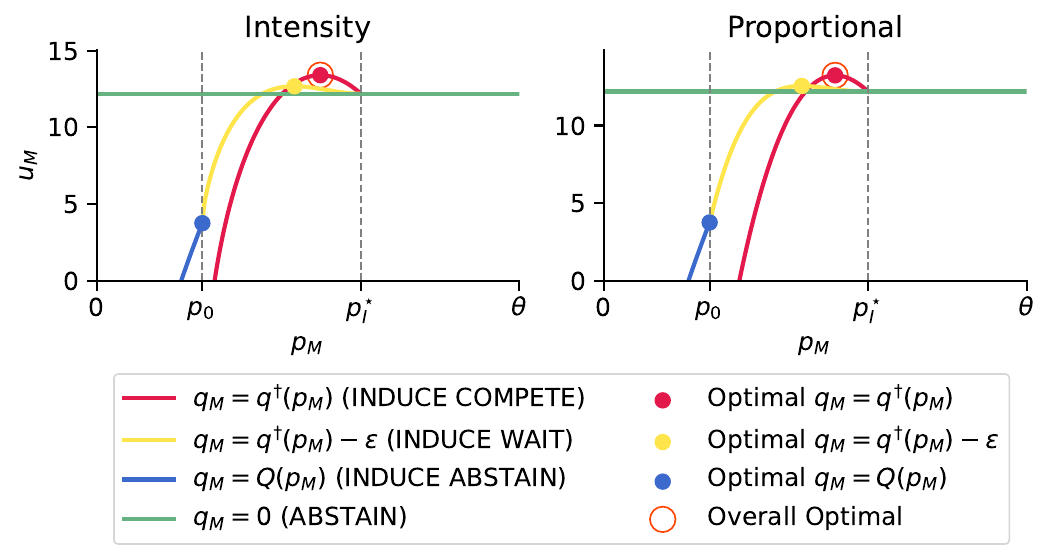}
    \caption{Visualization of $\MO$'s utility for different inventory-setting strategies. Game parameters are $c_{\MO}=4$, $c_{\IS}=2$, $\alpha=0.2$, $k=2$, and $\theta=10$.}
    \label{fig:equilibrium_opt}
\end{figure}

\subsection{Auxiliary results proofs} \label{sec:auxiliary_proofs_APPENDIX}

Here we present the proof of the result that $\IS$'s optimal inventory is determined by their chosen price.

\begin{proof}[Proof of Proposition \ref{prop:optimal_qB}]
    First consider the case where $\IS$ sets $p_{\IS} = \infty$, effectively abstaining from selling. Then the optimal inventory is $q_{\IS} = D_{\IS}(p_{\IS}; q_{\MO}, p_{\MO}) = 0$ because purchasing any positive amount of inventory incurs positive cost but yields no benefit since there is no demand at an infinite price. Now consider the case where $p_{\IS} \in (p_0, \infty)$. It is sub-optimal to set $q_{\IS} < D_{\IS}(p_{\IS}; q_{\MO}, p_{\MO})$ because the net utility $\IS$ gets from selling each unit is constant, so they could have gotten more utility by selling as many units as they are able to at price $p_{\IS}$. It is also sub-optimal to set $q_{\IS} > D_{\IS}(p_{\IS}; q_{\MO}, p_{\MO})$ because $\IS$ incurs a cost for acquiring the excess units but these units yield no positive utility because they are not sold. Finally, consider the case where $p_{\IS} = p_0$. In this case, the seller gets zero net utility per sale, so any $q_{\IS} \in [0, D_{\IS}(p_{\IS};q_{\MO},p_{\MO}))$ is optimal.
\end{proof}

We now prove the result from \Cref{sec:imperfect_substitutability} that it is possible for $\MO$ to de-monopolize $\IS$ even under imperfect substitutability.

\begin{proof}[Proof of \Cref{prop:imperfect_substitutability_general_result}]
    We start by proving a more general result.

\textsc{Fact:} Let $f(x)>0$ and $\delta(x)>0$ for $x>0$ and additionally assume $\delta(x)$ is nondecreasing. Define $g(x) = f(x) - \delta(x)$. Let $x_f^* = \arg \max_x f(x) x$ and $x_g^* = \arg \max_x g(x) x$. Then we must have $$x_f^* > x_g^*.$$

\textsc{Proof of fact}: We will show that the derivative of $g(x)x$ is decreasing at $x_f^*$. 
\begin{align*}
\frac{d}{dx}g(x) x &= g'(x) x + g(x) \\
&= (f’(x) - \delta'(x) )x + f(x) - \delta(x) \\
&= f’(x) x + f(x) - (\delta'(x) x + \delta(x)) \\
&= \frac{d}{dx}f(x) x  - (\delta'(x) x + \delta(x)) \\
\end{align*} 

By the first order optimality condition of $x_f^*$, we get

\begin{align*}
\frac{d}{dx}g(x) x \Big|_{x=x_f^*} = - (\delta’(x_f^*) x_f^* + \delta(x_f^*)) < 0
\end{align*} 
where the inequality comes from noting $x_f^*>0$ and using the fact that $\delta(x)$ is positive and nondecreasing.  

\Cref{prop:imperfect_substitutability_general_result} follows from applying the fact to $f(x) = D(x; 0, \infty)$, which is positive for $x>0$, and $\delta(x) = D(x; 0, \infty) - D(x; \tilde q_{\MO}, \tilde p_{\MO})$, which is positive and nondecreasing by the conditions of the proposition.
\end{proof}


\section{Equilibrium: Practical Implications} \label{sec:applying_equilibrium_APPENDIX}

Having now computed the equilibrium, we can answer questions about $\MO$'s optimal actions and the corresponding equilibrium in different situations. 
For all plots, we use $\theta=10$. All square phase diagrams, here and in the main paper, are constructed by computing the equilibrium for a $200\times 200$ grid of the x-axis and y-axis parameters. 

\paragraph{How should the marketplace operator set $\alpha$?} Suppose that instead of treating the referral fee $\alpha$ as fixed, we allow $\MO$ to choose an $\alpha \in [0,1]$ before the game begins. What value results in the highest utility for the marketplace operator? Figure \ref{fig:uA_vs_alpha} sheds light on this question. 
Even though increasing $\alpha$ increases the revenue $\MO$ gets for each unit sold by $\IS$,
$\MO$ should not in general set $\alpha$ as high as possible. This is because for large values of $\alpha$, the independent seller will only be willing to set a high price and sell few units, or will not be willing to sell at all, so $\MO$ cannot experience the benefits of having $\IS$ in the market. It is worth pointing out that the optimal $\alpha$ is generally \emph{not} the largest $\alpha$ such that $\IS$ competes instead of abstains (represented by the right end of the pink highlighting in Figure \ref{fig:uA_vs_alpha}); the value that maximizes $u_{\MO}$ is somewhat smaller. 

\begin{figure}[h!]
    \begin{subfigure}{.45\textwidth}
      \centering
      \includegraphics[width=\textwidth]{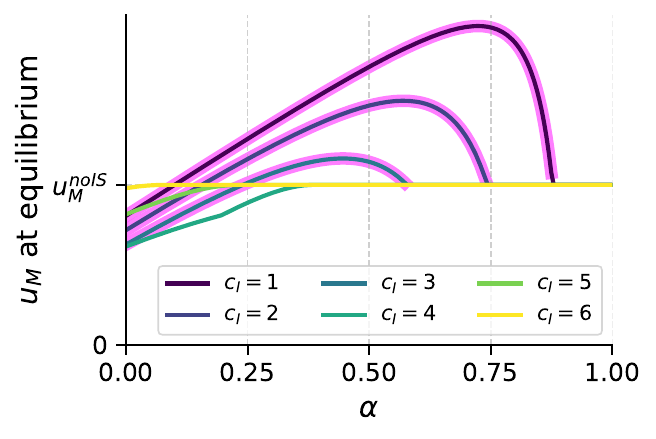}
      \vspace{-15pt}  
      \caption{Intensity}
    \end{subfigure}
    \begin{subfigure}{.45\textwidth}
      \centering
    \includegraphics[width=\textwidth]{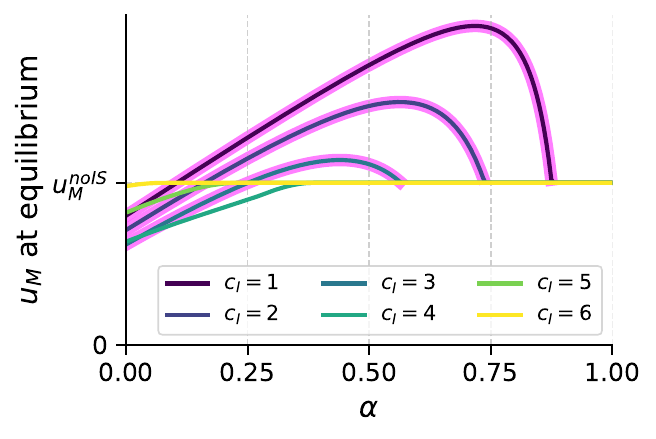}
    \vspace{-15pt}  
      \caption{Proportional}
    \end{subfigure}
    \vspace{-5pt} 
    \caption{$\MO$'s utility at equilibrium as $\alpha$ is varied under different rationing rules. $u_{\MO}^{\text{noIS}}$ marks the utility $\MO$ would receive if $\IS$ were not in the market. Pink highlighting means that $\IS$ competes at the equilibrium induced by those game parameters. 
    The other game parameters are fixed at $c_{\MO} = 5$ and $k=2$.}
    \label{fig:uA_vs_alpha}
\end{figure}

\paragraph{Should $\MO$ behave differently when selling a product with a larger impact on customer experience (large $k$)?} 
Certain products are more important in determining whether a customer is likely to return to the marketplace in the future. 
For example, being able to reliably purchase essential household staples at a marketplace can have a positive effect on customer retention. In our model, such products are modeled as having a large value of $k$. Figure \ref{fig:k_cB_plots_intensity} shows the equilibrium strategies that result from different combinations of $k$ and the independent seller's cost $c_{\IS}$ under intensity rationing. The plots for proportional rationing are similar and are shown in Figure \ref{fig:k_cB_plots_proportional}. Observe that often $q_{\MO}$ increases as $k$ increases. This is true even when $\MO$ is using their inventory to induce $\IS$ to compete rather than selling directly themselves; in such cases, a larger $q_{\MO}$ can induce $\IS$ to set a lower price and sell more inventory. This leads to more satisfied customers, which contributes to marketplace health and benefits $\MO$.

\begin{figure}[h!]
    \begin{subfigure}{.45\textwidth}
      \centering
      \includegraphics[height=1.5in]{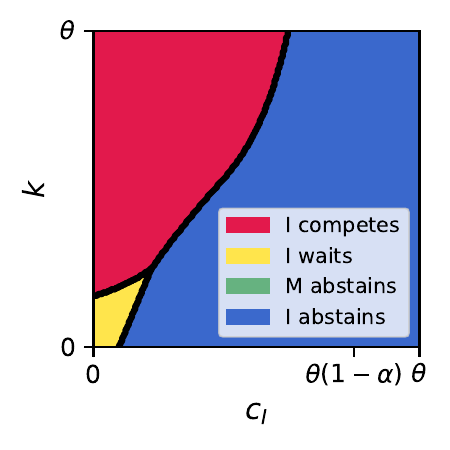}
      \vspace{-10pt}
      \caption{Equilibrium characterization}
    \end{subfigure}
    \begin{subfigure}{.5\textwidth}
      \centering
    \includegraphics[height=1.5in]{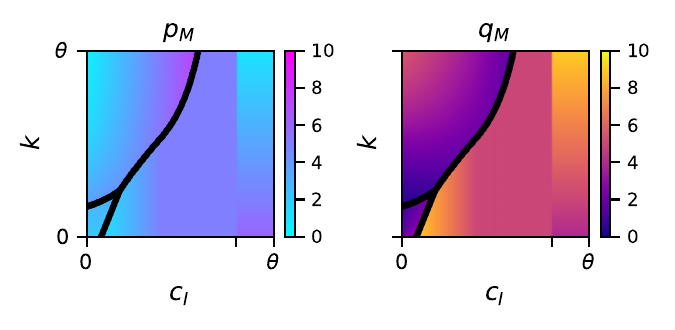}
    \vspace{-10pt}
      \caption{$\MO$'s equilibrium strategies}
    \end{subfigure}
    \caption{Equilibrium characterization and $\MO$'s corresponding strategies for different combinations of $k$ and $c_{\IS}$ under intensity rationing. The other game parameters are fixed at $c_{\MO}=2$ and $\alpha=0.2$.}
    \label{fig:k_cB_plots_intensity}
\end{figure}

\begin{figure}[h!]
    \begin{subfigure}{.45\textwidth}
      \centering
      \includegraphics[height=1.5in]{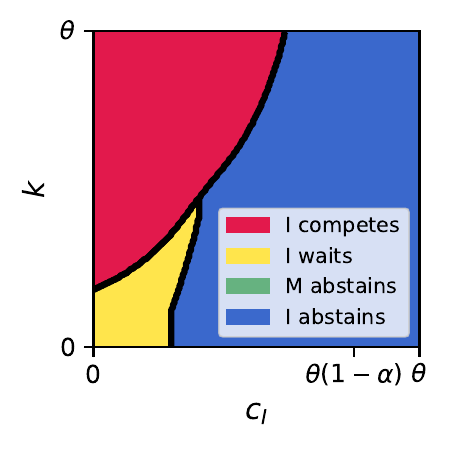}
      \caption{Equilibrium characterization}
    \end{subfigure}
    \begin{subfigure}{.5\textwidth}
      \centering
    \includegraphics[height=1.5in]{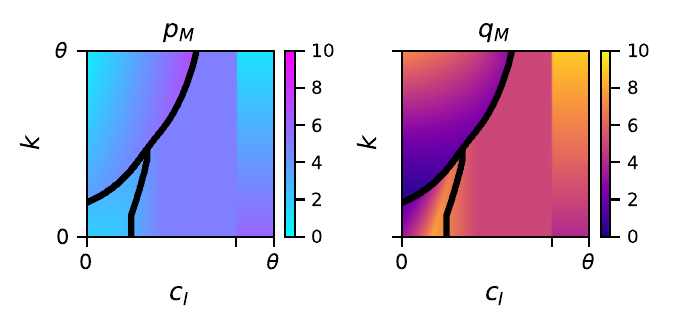}
      \caption{$\MO$'s equilibrium strategies}
    \end{subfigure}
    \caption{Equilibrium characterization and $\MO$'s corresponding strategies for different combinations of $k$ and $c_{\IS}$ under proportional rationing. The other game parameters are fixed at $c_{\MO}=2$ and $\alpha=0.2$.}
    \label{fig:k_cB_plots_proportional}
\end{figure}


\section{Welfare Proofs and Additional Results} \label{sec:welfare_APPENDIX}

This section supplements the content of Section \ref{sec:welfare}. Figure \ref{fig:consumer_surplus} provides an
illustration of the consumer surplus under intensity rationing for different cases.

\begin{figure}[h!]
\centering
\begin{subfigure}{.5\textwidth}
  \centering


\begin{tikzpicture}[scale=0.8]

    \def\thetaval{4.5}
    \def\pB{2}
    \def\qB{2.5}

    \draw[->] (0, 0) -- (0, 5);
    \draw[->] (0, 0) -- (5, 0);

    \node[below] at (2.25,-0.5) {Quantity};
    \node[rotate=90] at (-1,2.25) {Price};

    \draw[thick] (0, \thetaval) -- (\thetaval, 0) node[below] {$\theta$};

    \draw[dashed] (0, \pB) -- (\thetaval, \pB);
    \node[left] at (0, \pB) {$p_{\IS}$};

    \draw[dashed] (\qB, 0) -- (\qB, \thetaval);
    \node[below] at (\qB, 0) {$q_{\IS}$};

    \fill[blue!30, opacity=0.7] (0, \thetaval) -- (0, \pB) -- (\qB, \pB) -- cycle;

    \node[align=center, text=gray] at (\qB+2, \pB+1) {
        \footnotesize $q_{\IS} = D_{\IS}$ assumption\\
        \footnotesize 
        ensures these lines cross
    };
    \draw[->, color=gray, thick] (\qB + 0.5, \pB + 0.5) -- (\qB + .1, \pB + .1);

    \node[left] at (0, 4.5) {$\theta$};
    \node[below] at (4.5, 0) {$\theta$};

\end{tikzpicture}

  \caption{CS for $p_{\IS} \leq p_{\MO}$ and any rationing rule}
\end{subfigure}%
\begin{subfigure}{.5\textwidth}
  \centering


\begin{tikzpicture}[scale=.8]

    \def\thetaval{4.5}
    \def\pB{2}
    \def\qB{2.5}
    \def\pA{1}
    \def\qA{1}

    \draw[->] (0, 0) -- (0, 5);
    \draw[->] (0, 0) -- (5, 0);

    \node[below] at (2.25,-0.5) {Quantity};
    \node[rotate=90] at (-1,2.25) {Price};

    \draw[thick] (0, \thetaval) -- (\thetaval, 0) node[below] {$\theta$};

    \draw[dashed] (0, \pB) -- (\thetaval, \pB);
    \node[left] at (0, \pB) {$p_{\IS}$};
    \draw[dashed] (0, \pA) -- (\thetaval, \pA);
    \node[left] at (0, \pA) {$p_{\MO}$};

    \draw[dashed] (\qB, 0) -- (\qB, \thetaval);
    \node[below] at (\qB, 0) {$q_{\IS}$};
    \draw[dashed] (\qA, 0) -- (\qA, \thetaval);
    \node[below] at (\qA, 0) {$q_{\MO}$};

    \fill[green!30, opacity=0.7] (0, \thetaval) -- (0, \pA) -- (\qA, \pA) -- (\qA, \thetaval - \qA) -- cycle;

    \fill[blue!30, opacity=0.7] (\qA, \thetaval - \qA) -- (\qA, \pB) -- (\qB, \pB) -- cycle;

    \node[align=center, text=gray] at (\qB+2, \pB+1) {
        \footnotesize $q_{\IS} = D_{\IS}$ assumption\\
        \footnotesize 
        ensures these lines cross
    };
    \draw[->, color=gray, thick] (\qB + 0.5, \pB + 0.5) -- (\qB + .1, \pB + .1);

    \node[left] at (0, 4.5) {$\theta$};
    \node[below] at (4.5, 0) {$\theta$};

\end{tikzpicture}

  \caption{CS for $p_{\IS} \geq q_{\MO}$ and intensity rationing}
\end{subfigure}
\caption{A visualization of consumer surplus under the assumption that $q_{\IS} = D_{\IS}$, which is true whenever $\IS$ best responds. Blue denotes consumer surplus due to $\IS$'s sales. Green denotes consumer surplus due to $\MO$'s sales.}
\label{fig:consumer_surplus}
\end{figure}

\begin{proof}[Proof of \cref{lemma:surplus_transfer}]
    Fix any $p_{\MO}$ and $q_{\MO}$. For notational brevity, we will write $u_{\IS}$ for $u_{\IS}\big(p_{\MO}, q_{\MO}, \pBR(p_{\MO}, q_{\MO}), \qBR(p_{\MO}, q_{\MO})\big)$ and $\CS$ for $\CS\big(p_{\MO}, q_{\MO}, \pBR(p_{\MO}, q_{\MO}), \qBR(p_{\MO}, q_{\MO})\big)$. To show $- \Delta \PS \leq \Delta \CS$, it is equivalent to show 
    \begin{align} \label{eq:uIS_CS_goal_inequality}
        \uISsole + \CSsole \leq u_{\IS} + \CS.
    \end{align}
    Note that whenever $\IS$ is the sole seller or is competing, their utility for setting price $p_{\IS}$ and quantity $Q(p_{\IS})$ can be expressed as
    \begin{align} \label{eq:uIS_integral}
        \int_{0}^{Q(p_{\IS})} (1-\alpha) p_{\IS} - c_{\IS}~dt,
    \end{align}
    and if they wait, their utility for setting price $p_{\IS}$ and quantity $Q(p_{\IS}) - q_{\MO}$ can be expressed as
    \begin{align} \label{eq:uIS_integral_wait}
        \int_{q_{\MO}}^{Q(p_{\IS})} (1-\alpha) p_{\IS} - c_{\IS}~dt.
    \end{align}
    Evaluating \eqref{eq:CS_one_seller} and \eqref{eq:uIS_integral} at $\psole$ and combining allows us to express the left-hand side of \eqref{eq:uIS_CS_goal_inequality} as
    \begin{align}
        \uISsole + \CSsole = \int_{0}^{Q(\psole)} Q(t) - \alpha \psole - c_{\IS}~dt.
    \end{align}
    We now have to obtain a comparable expression for the right-hand side. As described in \cref{lemma:best_response_intensity}, $\IS$'s best response looks different depending on the values of $p_{\MO}$ and $q_{\MO}$.
    We split our analysis into cases depending on $\IS$'s best response.

    \emph{Case 1: $\IS$'s best response is to set their monopolist price $\psole$.} \eqref{eq:uIS_CS_goal_inequality} is trivially satisfied with equality in this case.

    \emph{Case 2. $\IS$'s best response is to compete by setting $p_{\IS} \leq \psole$.} By \eqref{eq:uIS_integral}, we have
    $$u_{\IS} + \CS = \int_0 ^{Q(p_{\IS})} Q(t) - \alpha p_{\IS} - c_{\IS}~dt, $$
    which is greater than or equal to $\uISsole + \CSsole$ since $p_{\IS} < \psole$ whenever $\IS$'s best response is to compete.

    \emph{Case 3. $\IS$'s best response is to wait.} By \eqref{eq:CS_two_sellers} and \eqref{eq:uIS_integral_wait}, we have
    \begin{align*}
        u_{\IS} + \CS = \int_0^{q_{\MO}} Q(t) - p_{\MO} ~dt + \int_{q_{\MO}} ^{Q(p_{\IS})} Q(t) - \alpha p_{\IS} - c_{\IS}~dt.
    \end{align*}
    This is greater than or equal to $\uISsole + \CSsole$ because $p_{\MO} < \psole$ whenever $\IS$'s best response is to compete.

    \emph{Case 4. $\IS$'s best response is to abstain.} In this case, $u_{\IS} + \CS = \CS = \int_0^{q_{\MO}} Q(t) - p_{\MO} ~dt \geq \uISsole + \CSsole$. The last inequality follows from the fact that $\IS$'s best response is abstain only if $p_{\MO} < p_0$ and $q_{\MO} > \theta - p_0 > Q(\psole)$.
\end{proof}

\begin{proof} [Proof of \cref{prop:CS_increases}]
We break this into three cases. 

\emph{Case 1:} $p_{\MO} \geq \psole$. This is effectively the same as if the marketplace operator were not participating because the independent seller will simply set $p_{\IS} = \psole$ as they would have had the marketplace operator not been a seller (Proposition \ref{prop:best_response_large_pA}). Thus, the consumer surplus with the marketplace operator is equal to without. 

\emph{Case 2: $p_{\MO} < \psole$ and the independent seller abstains.} By Proposition \ref{prop:best_response_small_pA}, we know that under intensity rationing, the independent seller abstains only if $q_{\MO} \geq \theta - p_0 (\geq Q(\psole))$. Thus, we can be in this case only if $\MO$ has offered more inventory compared to when $\IS$ is the sole seller and at a lower price.  
In other words, $p_{\MO} < \psole$ and $q_{\MO} \geq Q(\psole)$, which implies the consumer surplus must be larger than $\CSsole$.

\emph{Case 3: $p_{\MO} < \psole$ and the independent seller competes or waits.} By Propositions \ref{prop:best_response_intermediate_pA} and \ref{prop:best_response_small_pA}, we know that under intensity rationing, the independent seller will set  $p_{\IS} \leq p_{\MO} (< \psole)$ whenever they are not abstaining. Furthermore, by Proposition \ref{prop:optimal_qB}, we know that the independent seller will fully meet demand, so the effect is that there is more total inventory at a lower price. This necessarily increases the consumer surplus. We visualize this in Figure \ref{fig:CS_increases_proof}, which shows that regardless of whether $\IS$ waits or competes, the CS when $\MO$ is a seller (represented by the striped orange and magenta regions) is always larger than $\CSsole$ (represented by the solid blue triangle). 
\end{proof}

\begin{figure}[h]
\centering
\begin{subfigure}{.45\textwidth}
  \centering
  \resizebox{1.8in}{!} {



\begin{tikzpicture}[scale=1]

    \def\thetaval{4.5}
    \def\pB{1.5}
    \def\qB{3}
    \def\pA{1}
    \def\qA{1}
    \def\pone{2.3}

    \draw[->] (0, 0) -- (0, 5);
    \draw[->] (0, 0) -- (5, 0);

    \node[below] at (2.25,-0.5) {Quantity};
    \node[rotate=90] at (-1,2.25) {Price};

    \fill[color=blue,  opacity=0.7] (0, \thetaval) -- (\thetaval - \pone, \pone) -- (0, \pone) -- (0, \thetaval);

    \draw[thick] (0, \thetaval) -- (\thetaval, 0) node[below] {$\theta$};

    \draw[dashed] (0, \pB) -- (\thetaval, \pB);
    \node[left] at (0, \pB) {$p_{\IS}$};
    \draw[dashed] (0, \pA) -- (\thetaval, \pA);
    \node[left] at (0, \pA) {$p_{\MO}$};
    \draw[dashed] (0, \pone) -- (\thetaval, \pone);
    \node[left] at (0, \pone) {$\psole$};

    \draw[dashed] (\qB, 0) -- (\qB, \thetaval);
    \node[below] at (\qB, 0) {$q_{\IS}$};
    \draw[dashed] (\qA, 0) -- (\qA, \thetaval);
    \node[below] at (\qA, 0) {$q_{\MO}$};
    \draw[dashed] (\thetaval - \pone, 0) -- (\thetaval - \pone, \thetaval);
    \node[below] at (\thetaval - \pone, 0) {$Q(\psole)$};

    \filldraw[
    draw=orange,              
    pattern=north east lines, 
    pattern color=orange,     
    ] (0, \thetaval) -- (0, \pA) -- (\qA, \pA) -- (\qA, \thetaval - \qA) -- cycle;

     \filldraw[
    draw=magenta,              
    pattern=north east lines, 
    pattern color=magenta,     
    ](\qA, \thetaval - \qA) -- (\qA, \pB) -- (\qB, \pB) -- cycle;


    \node[left] at (0, 4.5) {$\theta$};
    \node[below] at (4.5, 0) {$\theta$};

\end{tikzpicture}

  }
  \caption{$\IS$ waits}
  \label{fig:CS_increases_proof_B_waits}
\end{subfigure}
\begin{subfigure}{.45\textwidth}
  \centering
  \resizebox{1.8in}{!}{



\begin{tikzpicture}[scale=1]

    \def\thetaval{4.5}
    \def\pB{1.5}
    \def\qB{3.5}
    \def\pA{1}
    \def\qA{1}
    \def\pone{2.3}

    \draw[->] (0, 0) -- (0, 5);
    \draw[->] (0, 0) -- (5, 0);

    \node[below] at (2.25,-0.5) {Quantity};
    \node[rotate=90] at (-1,2.25) {Price};

    \fill[color=blue,  opacity=0.7] (0, \thetaval) -- (\thetaval - \pone, \pone) -- (0, \pone) -- (0, \thetaval);

    \draw[thick] (0, \thetaval) -- (\thetaval, 0) node[below] {$\theta$};

    \draw[dashed] (0, \pA) -- (\thetaval, \pA);
    \node[left] at (0, \pA) {$p_{\IS}=p_{\MO}$};
    \draw[dashed] (0, \pone) -- (\thetaval, \pone);
    \node[left] at (0, \pone) {$\psole$};

    \draw[dashed] (\qB, 0) -- (\qB, \thetaval);
    \node[below] at (\qB, 0) {$q_{\IS}$};
    \draw[dashed] (\thetaval - \pone, 0) -- (\thetaval - \pone, \thetaval);
    \node[below] at (\thetaval - \pone, 0) {$Q(\psole)$};


     \filldraw[
    draw=magenta,              
    pattern=north east lines, 
    pattern color=magenta,     
    ](0, \thetaval) -- (0, \pA) -- (\thetaval - \qA, \pA) -- cycle;


    \node[left] at (0, 4.5) {$\theta$};
    \node[below] at (4.5, 0) {$\theta$};

\end{tikzpicture}

  \caption{$\IS$ competes}
  \label{fig:CS_increases_proof_B_competes}
\end{subfigure}
\caption{\textcolor{blue}{Blue} denotes the consumer surplus when the independent seller is the sole seller. The striped regions denote consumer surplus when the marketplace operator is also a seller, where the \textcolor{orange}{orange} and \textcolor{magenta}{magenta} denote consumer surplus due to the operator's and independent seller's sales, respectively.}

\label{fig:CS_increases_proof}
\end{figure}

\paragraph{Welfare.} We now turn our attention to total welfare, which can be computed as
\begin{align}
    W(p_{\MO}, q_{\MO}, p_{\IS}, q_{\IS}) = \mathrm{CS}(p_{\MO}, q_{\MO}, p_{\IS}, q_{\IS}) + u_{\MO}(p_{\MO}, q_{\MO}, p_{\IS}, q_{\IS}) + u_{\IS}(p_{\MO}, q_{\MO}, p_{\IS}, q_{\IS}).
\end{align}

Figure \ref{fig:welfare_all_combinations} provides a comprehensive visualization of the difference in welfare that results from $\MO$'s participation in the game.
For all of these plots, the following observations hold: (1) The total welfare when both $\MO$ and $\IS$ can choose to sell is always at least as high as the welfare when $\IS$ is the sole seller. (2) The largest gain in total welfare occurs when $c_{\IS}$ is large and $c_{\MO}$ is small. This is not surprising because this is the region in which the equilibrium result is that $\MO$ is the only seller. 

\begin{figure}[h!]
    \centering
    \includegraphics[width=.9\linewidth]{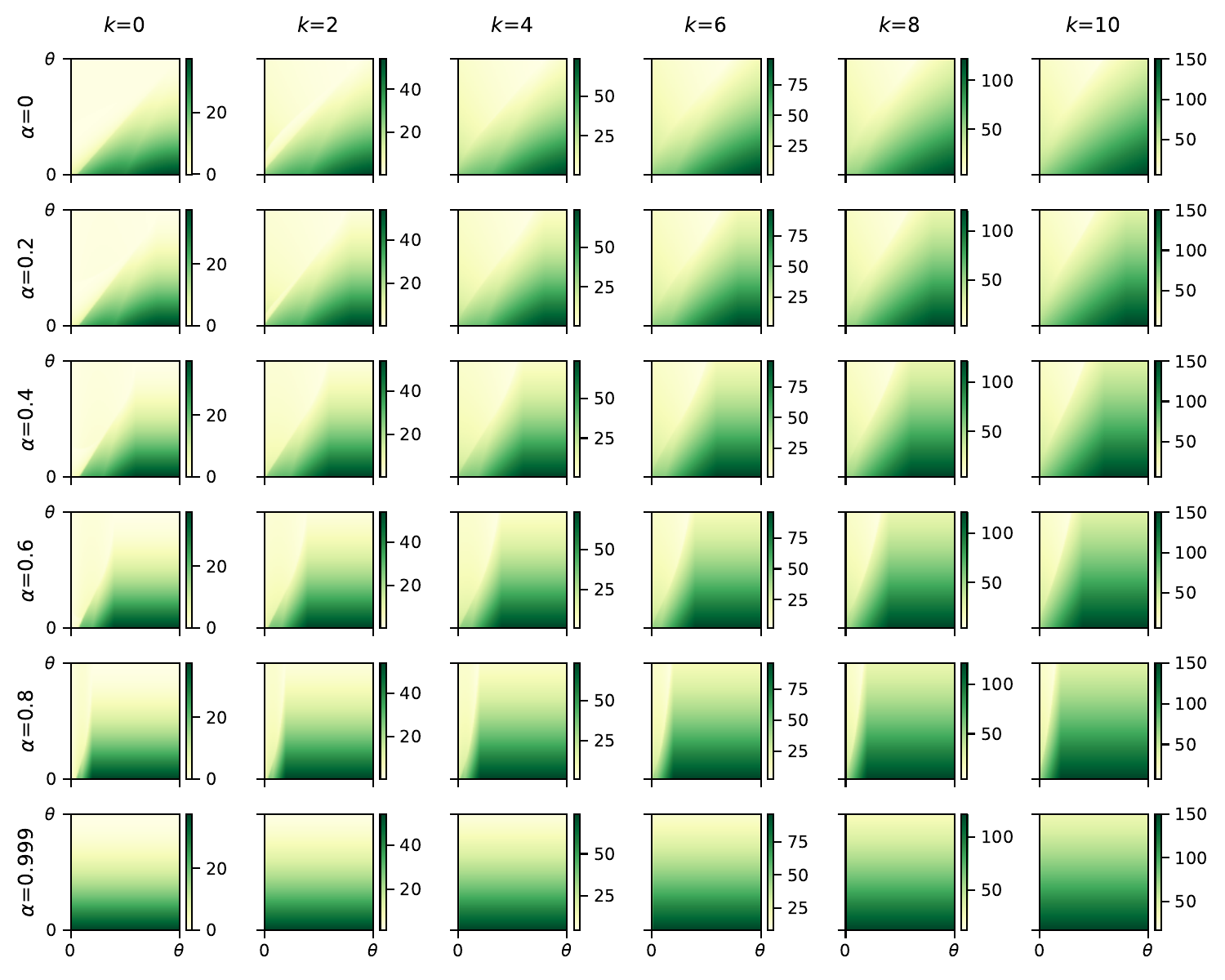}
    \caption{Difference in the equilibrium welfare and the welfare when $\IS$ is the only seller for different combinations of game parameters ($W(p_{\MO}, q_{\MO}, p_{\IS}, q_{\IS}) - W(\infty, 0, p_{\IS}, q_{\IS})$). Each individual plot varies over $c_{\IS}$ on the x-axis and $c_{\MO}$ on the y-axis. Each row of plots corresponds to a value of $\alpha \in \{0, 0.2, 0.4, 0.6, 0.8, 0.999\}$ and each column corresponds to a value of $k\in\{0,2,4,6,8,10\}$. We use $\alpha=0.999$ rather than $\alpha=1$ in the bottom row because the ``$\IS$ only" game is not well-defined when $\alpha=1$ because this would mean $\IS$ must pay 100\% of their revenue to $\MO$.
    The minimum and maximum values of each colorbar correspond to the minimum and maximum values in the data being displayed.}
    \label{fig:welfare_all_combinations}
\end{figure}


\section{Discussion of Rationing Rules} \label{sec:rationing_rule_APPENDIX}

In this section we provide an in-depth discussion of rationing rules. In Section \ref{sec:rationing_rule_background_APPENDIX} we provide relevant background on demand functions and explain why we choose to not study inverse intensity rationing. In Section \ref{sec:rationing_rules_examples_APPENDIX}, we provide examples of intensity, proportional, and inverse intensity rationing. In Section \ref{sec:prop_rationing_derivation_APPENDIX}, we describe the probabilistic modeling assumptions underlying proportional rationing and show how the residual demand function is derived starting from these assumptions. In Section \ref{sec:realism_of_proportional}, we show that, although the assumptions of proportional rationing are not entirely natural, the implied residual demand function is very similar to the residual demand function that we arrive at under more natural assumptions.

\subsection{Background}\label{sec:rationing_rule_background_APPENDIX}

For additional discussions of rationing rules, see Section 14.2 of \citet{rasmusen1989games}, Section 5.3.1 of \citet{tirole1988theory}, and Section 2 of \citet{davidson1986long}. The last describes a perspective on demand functions and rationing rules that views demand functions as arising from individual consumers each with their own identical demand curve. We will instead view demand functions as arising a population of customers with heterogeneous valuations who each wish to purchase one unit of the good, as presented in Section 4.1.1 of \citet{vohra2012principles}.

\emph{Demand functions.} Before describing rationing rules, we will review some background on demand functions. A demand function $Q(p)$ tells you the \emph{expected} number of people who will buy a good at price $p$. Specifically, assume that there are $n$ total customers, and each customer $i$ has a valuation $v_i$ (a.k.a.\ the maximum price they are willing to pay, a.k.a.\ their reservation price) that is an independent random variable sampled iid from some common distribution. Then the demand function is
\begin{align*}
    Q(p) &= \sum_{i=1}^n \E[\I{v_i \geq p}] \\
    &= \sum_{i=1}^n \P(v_i \geq p) \\
    &= n \P(v_1 \geq p) \quad \text{since $v_i$'s are iid}.
\end{align*}

If we assume that the distribution of the valuations is \emph{uniform}, this induces a linear demand function: if $v_i \sim \text{Unif}([0,\theta])$, then $\P(v_1 \geq p) = (\theta - p)/\theta$ for $p \in [0,\theta]$. This allows us to write the demand function as
\begin{align*}
    Q(p, \theta) = \begin{cases}
         n - \frac{n}{\theta}p & \text{for $0 \leq p \leq \theta$} \\
        0  & \text{for $p>\theta$}.
    \end{cases}
\end{align*}
If $n = \theta$, this further simplifies to $Q(p) = \theta - p$, which is the demand function we consider, as described in Assumption \ref{assumption:linear_demand}.

\emph{A remark on inverse intensity rationing.} 
One type of rationing that has appeared in the literature but we do \emph{not} consider is inverse intensity rationing, also known as inefficient rationing \citep{rasmusen1989games}. This rationing scheme assumes that the \emph{lowest valuation customers buy first}, yielding the residual demand function 
$$R(p_i) = \min\big(Q(p_i), \theta - q_j\big).$$
This is visualized in Figure \ref{fig:R_inverse_intensity}.
We choose not to study this type of rationing rule due to its logical weaknesses. When we think about the inverse intensity rationing from the perspective of individual buyers, the underlying assumption seems to imply irrational behavior. Specifically, it requires that we assume that the buyers with the lowest valuation buy the good first, \emph{even if their valuation does not exceed the price}. This leads to weird behavior when we attempt to use this rationing rule in our analysis, because it implies the creation of demand that is not captured by the original demand curve.
A more realistic alternative to inverse intensity rationing would be to assume that the customers with the lowest valuation \emph{above} the price are the first to buy. This would lead to an $R$ that has a vertical drop from price $p_i+q_j$ down to $p_i$ and then has slope $-1$ from price $p$ down to zero. In other words, it looks like $Q(p)$ until price $p_i+q_j$ and then matches intensity rationing from price $p_i$ down.

\begin{figure}
    \centering
    \begin{tikzpicture}
    \draw[->] (0,0) -- (4.5,0);
    \draw[->] (0,0) -- (0,4.5);

    \node[below] at (2.25,-0.5) {Quantity};
    \node[rotate=90] at (-0.8,2.25) {Price};

    \def\thetaval{4}
    \def\qA{1}

    \draw[thick, domain=0:\thetaval] plot (\x,{\thetaval - \x});

    \draw[blue, very thick, dashed] (0, \thetaval) -- (\thetaval-\qA, \qA) -- (\thetaval-\qA, 0);
    

    \node at (\thetaval,-0.35) {$\theta$};
    \node at (-0.3,\thetaval) {$\theta$};
    \node at (2,2.6) {$Q(p)$};
    \node[blue] at (1.8,1.6) {$R(p)$};
    \node[blue] at (\thetaval - \qA, -0.3) {$\theta - q_{\MO}$};
    \node[white] at (3,-.35) {};
\end{tikzpicture}
    \caption{Visualization of the residual demand function $R(p_{\IS})$ for $\theta=4$, $q_{\MO}=1$, and $p_{\MO}=2$ under inverse intensity rationing.}
    \label{fig:R_inverse_intensity}
\end{figure}

\subsection{Examples of each rationing rule} \label{sec:rationing_rules_examples_APPENDIX}

Recall that rationing rules are essentially an assumption about the order in which customers arrive. The customers who arrive earliest have first pick between the lower-priced seller and higher-priced seller and since we assume that the customers are utility maximizing, they will purchase from the lower-priced seller as long as they have inventory.
Intensity rationing assumes that customers arrive in order of descending willingness to pay. Proportional rationing (roughly) assumes that the arrival order is independent of valuation. Inverse intensity rationing assumes that customers arrive in order of ascending willingness to pay. 

\emph{Example of intensity rationing.} 
Customers who want a product the most will (if the product is sold online) refresh the page most recently or (if the product is sold in person), be willing to wait in a long line on the product launch date. Thus the customers with the highest willingness to pay get the earliest access to the product and are able to buy the lower-priced version before it goes out of stock. For example, people wait in line to buy iPhones on their release dates \citep{yahooIphone13}.

\emph{Example of proportional rationing.} Before presenting our example, we remark that one peculiarity about proportional rationing is that the probability that a customer is routed to the lower-priced seller is increasing in the lower-priced seller's inventory. Once a customer is routed to seller, they must choose whether to buy from that seller and have no ability to switch to the other seller. We now present a stylized example that satisfies this: suppose there are two hot dog chains, Cheap Dogs and Fancy Dogs. Each chain has many hot dog carts scattered across the city and each chain's inventory is proportional to their number of carts. People walk around the city  and when they get hungry, they buy a hotdog from the first stand they encounter. If the first stand they go to does not offer a hotdog a price they are happy with, they become so dejected that they no longer want to eat a hotdog, so there is no transfer of demand from a higher-priced seller to a lower-priced seller.

\emph{Example of inverse-intensity rationing.} Suppose a new product is being released and consumers initially do not know whether it is worth buying, so the first consumers to buy the good are not willing to pay much for it. However, once the early adopters ascertain that the good is high quality, they spread the word to other consumers who are then willing to pay more for the product. This is the reason that new companies and products frequently offer coupons to attract customers.

\subsection{Derivation of residual demand function under proportional rationing} \label{sec:prop_rationing_derivation_APPENDIX}

In this section, we describe the probabilistic modeling assumptions that underlie proportional rationing. Under proportional rationing,
each customer is randomly assigned to the higher price with some probability $\rho$ and the lower price with probability $1-\rho$. There is a total population of $n=\theta$ customers. $\rho$ is such that in expectation the demand that the lower-priced seller faces at their price $p_j$ is exactly equal to their supply $q_j$, which is
    $$\rho = \frac{Q(p_j) - q_j}{Q(p_j)}.$$
    The full probabilistic model is: There are $\theta$ customers each with valuation $v_i \sim \mathrm{Unif}([0,\theta])$. Independent of their valuation, the customer is sent to the higher-priced seller with probability $\rho$; otherwise they are sent to the lower-priced seller. Customers either buy at the seller they are sent to or do not buy at all. Given this model, the expected demand that the higher-priced seller faces when setting price $p$ and the lower-priced seller sets price $p_j$ and quantity $q_j$ is
    \begin{align*}
    R(p; q_j, p_j) &= \E\left[\sum_{i=1}^{n} \I{v_i \geq p} \I{i\text{ assigned to high price seller}} \right] \\
    &= \sum_{i=1}^{n} \E\left[\I{v_i \geq p} \I{i\text{ assigned to high price seller}} \right] \qquad \text{linearity of expectation}\\
    &= \sum_{i=1}^{n} \E[\I{v_i \geq p}] \E[\I{i\text{ assigned to high price seller}}] \qquad \text{independence}\\
    &=  \sum_{i=1}^{n} \P(v_i \geq p)\P(i\text{ assigned to high price seller}) \\
    &= n \frac{\theta-p}{\theta} \rho \\
    &= (\theta-p)\rho = (\theta-p)\frac{Q(p_j) - q_j}{Q(p_j)} = Q(p) \frac{Q(p_j) - q_j}{Q(p_j)} \qquad \text{since $n=\theta$}.
    \end{align*}

\subsection{Realism of proportional rationing} \label{sec:realism_of_proportional}

As previously described, the assumptions underpinning proportional rationing are a bit peculiar. We adopt this rationing rule in the main paper because it leads to an analytical residual demand function that is easy to work with and is also the convention in existing literature.
In this section, we analyze an alternative to proportional rationing based on more realistic assumptions and find it generally produces residual demand functions that are very similar to those under proportional rationing. 

\emph{More realistic assumptions.} There are $n=\theta$ total customers. Assume that customers arrive one at a time and their valuation is sampled $v_i \sim \text{Unif}([0,\theta])$. A customer purchases the product if the lowest price available does not exceed their valuation. Otherwise, they leave immediately (whether they leave or wait for a lower price does not matter in practice, because the price only ever goes up after one seller runs out). 

We will derive the residual demand function for the higher-priced seller. Let $\text{NegBin}(r,p)$ denote the probability distribution corresponding to the number of trials needed to get $r$ successes from flipping a coin with bias $p$, which has probability mass function $f(k) = {k-1 \choose r-1}p^r (1-p)^{k-r}$. 
Let $N \sim \text{NegBin}(q_j,\frac{\theta - p_j}{\theta})$ be the number of customers that would have to arrive in order for all of the lower-priced seller's inventory to be exhausted. 
Let $M = \max(\theta - N, 0)$ be the number of buyers who have not yet visited the website after the lower-priced seller sells out and let $V_1, ..., V_M$ denote the valuations of the customers that arrive after the lower-priced seller has sold out.
Then the residual demand is
\begin{align*}
    R(p) &= \E_{M}\left[\E_{V_1,..., V_M}\left[\sum_{i=1}^M\I{V_i \geq p} \Big| M \right] \right] \quad  \text{by tower property}\\
    &= \E_{M}\left[{M \frac{\theta -p}{\theta}}\right] \quad \text{by linearity of expectation}\\ 
    &= \frac{\theta -p}{\theta} \E[M] \\
    &= \frac{\theta -p}{\theta} \E[\max(\theta - N, 0)] \\
    &=  \frac{\theta - p}{\theta} \sum_{k=q_j}^{\theta} (\theta - k) {k - 1 \choose q_j - 1} \left(\frac{p_j}{\theta}\right)^{k-q_j} \left(\frac{\theta - p_j}{\theta}\right)^{q_j}.
\end{align*}

Figure \ref{fig:proportional_vs_random_arrivals} illustrates that when the lower-priced seller's quantity $q_j$ is small relative to demand at price their $p_j$, the residual demand under our more realistic assumptions is visually identical to under proportional rationing. The difference between the two curves only becomes pronounced when $q_j$ approaches $Q(p_j)$. 

\begin{figure}
    \centering
    \includegraphics[width=\textwidth]{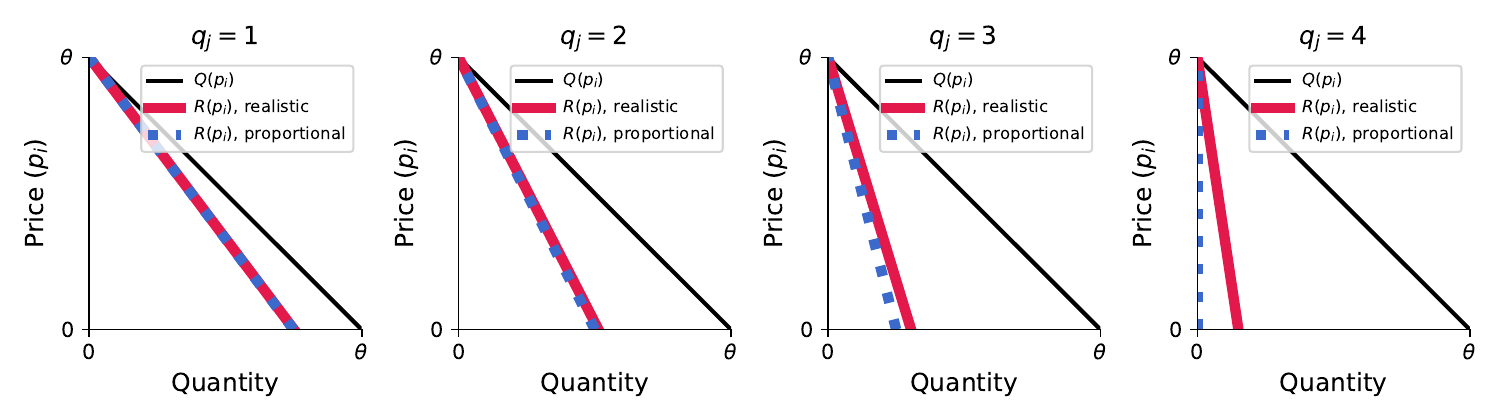}
    \caption{Comparison of residual demand function under proportional rationing and under the more realistic assumptions described in Appendix \ref{sec:realism_of_proportional}. $p_j$ is set to 6, so the corresponding original demand is $Q(p_j)=\theta - 6 = 4$.}
    \label{fig:proportional_vs_random_arrivals}
\end{figure}

\end{document}